\newcommand{\be}{\begin{equation}}
\newcommand{\ee}{\end{equation}} 
\newcommand{\Tr}{{\rm Tr}}
\DeclareMathOperator{\im}{\mathrm{i}}
\newcommand{\mba}{\mathbf{a}}
\newcommand{\mbb}{\mathbf{b}}
\newcommand{\mbc}{\mathbf{c}}
\newcommand{\mbd}{\mathbf{d}}
\newcommand{\mbS}{\pmb{\Sigma}}
\newcommand{\mbG}{\pmb{G}}
\newcommand{\mbC}{\pmb{C}}
\newcommand{\mbK}{\pmb{K}}
\newcommand{\cG}{\mathcal{G}}
\newcommand{\cJ}{\mathcal{J}}
\newtheorem{proposition}{Proposition}
\newtheorem{lemma}{Lemma}
\newtheorem{theorem}{Theorem}
\theoremstyle{remark}
\begin{document}

\title{\bf Line of fixed points in a bosonic tensor model}

\author[1]{Dario Benedetti}
\author[2]{Razvan Gurau}
\author[3]{Sabine Harribey}

\affil[1]{\normalsize\it Laboratoire de Physique Th\'eorique (UMR 8627), CNRS, Univ.Paris-Sud, \authorcr
\it Universit\'e Paris-Saclay, 91405 Orsay, France \authorcr
email: dario.benedetti@th.u-psud.fr  \authorcr \hfill }

\affil[2]{\normalsize\it Centre de Physique Th\'{e}orique (UMR 7644), CNRS, \'{E}cole
Polytechnique, 91128 Palaiseau, France and
Perimeter Institute for Theoretical Physics, 31 Caroline St. N, N2L 2Y5, Waterloo, ON,
Canada \authorcr email: rgurau@cpht.polytechnique.fr \authorcr \hfill}

\affil[3]{\normalsize \it \'{E}cole Normale Sup\'{e}rieure de Lyon, 46 All\'{e}e d'Italie, 69007 Lyon, France,
\authorcr \it
 Centre de Physique Th\'{e}orique (UMR 7644), CNRS, \'{E}cole
Polytechnique, 91128 Palaiseau, France and Laboratoire de Physique Th\'eorique (UMR 8627), CNRS, Univ.Paris-Sud, 
\authorcr \it 
 Universit\'e Paris-Saclay, 91405 Orsay, France \authorcr email: sabine.harribey@ens-lyon.fr \authorcr \hfill}

\date{}

\maketitle

\hrule\bigskip

\begin{abstract}

We consider the $O(N)^3$ tensor model of Klebanov and Tarnopolsky \cite{Klebanov:2016xxf} in $d<4$ with a free covariance modified to fit the infrared conformal scaling. We study the renormalization group flow of the model using a Wilsonian approach valid in any $d$ (notably we do not require $d=4-\epsilon$ with small $\epsilon$). At large $N$, the tetrahedral coupling has a finite flow, hence it becomes a free parameter.
 The remaining flow can be parameterized
by two couplings which do not mix. We show that, at leading order in $1/N$ but non perturbatively in the couplings, the beta functions stop at quadratic order in the pillow and double-trace couplings. We find four fixed points which depend parametrically on the tetrahedral coupling. 
For purely imaginary values of the latter we identify a real and \emph{infrared attractive} fixed point.
We remark that an imaginary tetrahedral coupling is in fact natural from the onset as the tetrahedral invariant does not have any positivity property, and moreover in the large-$N$ limit the beta functions depend on the square of the tetrahedral coupling, thus they remain real, as long as the other couplings stay real.

\end{abstract}

\hrule\bigskip

%\newpage
\tableofcontents

\section{Introduction}

Tensor models exhibit a \emph{melonic} large $N$ limit \cite{critical,RTM,Klebanov:2018fzb}, different from both the vector \cite{Guida:1998bx,Moshe:2003xn} and the matrix (planar) \cite{'tHooft:1973jz,Brezin:1977sv,DiFrancesco:1993nw} large $N$ limits.
Although, as algebraic objects, tensors are more complicated than matrices, the melonic limit is in fact simpler than the planar one, as melonic diagrams are a subset of the planar ones. 

Tensor models have been extensively studied in zero dimensions (where they were originally introduced as models of quantum gravity  \cite{Ambjorn:1990ge,Sasakura:1990fs}, and further studied with similar motivation  \cite{color,critical,review}) and in one dimension (e.g.\
\cite{Witten:2016iux,Gurau:2016lzk,Klebanov:2016xxf,Peng:2016mxj,Krishnan:2016bvg,Krishnan:2017lra,Bulycheva:2017ilt,Choudhury:2017tax,Halmagyi:2017leq,Klebanov:2018nfp,Carrozza:2018psc}, see also \cite{Delporte:2018iyf,Klebanov:2018fzb} for reviews) as they provide an alternative to the Sachdev-Ye-Kitaev model \cite{Sachdev:1992fk, Kitaev2015, Maldacena:2016hyu, Polchinski:2016xgd,Gross:2016kjj} dispensing with the quenched disorder of the latter.

Proper field theories based on tensor models have been less explored, but they have already been shown to give rise at large $N$ to a new family of conformal field theories 
\cite{Giombi:2017dtl,Prakash:2017hwq,Benedetti:2017fmp,Giombi:2018qgp,Benedetti:2018ghn} which are analytically accessible.   
One for instance has an explicit solution for the infrared two-point function and 
a list of the scaling dimensions of the bilinear operators. The first result is derived from the Schwinger-Dyson equation (SDE), while the second from the 
Bethe Salpeter equation (BSE). The treatment of the two equations is remarkably similar \cite{Giombi:2017dtl,Bulycheva:2017ilt,Giombi:2018qgp,Klebanov:2018fzb}:
\begin{itemize}
 \item at large $N$ both equations truncate to the first non trivial term (the fundamental melon for the SDE and 
the one rung ladder kernel for the BSE). 
\item in both cases one neglects in the infrared the free term and solves the equation self consistently.
\end{itemize}

However, the results obtained by this method are somewhat formal, as both the SDE and the BSE have divergences. For fermionic models, some of the divergences (like for instance the mass) are tamed by anticommutation.
However, no such mechanism works for bosonic models. So far these divergences have been treated by dimensional regularization. 

The aim of this paper is to treat melonic conformal field theories rigorously, and in order to deal with the divergences that appear in the perturbative expansion, we use the Wilsonian renormalization group picture.
As we aim to describe  the infrared CFT of \cite{Klebanov:2016xxf,Giombi:2017dtl}, we consider from the onset a free covariance which reproduces the infrared scaling of the two-point function, and which renders the interactions marginal. A similar idea has been applied to the SYK model by Gross and Rosenhaus in \cite{Gross:2017vhb}. One of the main differences of our model to that of Gross and Rosenhaus is that we have not just one marginal interaction but three (as in \cite{Giombi:2017dtl}): while we find that at large $N$ one of them remains exactly marginal, the other two have a non-trivial renormalization group flow, and in order to find a CFT we need to look for fixed points. We prove rigorously the existence of an infrared fixed point of the RG flow

\subsection{Outline of results}
  
 Our results are the following. We consider the $O(N)^3$ tensor model of Klebanov and Tarnopolsky \cite{Klebanov:2016xxf,Giombi:2017dtl}, but with a quadratic part 
 $(-\Delta)^{ \zeta}$ with $\zeta = d/4$, which reproduces the conformal scaling. The model has three couplings: the ``tetrahedral'', ``pillow'', and ``double-trace'' couplings, denoted $\lambda$, $\lambda_p$, and  $\lambda_d$, respectively (see equation \eqref{eq:action}). We show that in the $N\to \infty$ limit but non perturbatively (i.e. at all orders) in the coupling constants the RG flow has four lines of fixed points parameterized by $\lambda$. In detail, we show that for any $\lambda$:
\begin{description}
 \item[\bf Wave function.] For any bare couplings $\lambda_p,\lambda_d$ (and $\lambda$), there exists a choice of the bare mass $m$ such that,  up to terms which vanish when sending the ultraviolet cutoff $\Lambda$ to infinity and the infrared cutoff 
 $k$ to zero, the effective two-point function is:\footnote{In keeping with standard notation, we denote $\Gamma$ both the Euler Gamma function and various one or two particle irreducible effective actions. }
 \be
  G(p)  = 
  \frac{1}{Z p^{2\zeta}} \;, \qquad   Z^4 - Z^3  =   \lambda^2 \frac{1}{(4\pi)^d } \;\frac{\Gamma \left( 1 -\frac{d}{4} \right) }{ \frac{d}{4}\Gamma\left( 3 \frac{d}{4}\right)} \;,
 \ee
that is, the renormalized mass can be tuned to zero and the wave function renormalization is a finite rescaling. This should come as no surprise: we have fixed the scaling of the covariance to the infrared scaling, hence we do not get an additional anomalous scaling from a wave function renormalization.

  \item[Tetrahedral coupling.] The tetrahedral coupling has a finite flow: in the $\Lambda\to \infty, \,k\to 0$ limit the effective coupling is just a rescaling of the bare one by the wave function constant:
  \[ g = Z^{-2} \lambda \;, \qquad 
  \beta_g = k \frac{\partial g}{\partial k}  = 0 \;.\]
  In particular, denoting 
  \be \label{eq:gc}
  g_c^{-2} = \Gamma \left( 1 -\frac{d}{4} \right) \left[ (4\pi)^d  \, \frac{d}{4}\Gamma\left( 3 \frac{d}{4}\right) \right]^{-1} \;,
  \ee
 the wave function and the bare tetrahedral coupling write in terms of the renormalized one as:
  \[
   Z= \frac{1}{1 - \frac{g^2}{g_c^2}} \; \qquad \lambda = gZ^2 \;.
  \]
\begin{figure}[ht]
\begin{center}
\includegraphics[height=3cm]{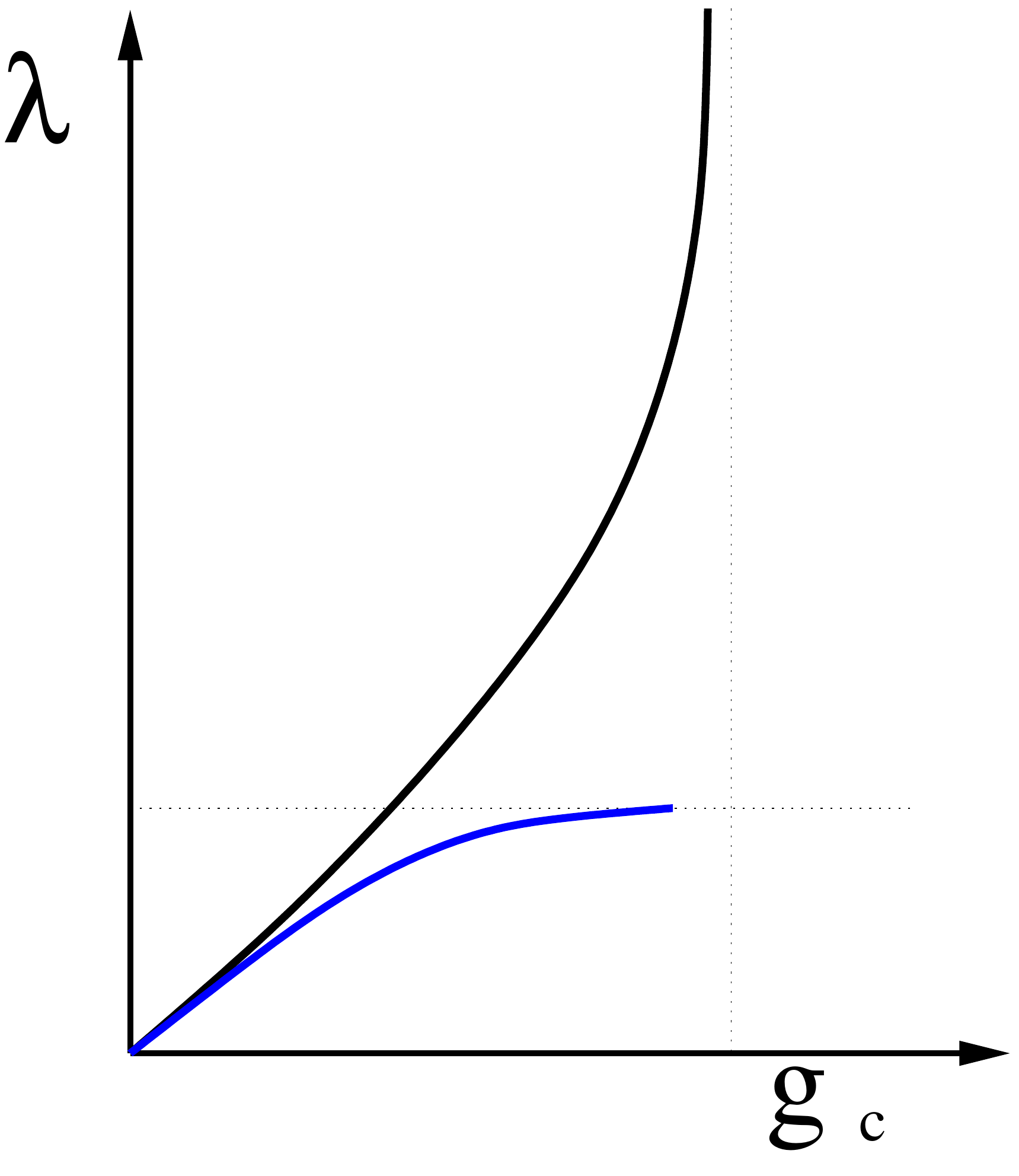} 
 \caption{The bare coupling as a function of the renormalized one. We represented in black the case $\lambda$ real, and
 in blue the absolute value in the case $\lambda$ purely imaginary.} \label{fig:tetra}
 \end{center}
\end{figure}
We distinguish two cases (see Fig.~\ref{fig:tetra}): $\lambda$ real and $\lambda$ purely imaginary:
   \begin{itemize}
    \item \emph{$\lambda$ (and $g$) real.} In this case $\lambda(g)$ is invertible to $g(\lambda)$ for any $\lambda $, $g(\lambda) <g_c$, $g$ asymptotes to $g_c$ and Z diverges when $\lambda\to \infty$ ($g\to g_c$) . 
    \item \emph{$\lambda$ (and $g$) imaginary.} In this case $\lambda(g)$
    is invertible to $g(\lambda)$ for $|\lambda |< 3^{3/2}2^{-4}g_c$, corresponding to  $| g | < 3^{-1/2}g_c$, and $Z$ is bounded.
   \end{itemize}
  
  \item[Pillow and double-trace couplings.] We parameterize the couplings by $\lambda_1 = \lambda_p/3$ and $\lambda_2 = \lambda_p + \lambda_d$. The  $\beta$ functions of the respective renormalized couplings $g_1$ and $g_2$ are independent and \emph{quadratic}. Let us rescale the couplings to $\tilde g = g (4\pi)^{- d/2} \Gamma(\zeta)^{-2} $ and by some abuse of notation drop the tilde. We have:
  \begin{align}
 \beta_{g_1} & =  k\frac{ \partial g_1}{ \partial k}\Big{|}_{\lambda,\lambda_1}     
   =  \beta_0^{g} - 2 \beta_1^{g} \,  g_1 +
   \beta_2^{g} \, g_1^2 \;, 
    \crcr
  \beta_{g_2} & =  k\frac{ \partial g_2}{ \partial k}\Big{|}_{\lambda,\lambda_2}     
   =  \beta_0^{ \sqrt{3} g }- 2 \beta_1^{\sqrt{3} g } \,  g_2 +
   \beta_2^{ \sqrt{3} g }\,  g_2^2 \;,
  \end{align}
where $   \beta_0^{ g}, \beta_1^{ g } $  and   $\beta_2^{g}$ are power series in $g^2$. At first orders they are:
\begin{align*}
 \beta_0^{g} = 
 \left( 2\frac{\Gamma(\frac{d}{4})^2}{ \Gamma(\frac{d}{2})}\right) g^2+  \mathcal{O}(g^4) \;,\qquad
 \beta_1^{ g} =  \mathcal{O}(g^2) \;, \qquad
  \beta_2^{ g} = \left( 2\frac{\Gamma(\frac{d}{4})^2}{ \Gamma(\frac{d}{2})}\right) + \mathcal{O}(g^2) \;. 
\end{align*}

The beta function $\beta_{g_1}$ admits two fixed points:
   \begin{align} \label{eq:FP}
 g_{1\pm} =  \frac{
    \beta_1^{g} 
    \pm \sqrt{ (\beta_1^{g})^2 -\beta_0^{g}\beta_2^{g} }
    }{\beta_2^{g}}  = \pm\sqrt{-g^2} +  \mathcal{O}(g^2) \; ,
   \end{align} 
  and the corresponding critical exponents are:
 \be  \label{eq:exp}
  \beta'_{g_1}( g_{1\pm}) = \pm 2\sqrt{ (\beta_1^{g})^2 -\beta_0^{g}\beta_2^{g} }
     = \pm \sqrt{-g^2} \left( 4\frac{\Gamma(\frac{d}{4})^2}{ \Gamma(\frac{d}{2})}\right) +  \mathcal{O}(g^3)\;.
 \ee
 The beta function $\beta_{g_2}$ admits two fixed points and critical exponents of the same form, with $g\to\sqrt{3}g$.
 Hence, the model has four fixed points in total, each of them actually defining a line parameterized by $g$ in the complex $\{g_1,g_2\}$ space. For $g\to 0$, they all merge into a trivial fixed point: for $g=0$, non-trivial fixed points can only be obtained by moving away from marginality (i.e.\ by taking $4\zeta-d=\epsilon>0$).
 
 \item[Imaginary tetrahedral coupling.]
 
  Contrary to the pillow and double-trace invariants, the tetrahedral invariant does not have any positivity property. Furthermore, due the melonicity of the large-$N$ limit, the beta functions depend on $g^2$. Thus, we can consider
  a purely imaginary tetrahedral coupling $g = \pm\im |g|$, in which case the fixed point values above are real, at least for small $g$. In particular, $g_{1+}>0$ and $\beta'_{g_1}( g_{1+}) >0$, that is, $(g_{1+},g_{2+}) $ is an \emph{infrared attractive} fixed point.

 \item[Dimension of bilinear operators.] For imaginary tetrahedral coupling, we obtain a
 \emph{real} spectrum of bilinear scalar operators. The dimensions of the 
 operators are:
\be
h_{0\pm} = \frac{d}{2} \pm \alpha_0 |g| +  \mathcal{O}(g^3) \;, \qquad 
h_n = \frac{d}{2} + \alpha_n |g|^{2} + \mathcal{O}(g^3) \;, \;\;\; n
\in \mathbb{N}^+ \;,
\ee
with both $\alpha_n$ for $n\ge 1$ \emph{and} $\alpha_0$ real. 

  \end{description}
  
   \bigskip
   
   The fixed points we describe here are very different from the usual Wilson-Fisher fixed point. Let us compare our results with the Wilson-Fisher type of fixed point identified in \cite{Giombi:2017dtl} in the case of the same tensor model but  with $\zeta=1 $ instead of $\zeta=d/4$:
   \begin{itemize}
    \item  the Wilson-Fisher-like fixed point is reliable only for small $\epsilon =4-d$, while our results apply in any $d<4$. Our control parameter is the (bare or renormalized) tetrahedral coupling itself and not $\epsilon$. 
    \item at the Wilson-Fisher-like fixed point one gets an anomalous scaling dimension of the field, while in our case the scaling dimension of the field is fixed (although non-canonical).
    \item the Wilson-Fisher-like fixed point relies on the cancellation of the mass dimension of the coupling with the radiative corrections.\footnote{From \cite{Giombi:2017dtl}, the beta function for the tetrahedral coupling in units of cutoff reads $\beta_g=-\epsilon g+2 g^3$.} This is unlike our case, as we deal with genuinely marginal couplings in any $d$.
    \item because the mechanism of the Wilson-Fisher-like fixed point 
    requires to cancel the mass dimension of the tetrahedral coupling, the fixed point value of the tetrahedral coupling is real for $\epsilon>0$ and consequently the pillow and double-trace ones are purely imaginary. This is the origin of the instability of the fixed point discussed in 
    \cite{Giombi:2017dtl}. An imaginary tetrahedral coupling, and thus real  pillow and double-trace ones, can in their case be obtained for $\epsilon<0$, i.e.\ for $d>4$, but then one deals with an ultraviolet fixed point. Furthermore, the spectrum of scalar bilinear operators computed in \cite{Giombi:2017dtl} shows an upper limit $d=4.155$ beyond which complex dimensions reappear. 
    
    As the tetrahedral invariant has no positivity property, contary to \cite{Giombi:2017dtl},  we have the freedom to consider an imaginary tetrahedral coupling.    
    In this case we find instead a real IR fixed point with real exponents for any 
    $d<4 $ as long as $|g|< g_*$ for some critical coupling $g_*$, as we will discuss in section \ref{sec:spectrum}.

\end{itemize}

\paragraph{Conformal window.} Our results should be compared\footnote{We would like to thank I. Klebanov for pointing out to us reference \cite{Kim:2019upg} and the parallel between our results and theirs.} to the ones of \cite{Kim:2019upg} where the authors consider a one dimensional model with two Majorana fermions and $O(N)^3$ invariance. Their model has no pillow or double trace couplings, but it has several tetrahedral couplings whose relative strength can be dialed up by tuning a parameter. 
In that model the tuning parameter has a critical value where the conformal dimension of an off-diagonal ``mass'' bilinear $\psi_1 \psi_2$ becomes complex,
$d/2+\im \alpha$, with real $\alpha$. Beyond the critical value the ``mass'' bilinear acquires a non zero vacuum expectation value which spontaneously breaks conformal invariance, as well as the discrete symmetries of the model, suggesting a second order phase transition between broken and unbroken symmetry phases. 

Complex dimensions appear also in our case (see section \ref{sec:spectrum}). For $-g_*^2<g^2<0$ our critical exponents are real, but for $g^2>0$ they become of the form $d/2+\im \alpha$ as in  \cite{Kim:2019upg}, while for $g^2< -g_*^2$ they become complex again, but with a real part different from $d/2$. 
Remembering that in the AdS/CFT dictionary \cite{Gubser:1998bc,Witten:1998qj}, $h_{\pm}=d/2 \pm \sqrt{ d^2/ 4 +m^2}$  with $m$ being the mass of a field in AdS${}_{d+1}$, we have the following interpretation. 

The complex dimensions in \cite{Kim:2019upg}, and in our model for $g^2>0$, from the bulk point of view are due to particles which violate the Breitenlohner-Freedman bound $m^2\geq -\frac{d^2}{4}$ \cite{Breitenlohner:1982jf}. 
It is likely that  in our case the mass bilinear also acquires a non zero VEV, but checking this properly is quite involved and we postpone it for further work. This would in particular support the conjecture formulated in Section 3 of \cite{Kim:2019upg}. However, contrary to \cite{Kim:2019upg}, in our case a nontrivial VEV of the mass bilinear does not break any of the symmetries of the model. Thus we expect only a spontaneous breaking of conformal invariance, similar to what happens in the vector $\varphi^6$ model in three dimensions \cite{Bardeen:1983rv,Amit:1984ri}.

On the other hand, for $g^2< -g_*^2$ it seems that the complex dimensions are associated to particles with complex masses in the bulk, ${\rm Im}(m^2)\neq 0$.
However, since $g_*$ is always greater or equal than the maximal value of $|g|$ for which $\lambda(g)$ is invertible to $g(\lambda)$ (see Fig.~\ref{fig:tetra} for the case of  imaginary coupling), the bulk instability in such case is probably related  to the impossibility to define the renormalized model in such range of the tetrahedral coupling.

\paragraph{Plan of the paper.} In section \ref{sec:model} we introduce in detail the model, its expansion in Feynman graphs, and the 2PI formalism, which neatly captures the resummed $n$-point functions at  large $N$. In section \ref{sec:RG} we review the Wilsonian renormalization group formalism that is the backbone of our construction. In sections \ref{sec:2point} and \ref{sec:4point} we construct and renormalize the two- and four-point functions, thus obtaining the beta functions in  section \ref{sec:4point-beta}. In section \ref{sec:divergences} we discuss in detail the coefficients of the beta functions to all orders in $g$.  Lastly, in section \ref{sec:spectrum} we study the spectrum of bilinear operators at the IR fixed point by conformal field theory methods. We close with an appendix detailing some explicit computations.

%\newpage

\section{The bosonic CTKT model}
\label{sec:model}

We will deal in this paper with a modified version of the $O(N)^3$ model of Klebanov and Tarnopolsky \cite{Klebanov:2016xxf}. As the zero dimensional version of the model 
has been introduced by Carrozza and Tanasa \cite{Carrozza:2015adg}, we will henceforth refer to it as the CTKT model.

We consider a real tensor field of rank $3$, $\varphi_{a_1a_2 a_3}(x)$,
transforming under $O(N)^3$ with indices distinguished by the position, and we denote $\mba = (a^1,a^2,a^3)$.
The action of the model is:\footnote{From now on repeated indices are summed.
We work in $d$ space dimensions, we denote $x,y$ and so on positions, $\int_x \equiv \int d^dx$ and
$p,q$ and so on momenta and $\int_p \equiv \int \frac{d^dp}{(2\pi)^d}$. The Fourier transform is 
$  f(p) = \int_x  e^{\im p x} f(x)$ with inverse $f(x) = \int_p  e^{-\im px} f(p)$; we denote them by the same symbol, but context and argument of the function should lift any ambiguity.
The operator product in momentum space is $\int_q f(p,q) h(q,r)$, the identity operator has kernel $ (2\pi)^d \delta(p-q)$,
and translation invariant operators in the direct space are diagonal in momentum:
\begin{align}
H(x,y) & = H(x-y) = \int_p \;  e^{ - \im p (x-y) } H(p) \;, \qquad H(p) = \int_u e^{\im p u} H(y+u,y)\crcr
H(p_1,p_2) & = \int_{x,y} e^{ - \im p_1 x - \im p_2 y} \; H(x,y) = (2\pi)^d \delta(p_1+p_2) H(p_2)
 \; . 
\end{align} 
}
\be\label{eq:action} 
\begin{split}
    S[\varphi]  & =   \frac{1}{2} \int d^dx \;   \varphi_{\mba}(x) (   - \Delta)^{\zeta}\varphi_{\mba}(x) + S^{\rm int}[\varphi]\;,\crcr
   S^{\rm int}[\varphi]  & = \frac{ m^{2\zeta}}{2} \int d^dx \;   \varphi_{\mba}(x) \delta_{\mba \mbb} \varphi_{\mbb}(x)  + 
   \frac{ \lambda }{4 N^{3/2}} \int d^d x \;   \delta^t_{\mba \mbb\mbc\mbd} \; \varphi_{\mba}(x) \varphi_{\mbb}(x)  \varphi_{\mbc}(x) \varphi_{\mbd }(x)\crcr
       & \qquad +   \int d^d x \;  \left( \frac{ \lambda_p }{4 N^{2}} \; \delta^p_{\mba\mbb; \mbc\mbd} +  \frac{ \lambda_d }{4 N^{3}}  \; \delta^d_{\mba\mbb; \mbc\mbd } \right) \; \varphi_{\mba}(x) \varphi_{\mbb}(x)  \varphi_{\mbc}(x) \varphi_{\mbd }(x) 
     \; ,     
\end{split}
\ee 
where $\Delta=  \partial_{\mu}\partial^{\mu}$,  $\delta_{\mba \mbb}  = \prod_{i=1}^3 \delta_{a^i b^i} $ and:
\begin{align}
    \delta^t_{\mba \mbb\mbc\mbd}  = \delta_{a^1 b^1}  \delta_{c^1 d^1} \delta_{a^2 c^2}  \delta_{b^2 d^2 } \delta_{a^3 d^3}   \delta_{b^3 c^3} \;  , \quad
  \delta^p_{\mba\mbb; \mbc\mbd }= \frac{1}{3} \sum_{i=1}^3  \delta_{a^ic^i} \delta_{b^id^i} \prod_{j\neq i}  \delta_{a^jb^j}  \delta_{c^jd^j} \;,
 \quad  \delta^d_{\mba\mbb; \mbc\mbd }  = \delta_{\mba \mbb}  \delta_{\mbc \mbd} \;, 
\end{align}
where $t$ stands for \emph{tetrahedron}, $d$ for \emph{double-trace} and $p$ for \emph{pillow} pattern of contraction. Because it plays a special role below, we have distinguished the coupling $\lambda$ of the tetrahedral
invariant and did not assign any subscript to it.

It is convenient to introduce a graphical representation of the $O(N)^3$ invariants, which also justifies the names of the different contraction patterns. We  represent every tensor as a vertex and every contraction of two indices as an edge. 
We assign to these edges a color $1$, $2$ or $3$ (or red, green, and blue), corresponding to the position of the three indices in the tensor. The quartic invariants of \eqref{eq:action} are represented in Fig.~\ref{fig:interactions}.

\begin{figure}[ht]
\begin{center}
\includegraphics[width=0.7\textwidth]{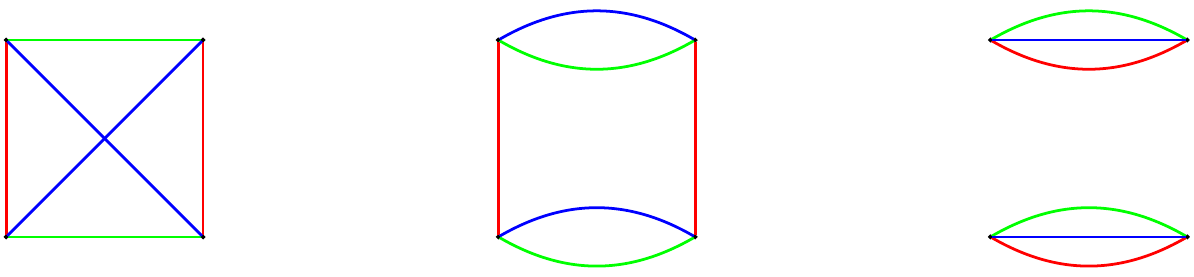}
 \caption{Graphical representation of the quartic $O(N)^3$ invariants. From left to right: the tetrahedron, the pillow, and the double-trace (there are three pillow contractions, distinguished by the color of the vertical edge).} \label{fig:interactions}
 \end{center}
\end{figure}

The CTKT model is obtained for $\zeta=1$, but here we will allow a non trivial power of the Laplacian $\frac{d}{4}\leq \zeta\leq 1$, which preserves  reflection positivity of the propagator and power counting renormalizability of the quartic interactions. And unlike the fermionic CTKT model in one dimension \cite{Klebanov:2016xxf}, where one retains just the tetrahedral interaction, in higher dimensions we have to include all the terms demanded by perturbative renormalizability, hence the mass, pillow, and double-trace terms in \eqref{eq:action}.

To simplify the notation, we sometimes denote $A=(\mba,x)$, $\delta_{AB} = \delta_{(\mba,x) (\mbb,y)} = \delta_{\mba \mbb} \delta(x-y)$ and $\delta(x-y) = \delta_{xy}$.
We denote bilocal operators by bold face. For instance the covariance of the theory $ \mbC$ is:
\be\label{eq:cov} 
\begin{split}
 & \mbC_{AB} = \mbC_{\mba \mbb}(x,y) = \delta_{\mba \mbb}  \; \frac{1}{ (  - \Delta)^{\zeta}  } (x,y) \equiv \delta_{\mba \mbb} \;  C(x,y)  \;, \crcr
 & C(x,y) = C(x-y) = \int_p e^{ - \im p (x-y) } C(p) \;, \qquad C(p) = \frac{1}{p^{2\zeta}}  = \frac{1}{\Gamma(\zeta)} \int_0^{\infty} d\alpha \;\alpha^{\zeta -1 } e^{- \alpha p^2} \;. 
\end{split}
\ee 
The last line can be combined to give the direct space representation:
\be
C(x-y) = \frac{1}{(4\pi)^{d/2} \Gamma(\zeta)} \int_0^{\infty} d\alpha \;\alpha^{\zeta -1-d/2 } e^{- \frac{(x-y)^2}{4 \alpha}}\;,
\ee
which is well defined for $\zeta<d/2$.

We are interested in computing, at leading order in $1/N$  but at all orders in the coupling constants, the connected correlation functions of the theory.
At large $N$ the theory simplifies significantly: the partition function and correlations admit a $1/N$ expansion, as we will now recall. 

\subsection{Feynman graphs}
\label{sec:model-graphs}

The free energy (and the connected $n$-point functions) of the theory can be expanded in connected Feynman graphs $\cG$. 
We will actually use two types of graphs: 4-colored graphs and ordinary Feynman graphs.

The representation as 4-colored graphs is standard in tensor models 
\cite{RTM,uncoloring,Carrozza:2015adg}, and it is obtained as follows. 
Each interaction invariant is represented a 3-colored graph, as above, and we will also call it a \emph{bubble}, as in \cite{RTM,uncoloring,Carrozza:2015adg}. The propagators are represented by edges of a new color connecting the tensors (the vertices of the bubbles), which we call $0$ (black in Fig~\ref{fig:graph}). An example of the resulting 4-colored graphs is given  in Fig.~\ref{fig:graph}.

\begin{figure}[ht]
\begin{center}
\includegraphics[width=0.7\textwidth]{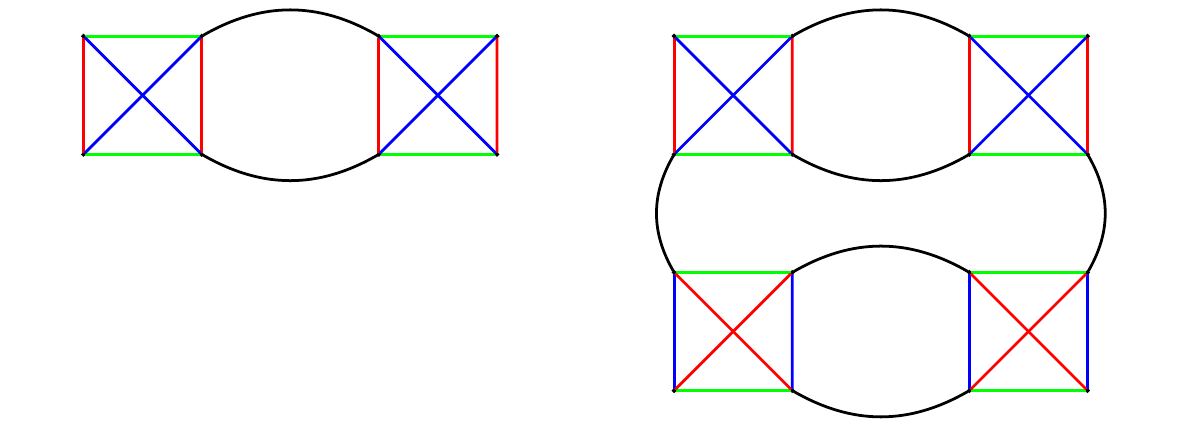}
 \caption{Two Feynman graphs, with external tensor contractions equivalent to the pillow (left) and double-trace (right) invariants.} \label{fig:graph}
 \end{center}
\end{figure}

The ordinary Feynman graphs are obtained by shrinking each bubble to a point (appropriately colored in order to still distinguish the different interaction bubbles, if necessary). An example is given in Fig.~\ref{fig:melontadpoles}, where however we omit the colors of the vertices.

While the ordinary Feynman graphs are simpler, and they are sufficient for representing Feynman integrals (which we will do later), the 4-colored graphs are useful for identifying the correct powers of $N$.
In fact, in a 4-colored graph, each propagator identifies the indices on its two end tensors, hence the indices circulate along the cycles of colors $0i$, which we call faces. We obtain a free sum, that is a factor $N$, per face. 
We denote $n_t(\cG)$, $n_p(\cG)$ and $n_d(\cG)$ the numbers of tetrahedral, pillow, and double-trace bubbles, and $F(\cG)$ the number of faces of $\cG$. We associate a variable $x_v$ to each bubble in $\cG$. The free energy of the model is:
\begin{align}
 {\mathcal F} & =-  \ln \bigg\{  \int [d\varphi]\; e^{-S[\varphi]}\bigg\} \\
 & = \sum_{\cG} N^{F -\frac{3}{2}n_t - 2 n_p - 3n_d}  \frac{\lambda^{n_t}}{ n_t! 4^{n_t}} \frac{\lambda_p^{n_p}}{ n_p! 12^{n_p}}   \frac{\lambda_d^{n_d}}{ n_d! 4^{n_d}} (-1)^{n_t+n_p+n_d+1} A(\cG) \int_x 1  \;, \crcr
 A(\cG)& =  
 \int \prod_{ v\neq v_0 } dx_{v} \prod_{e\in \cG } C(x_e,y_e) \;,
\end{align}
where $\cG$ runs over connected vacuum 4-colored graphs with labelled tensor vertices, $v_0$ is an arbitrary root vertex, and $x_e$ and $y_e$ denote the positions of the end vertices of the edge $e$. 

\paragraph{The $1/N$ expansion.}
The model has a $1/N$ expansion\cite{Carrozza:2015adg,Klebanov:2016xxf}. The simplest way to see this is to observe that pillow and double-trace vertices can be obtained as radiative corrections from the tetrahedral vertex: the pillow is a rung (Fig~\ref{fig:graph}, left), and the double-trace 
is a ladder made out of two rungs with different color inside their loop (Fig~\ref{fig:graph}, right). Replacing the pillow and double-trace vertices in a graph by their minimal resolution in terms of tetrahedral vertices one associates to any graph $\cG$ 
a graph $\hat \cG$ having \emph{only} tetrahedral vertices but the same scaling in $N$:
\[
 F(\cG) -\frac{3}{2}n_t(\cG) - 2 n_{p}(\cG) - 3n_{d}(\cG) =  F(\hat \cG) -\frac{3}{2}n_t( \hat \cG) \;.
\]
Starting from $\hat \cG$ one can build three \emph{jackets} \cite{expansion1,Carrozza:2015adg} ${\cal J}^i$, that is ribbon graphs\footnote{The ribbon graphs are made evident in the stranded representation, where one replaces each black line and vertex by three parallel red, green, and blue lines: a jacket ${\cal J}^i$ is then obtained by simply deleting color $i$.} obtained by ignoring the faces of color $0i$.
Each jacket has a non orientable genus $k({\cal J}^i) \ge 0$ and the number of faces\footnote{It is at this point that one uses the fact that $\hat \cG$ has only tetrahedral vertices. This construction is slightly more complicated on the original graph $\cG$, as the jackets of $\cG$ are not necessarily connected \cite{Carrozza:2015adg}.}
$ {\cal F}(\cJ^i) =  n_t(\hat \cG) + 2 - k({\cal J}^i)$.
As every face belongs to two jackets, the total number of faces of $\hat \cG$ is:
\[
 {\cal F}(\hat \cG) = \frac{3}{2} n_t(\hat \cG) + 3 - \frac{1}{2} \sum_{i} k({\cal J}^i) \; .
\]
Denoting $ \omega(\cG) = \frac{1}{2} \sum_{i} k({\cal J}^i) \ge 0$ the \emph{degree} of the original graph $\cG$, the scaling with $N$ of a connected vacuum graph is:
\[
 N^{3 - \omega(\cG)} \; .
\]

By the standard arguments \cite{critical,RTM} $\cG$ has degree zero if and only if $\hat \cG$ is melonic.
That is the leading order graphs are melonic \emph{after} substituting all the pillows and double-trace vertices by their minimal realizations in terms of 
the tetrahedral vertex. In terms of the original interactions in $\cG$, one gets \emph{melon tadpole} \cite{Benedetti:2017qxl} graphs, that is 
graphs obtained by iterated insertions of melons or tadpoles into melons or tadpoles,
see Fig.~\ref{fig:melontadpoles}. Observe that all the tadpoles are based on
either pillow or double-trace vertices, while the end vertices of the melons are tetrahedral.
\begin{figure}[ht]
\begin{center}
\includegraphics[width=0.3\textwidth]{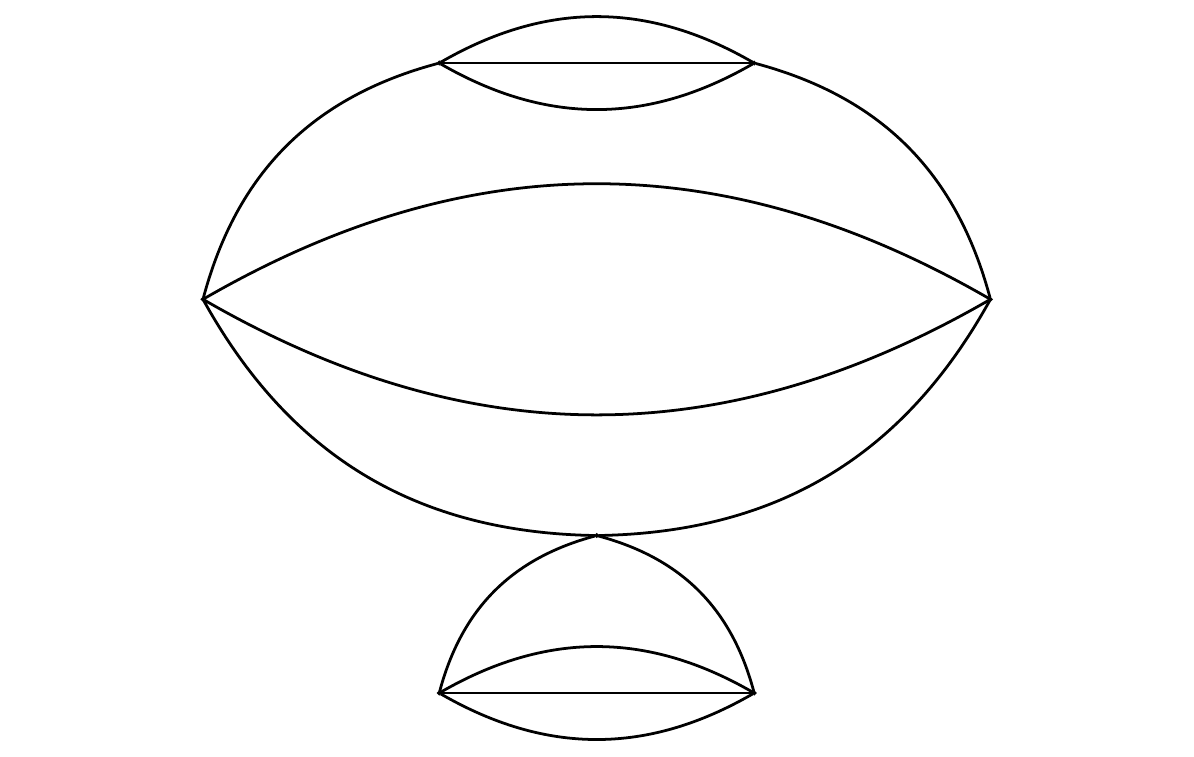}
 \caption{A melon tadpole graph, where all the invariants have been shrunk to point-like vertices.} \label{fig:melontadpoles}
 \end{center}
\end{figure}

\subsection{The 2PI effective action}
\label{sec:model-2PI}

The two-particle irreducible (2PI) effective action formalism is particularly well adapted to the tensor $1/N$ expansion \cite{Benedetti:2018goh}.\footnote{In fact, the $1/N$ expansion offers a controlled way of implementing as a proper expansion scheme the so-called $\Phi$-derivable truncations studied for example in \cite{Blaizot:2003br,Berges:2005hc,Blaizot:2010zx}.}
First of all, observe that $S[- \varphi] = S[\varphi]$, hence the odd point functions are zero in the absence of spontaneous symmetry breaking, which we will assume in the following. 
We define the generating function with bilocal source $\mbK_{AB}= \mbK_{\mba \mbb}(x,y)$:\footnote{We omit the source term linear in the fields in order to keep the presentation concise, as the bilinear term is enough for our purposes. For the general construction see \cite{Benedetti:2018goh}.}
\be \label{eq:W-2PI}
 e^{ W [\mbK]} = \int d\mu_{ \mbC  }( \varphi )\; e^{-S^{\rm int}[\varphi] + \frac{1}{2} \varphi_A \mbK_{AB} \varphi_B}   \;, 
\ee
where $d\mu_{ \mbC  }$ denotes the normalized Gaussian measure with covariance $\mbC$.
The source effectively shifts the inverse covariance: $\mbC^{-1}\to\mbC^{-1}-\mbK$.
Taking into account that the odd point functions are still zero in the presence of the source, 
the derivatives of $W$ write in terms of the connected two and four-point functions with source $\mbK$\footnote{Note that the derivative of a symmetric function $\mbK_{AB} = \mbK_{BA}$ with respect to itself is the projector on symmetric functions
\[ \frac{\delta \mbK_{AB}}{\delta \mbK_{EF}} = {\cal S}_{AB;EF} = \frac{1}{2} (\delta_{AE} \delta_{BF} + \delta_{AF} \delta_{BE}) \;.\]}:
\begin{align}\label{eq:Wderiv}
2 \frac{\delta  W}{ \delta \mbK_{AB}}  = & \braket{\varphi_{A} \varphi_{B} }^c_{\mbK} \;, \crcr
4 \frac{\delta^2  W}{ \delta \mbK_{AB} \delta \mbK_{EF} }  = & \braket{\varphi_{A}  \varphi_{B} \varphi_{E} \varphi_{F} }^c_{\mbK}  + 
  \braket{\varphi_{A}  \varphi_{E} }^c_{\mbK} \braket{  \varphi_{B} \varphi_{F} }^c_{\mbK} +
  \braket{\varphi_{A }   \varphi_{F} }^c_{\mbK} \braket{  \varphi_{B } \varphi_{E} }^c_{\mbK} \;.
\end{align}

Setting $\mbK=0$, one recovers the connected two and four-point functions of the original theory.
Inverting $ 2 \frac{\delta  W}{ \delta \mbK_{AB}}  = \mbG_{AB} $ yields the source $\mbK[\mbG] $ which ensures that the connected two-point function is exactly $\mbG_{AB}$. 
The Legendre transform of $W$ is:
\be \label{eq:GammaG}
  \Gamma [ \mbG]   =    \bigg\{ -   W[\mbK] + \frac{1}{2} \Tr[\mbG \mbK] \bigg\}_{ \mbK = \mbK[\mbG]}  \;,
\ee
where $\Tr$ denotes a trace over both the indices and the positions. The derivatives of $\Gamma$ are:
\begin{align*}
 &  \frac{\delta  \Gamma}{\delta \mbG_{AB}} = \frac{1}{2} \mbK_{AB} 
 \;, \qquad \frac{\delta^2   \Gamma}{\delta \mbG_{AB} \delta \mbG_{EF}}   =   \frac{1}{2} \frac{\delta \mbK_{AB}}{\delta \mbG_{EF}} 
 = \frac{1}{2} \left(  \frac{\delta \mbG  }{\delta K } \right)^{-1}  = M^{-1} \;,
  \crcr
 & M_{ (AB) ; (EF)}   = 4 \frac{\delta^2   W}{ \delta \mbK_{AB}\delta \mbK_{EF}}  
 =  \braket{\varphi_{A}  \varphi_{B} \varphi_{E} \varphi_{F} }^c_{\mbK[\mbG]}  + 
  \mbG_{AE} \mbG_{BF}  + \mbG_{AF} \mbG_{BE} \;. 
\end{align*} 
The field equations $ \frac{\delta \Gamma}{\delta \mbG} =  0$ are equivalent to $\mbK =0$,   and we denote their solution $\bar \mbG$. 
The on-shell two-point function is diagonal in the tensor indices 
$ \bar \mbG_{AB} = \delta_{\mba \mbb} \bar G(x,y)$. 

Let us denote $ - \Gamma^{2PI}[\mbG]$ the sum of non trivial vacuum 2PI graphs (i.e. which don't disconnect by cutting two edges) with vertices defined by $ S[\varphi]$ and with propagators given by $\mbG$. 
The self energy $\mbS$ (the sum of non trivial one-particle irreducible two-point graphs with propagator $\mbC$) can be obtained as:
\be
\mbS_{AB}[\mbG] =  - 2 \frac{\delta \Gamma^{2PI}[\mbG]}{\delta \mbG_{AB}}  \;, 
\ee
where the derivative selects and cuts an edge and the factor $2$ counts the ways to attach it to the external points.
The derivative of the self energy with respect to the two-point function yields the amputated 2PI four-point kernel \cite{Berges:2004yj}. The 2PI irreducible kernel amputated to the right only is:
\be
 {\cal K}_{A'B' ; EF} =  \mbG_{A' A} \mbG_{ B' B} \frac{\delta \mbS_{AB}}{ \delta \mbG_{EF} } \;.
\ee

The full two-point function obeys the Schwinger-Dyson equation $\mbG^{-1} = \mbC^{-1} - \mbK[\mbG] - \mbS[\mbG]$.
Solving for $\mbK$, we get $
 \frac{\delta  \Gamma}{\delta \mbG} = \frac{1}{2} \mbK  = \frac{1}{2} \mbC^{-1} -  \frac{1}{2}  \mbG^{-1} + \frac{\delta \Gamma^{2PI}}{\delta \mbG} \;,
$
and:
\begin{align}
  \Gamma[\mbG] & = \frac{1}{2} \Tr[ \mbC^{-1} \mbG ] -\frac{1}{2} \Tr\ln(\mbG) +  \Gamma^{2PI}[ \mbG ] \;, \\
 \frac{\delta^2   \Gamma}{\delta \mbG_{AB} \delta \mbG_{EF}}  & =   \frac{1}{2} \frac{\delta \mbK_{AB}}{\delta \mbG_{EF}}  = \frac{1}{2} \mbG^{-1}_{AA'} \mbG^{-1}_{BB'} \bigg( {\cal S} - {\cal K} \bigg)_{A'B'; EF} \;,
\end{align}
with ${\cal S}$ the projector on symmetric functions. Now, as the kernel ${\cal K}_{A'B';EF}$ is symmetric in $A'B'$ (and in $EF$), we have ${\cal K} = {\cal S} {\cal K}$,
and using Eq.~\eqref{eq:Wderiv} we get:
\begin{align} \label{eq:4point}
 \braket{\varphi_{A}  \varphi_{B} \varphi_{E} \varphi_{F} }^c_{\mbK[\mbG]}  =2 \left( \frac{{\cal K}}{1 - {\cal K}} {\cal S} \right)_{ AB; E'F' }  G_{E'E}  G_{F'F} \;.
\end{align}

The terms of the 2PI action can be organized in powers of $1/N$. The scaling in $N$ of a term is obtained by substituting for the two-point function its on shell value $\delta_{\mba \mbb} \bar G(x,y)$.
At leading and next-to-leading order in $N$, the combination of the $1/N$ expansion and the 2PI condition leads to a finite number of graphs:
\begin{itemize}
 \item leading order ($N^3$): a graph with a mass two-valent vertex and one edge, a melon with two tetrahedral vertices, a double tadpole with the pillow vertex and one with the double-trace vertex,
\item  next-to-leading order ($N^{5/2}$): three double tadpoles with the tetrahedral vertex (the three possible choices for closing a tadpole are distinguished by the coloring of the tetrahedron).
\end{itemize}
Thus at leading and  next-to-leading order we get \cite{Benedetti:2018goh}:
\begin{equation}
\begin{split}
 - \Gamma^{2PI}[\mbG] = & - \frac{m^{2\zeta}}{2} \Tr[\mbG]  - \frac{\lambda_p}{4N^2} \int_x \mbG_{( \mba,x)(\mbb,x)} \delta^p_{\mba\mbb ; \mbc\mbd} \mbG_{(\mbc,x)(\mbd,x)} - 
  \frac{\lambda_d}{4N^3} \int_x \mbG_{( \mba,x)(\mbb,x)} \delta^d_{\mba\mbb ; \mbc\mbd} \mbG_{(\mbc,x)(\mbd,x)} \crcr
  & + \frac{1}{2} \left( \frac{\lambda}{4 N^{3/2}}\right)^2 4 \int_{x,y} \delta^t_{\mba\mbb \mbc\mbd} \delta^t_{\mba'\mbb' \mbc'\mbd'}
  \mbG_{( \mba, x)(\mba',y)} \mbG_{( \mbb, x)(\mbb',y)}  \mbG_{( \mbc, x)(\mbc',y)} \mbG_{( \mbd, x)(\mbd',y)}  \crcr
  & -  \frac{\lambda}{4N^{3/2}}   \int_x \mbG_{( \mba,x)(\mbb,x)} \mbG_{(\mbc,x)(\mbd,x)} \bigg(  \delta^t_{\mba\mbb   \mbc\mbd} +  \delta^t_{\mba  \mbc \mbb  \mbd}  +  \delta^t_{\mba \mbc \mbd   \mbb}  \bigg)\;,
\end{split}
\end{equation}
where the first two lines are leading order and the last one is next-to-leading order. The self energy is:
\begin{equation}
\begin{split}
  \mbS_{(\mba,x)(\mbb,y)}  = & -m^{2\zeta} \delta_{\mba \mbb}\delta_{xy} - \frac{\lambda_p}{N^2}  \delta_{xy} \delta^p_{\mba\mbb ; \mbc\mbd} \mbG_{(\mbc,x)(\mbd,x)} 
 - \frac{\lambda_d}{N^3}  \delta_{xy} \delta^d_{\mba\mbb ; \mbc\mbd} \mbG_{(\mbc,x)(\mbd,x)} \crcr
    & + \frac{\lambda^2}{N^3} \delta^t_{\mba\mbc_1 \mbc_2\mbc_3} \delta^t_{\mbb \mbd_1 \mbd_2\mbd_3} \mbG_{( \mbc_1, x)(\mbd_1,y)}  \mbG_{( \mbc_2, x)(\mbd_2,y)} \mbG_{( \mbc_3, x)(\mbd_3,y)} \crcr
    & -  \frac{\lambda}{N^{3/2}} \delta_{xy}\bigg(  \delta^t_{\mba\mbb   \mbc\mbd} +  \delta^t_{\mba  \mbc \mbb  \mbd}  +  \delta^t_{\mba \mbc \mbd   \mbb}  \bigg)  \mbG_{(\mbc,x)(\mbd,x)} \;.
\end{split} 
\end{equation}

Finally, the four-point kernel at leading and next-to-leading order is:
\begin{equation} \label{eq:kernel}
\begin{split}
& {\cal K}_{ (\mba',x')(\mbb',y') ; (\mbc,z)(\mbd,t) }  =  \mbG_{(\mba',x') (\mba,x)  }\mbG_{(\mbb',y') (\mbb,y)  } \bigg[
    - \frac{\lambda_p}{N^2}  \delta_{xy} \delta_{xz} \delta_{xt} \delta^p_{\mba\mbb ; \mbc\mbd} - \frac{\lambda_d}{N^3}  \delta_{xy} \delta_{xz} \delta_{xt} \delta^d_{\mba\mbb ; \mbc\mbd}  \crcr
 & \qquad + \frac{\lambda^2}{N^3} \delta^t_{\mba\mbc_1 \mbc_2\mbc_3} \delta^t_{\mbb \mbd_1 \mbd_2\mbd_3}    \sum_{i=1}^3
     \left( \frac{1}{2}\delta_{xz}\delta_{yt} \delta_{\mbc_i \mbc} \delta_{\mbd_i \mbd} + \frac{1}{2}\delta_{xt}\delta_{yz}  \delta_{\mbc_i \mbd} \delta_{\mbd_i \mbc}  \right) 
    \prod_{j\neq i} 
    \mbG_{( \mbc_j, x)(\mbd_j,y)} \crcr
&  \qquad -  \frac{\lambda}{N^{3/2}} \delta_{xy}  \delta_{xz} \delta_{xt} 
    \frac{ \bigg(  \delta^t_{\mba\mbb   \mbc\mbd} +\delta^t_{\mba\mbb   \mbd\mbc} +  \delta^t_{\mba  \mbc \mbb  \mbd} +  \delta^t_{\mba  \mbd \mbb  \mbc} 
    +  \delta^t_{\mba \mbc \mbd   \mbb} + \delta^t_{\mba \mbd \mbc   \mbb}  \bigg) }{2}
 \bigg] \;.
\end{split} 
\end{equation}

Below we will be interested in evaluating the two and four-point functions on shell where $ \bar \mbG_{AB} = \delta_{\mba \mbb} \bar G(x,y)$.
We will drop the bar on $G(x,y)$ in order to simplify the notation.

\section{Renormalization}
\label{sec:RG}
 
\paragraph{Motivation.} We consider $d < 4$. According to \cite{Klebanov:2016xxf,Giombi:2017dtl}, the model 
in Eq \eqref{eq:action} with $\zeta =1$ 
should have a non trivial conformal infrared limit.
We aim to study rigorously this putative conformal infrared limit. 
In the IR, the full two-point function is expected to acquire a 
non trivial scaling behavior $G(p) \sim p^{- d/2 }$. 
Using the full two-point function as propagator, the theory with interaction $S^{\rm int}$ in Eq.~\eqref{eq:action} exhibits \emph{ultraviolet} divergences 
in any $d$: the two-point\footnote{As we use the resummed two-point function, the graphs of the effective theory do not have any two-point subgraphs. However, the full two-point function 
must satisfy the Schwinger-Dyson equation which does exhibit ultraviolet divergences.} and four-point graphs are ultraviolet divergent.
In order to make sense of the infrared theory one has two options:
\begin{itemize}
 \item set from the beginning  $ \zeta = d / 4 $ that is start from a bare covariance
 that reproduces the infrared scaling of the two-point function. In the SYK model in one dimension this has been studied by
 Gross and Rosenhaus \cite{Gross:2017vhb}. In $d=3$ (with no tensor indices) the choice $\zeta = 3/4 + \epsilon$ yields the Brydges-Mitter-Scoppola model 
 \cite{Brydges:2002wq,Abdesselam:2006qg}.
 \item argue that the ultraviolet divergences are just an artifact of using the infrared ansatz for the two-point function:
 for $d < 4$ the free covariance dominates in the ultraviolet, hence the effective two-point function will behave 
 at large momentum as $p^{-2}$. 
\end{itemize}

The second option is very difficult to implement (even non rigorously) in practice. One would have to consider that the infrared scaling $G(p) \sim p^{-d/2}$ is a good approximation up to some momentum 
scale $\Lambda$. 
Neglecting the higher momenta makes all the correlation functions depend on (and in fact diverge with) the non physical dimensionful parameter $\Lambda$. In order to eliminate this dependence\footnote{The same situation arises in quantum electrodynamics. Although we all agree that QED is not a UV complete theory and in the UV one needs to take into account the rest of the standard model, 
it still makes sense in the infrared to study QED with a cutoff $\Lambda$ and renormalize it. This leads at low energy to some reasonably accurate predictions, like the anomalous magnetic moment of the electron.} 
one still needs to subtract these divergences using a bare theory at scale $\Lambda$ with bare covariance
$C(p) \sim p^{ - d/2}$.

\bigskip

We choose the first option, and from now on we assume $\zeta = d/4$, although we will keep $\zeta$ arbitrary in most formulas, for convenience and generality.

\paragraph{Power counting.} 
Let $\cG$ be a connected amputated Feynman graph with $n(\cG)$ vertices, $E(\cG)$ edges and $r(\cG)$ external points.
In momentum space one counts an independent integral $d^dp$ for every loop and a propagator $p^{-2\zeta}$ for every edge. Under a global rescaling by a factor $t$ of all the momenta, the amplitude is rescaled as:
\[
 t^{ d \left[ E(\cG) - n (\cG)+ 1 \right] - 2\zeta E (\cG)}  = t^{ \left[ d - \frac{r(\cG)}{2}\left( d -2\zeta\right) \right] + (d-4\zeta) n(\cG) } = t^{d \left( 1 - \frac{ r(\cG)} {4} \right)  } \;,
\]
where $E(\cG) - n (\cG)+ 1$ is the number of loops, and we used $2E(\cG) = 4n(\cG)-r(\cG)$ and $\zeta = d/4$. 
The theory is marginal, that is, the power counting does not depend on the number of internal vertices. The two-point graphs are power divergent  ($t^{\frac{d}{2}}$) in the UV and the four-point graphs are logarithmically divergent in the UV. Graphs with more than six external points are naively UV convergent.
 
\subsection{Wilsonian renormalization group}

\paragraph{The RG transformation.}
In order to access the infrared limit one needs to study the renormalization of the theory. We briefly review the Wilsonian renormalization group  in our setting.\footnote{Our presentation essentially follows the formulation of \cite{Morris:1993qb}; see also \cite{Blaizot:2010zx,Carrington:2012ea,Carrington:2014lba} for the functional RG of the 2PI effective action.} Although standard, we will review some essentials in order to clarify our logic and to highlight some subtleties that arise in our case. We start from \eqref{eq:W-2PI} with an explicit UV cutoff $\Lambda$:
\be 
\begin{split}
 e^{ W [\mbK]} & = \int d\mu_{ C^{\Lambda}}  (\varphi) \; e^{-S_{\Lambda}[\varphi]+ \frac{1}{2} \varphi_A \mbK_{AB} \varphi_B} \;,\\% \qquad 
C^{\Lambda}(p) & = \frac{1}{p^{2\zeta}} \, \Theta\left(\frac{p^2}{\Lambda^2}\right)  \; , \qquad C^{\Lambda}(x) =  \int_p \; \frac{e^{-\imath p x}}{ p^{2\zeta} } \,\Theta\left(\frac{p^2}{\Lambda^2}\right)  \; ,
\end{split}
\ee
where $d\mu_{ C^{\Lambda}  }$ denotes the normalized Gaussian measure with covariance $C^{\Lambda}$. We denote by convention $S_{\Lambda}[\varphi] \equiv S^{\rm int}[\varphi]$ the bare potential of our model \eqref{eq:action}.
The ultraviolet divergences are regularized by the multiplicative 
cutoff function $\Theta(p^2/\Lambda^2)$.\footnote{$\Theta(u)$ is monotonic and takes values between 0 and 1,  such that $\Theta(u)\simeq 1$ for $u<1$, and $\Theta(u)\simeq 0$ for $u>1$.
Typical choices are the exponential cutoff $\Theta(u)= e^{-u}$, the sharp cutoff $\Theta(u)=\theta(1-u)$ (or a smooth approximation of it), and so on.}
While the specific choice of the cutoff function should not affect the main results, we will choose once and for all to use
a normalized upper incomplete Euler gamma function:
\be
  \Theta\left(\frac{p^2}{\Lambda^2}\right) = \frac{\Gamma \left(\zeta; \frac{p^2}{\Lambda^2} \right)}{\Gamma(\zeta)} = \frac{1}{\Gamma(\zeta)}\int_{\frac{p^2}{\Lambda^2}}^{\infty} d\alpha \; \alpha^{\zeta-1} e^{-\alpha} \;,
\ee
which implements a parametric cutoff for the Schwinger parameter $\alpha$, and which for $\zeta=1$ reduces to the standard exponential cutoff.

Let $k\le \Lambda$ be an infrared scale.
The Wilsonian RG transformation consists in integrating out the modes with momenta between $\Lambda$ and $k$, and then rescaling the momenta by $\Lambda/k$ and the fields by their wave function renormalization in order to re-establish the original free covariance of the leftover modes. 

To be more precise, we introduce the slice cutoff function $\chi^{\Lambda}_k(p) = \Theta\left( p^2/\Lambda^2 \right)- \Theta\left( p^2/k^2 \right) $
and we split the covariance as $C^{\Lambda} = C^{k} + C^{\Lambda}_k$, where $C^{k} $ is the covariance with UV cutoff $k $ and $C^{\Lambda}_k$ is the covariance of the fluctuations (the modes with momenta between $\Lambda$ and $k$):
\be
 C^{\Lambda}_k (p) = \frac{1}{p^{2\zeta}} \,\chi^{\Lambda}_k(p) =  \frac{1}{\Gamma(\zeta)}\int_{\Lambda^{-2}}^{k^{-2}}  d\alpha \; \alpha^{\zeta-1} e^{-\alpha p^2 } \; .
\ee

Associated to the split of the covariance, the Gaussian integral also splits as (see for example \cite{Gurau:2014vwa}):
\be \label{eq:1stRGstep}
\begin{split}
 e^{ W [\mbK]}   & = \int d\mu_{C^{\Lambda}}(\varphi )  \;e^{ - S_{\Lambda}[\varphi] + \frac{1}{2} \varphi_A \mbK_{AB} \varphi_B } \\
 & = 
\int d\mu_{C^{ k } }(\varphi)  \int d\mu_{C^{\Lambda}_k}(\chi) \; 
e^{ - S_{\Lambda} [\varphi + \chi ] + \frac{1}{2} (\varphi_A+\chi_A) \mbK_{AB} (\varphi_B+\chi_B)} \\
&= \int d\mu_{C^{ k }} (\varphi ) \;  e^{   W_k [\mbK;\varphi ]}  \;,
\end{split}
\ee
where by a change of variables we can write
\be \label{eq:Wk}
e^{   W_k [\mbK;\varphi ]} = e^{-\frac{1}{2}\varphi_A (\mbC^{\Lambda}_k)^{-1}_{AB}\varphi_B } \int d\mu_{C^{\Lambda}_k}(\chi) \; 
e^{ - S_{\Lambda} [ \chi ] + \frac{1}{2} \chi_A \mbK_{AB} \chi_B + \varphi_A (\mbC^{\Lambda}_k)^{-1}_{AB} \chi_B } \;.
\ee
Defining a new field $J_A =(C^{\Lambda}_k)^{-1}_{AB} \varphi_B$, we recognize 
\be
\hat{W}_k [\mbK,J ] =W_k [\mbK;\mbC^{\Lambda}_k J ] + \frac{1}{2} J_A (\mbC^{\Lambda}_k)_{AB} J_B \;,
\ee
to be the connected generating functional with local and bilocal sources and covariance $C^{\Lambda}_k$.

Performing a double Legendre transform we define the full 2PI effective action (equation \eqref{eq:GammaG} being the special case $J=\phi=0$):\footnote{We distinguish the various effective actions by their arguments: $\Gamma[\phi,\mbG]$ is the full 2PI effective action, while $\Gamma[\mbG]=\Gamma[\phi=0,\mbG]$ is the reduced action presented in section \ref{sec:model-2PI}; and $\Gamma[\phi]=\Gamma[\phi,\bar\mbG]$ is the 1PI effective action. Lastly, we use the subscript $k$ to denote the presence of the Wilsonian cutoff.}
\be \label{eq:GammaGphi}
\Gamma_{k}[\phi,\mbG]  =  - \hat{W}_{k}[\mbK, J ] +   J_A  \phi_A + \frac{1}{2}  \phi_A \mbK_{AB}  \phi_B + \frac{1}{2} \Tr[G \mbK ] 
\; ,
\ee
where on the right-hand side $J$ and $\mbK$ satisfy $\delta \hat{W}/\delta J_A =\phi_A$ and $\delta \hat{W}/\delta\mbK_{AB} = \frac{1}{2} (\phi_A \phi_B + \mbG_{AB})$.

Acting on $\hat{W}_{k}[\mbK, J ]$ with a $k$-derivative, and using \eqref{eq:Wk}, we obtain:
\be
k \partial_k  \hat{W}_{k}[\mbK, J ] = - \Tr\left[ k \partial_k (\mbC_{k}^{\Lambda})^{-1}  \frac{\delta \hat{W}_{k}}{\delta \mbK}\right] \;.
\ee
Next, acting with a $k$-derivative on \eqref{eq:GammaGphi}, we find that the 2PI effective action satisfies the flow equation:
\be  \label{eq:WE} 
 k \partial_k \Gamma_{k}[\phi,G] =  \frac{1}{2}  \Tr\left[ k \partial_k (C_{k}^{\Lambda})^{-1}  G\right]+ \frac{1}{2} \phi_A k \partial_k (C_{k}^{\Lambda})^{-1}_{AB}  \phi_B \; .
\ee
Going on shell for the two-point function, i.e.\ setting $G=G_k^{\Lambda}$ directly in the flow equation (which is a valid operation since by definition $\delta\Gamma_k/\delta G|_{G=G_k^{\Lambda}} =0$), one recovers the flow equation for the 1PI effective action $\Gamma_{k}[\phi]$ of \cite{Morris:1993qb}.

Expanding the 1PI effective action in powers of the field, and using translation invariance, we write:
\be
\begin{split}
\Gamma_{k}[\phi] & = \sum_{n\geq 0} \frac{1}{n!} \int_{x_1,\ldots, x_n} \Gamma_{k}^{(n)}(x_1,\ldots,x_n) \phi(x_1) \cdots \phi(x_n)\\
&= \sum_{n\geq 0} \frac{(2\pi)^d}{n!} \int_{p_1,\ldots, p_n} \Gamma_{k}^{(n)}(p_1,\ldots,p_n) \phi(p_1) \cdots \phi(p_n) \delta(p_1+\ldots+p_n) \;.\end{split}
\ee

A hierarchy of equations for the $n$-point functions $\Gamma_{k}^{(n)}(x_1,\ldots,x_n)$ is obtained by acting with derivatives on \eqref{eq:WE}, and using the fact that
\be
G_{k}^{\Lambda} = \left(\frac{ \delta^2 \Gamma_{k}[\phi] }{ \delta \phi \delta \phi } \right)^{-1} \;,
\ee
and hence
\be
\frac{ \delta G_{k}^{\Lambda}(x,y)}{ \delta \phi(z) } = - \int_{u,v} G_{k}^{\Lambda}(x,u) \frac{\delta^3 \Gamma_{k}[\phi] }{ \delta \phi(u) \delta \phi(v) \delta \phi(z) } G_{k}^{\Lambda}(v,y) \;.
\ee

Assuming as above that the theory has $\mathbb{Z}_2$ invariance, and thus that the $n$-point functions vanish for odd $n$, we obtain
\be \label{eq:Gamma2-FRG}
k\partial_k \Gamma_{k}^{(2)} = - \frac12  \Tr\left[k\partial_k (C_{k}^{\Lambda})^{-1}  G_{k}^{\Lambda} \Gamma_{k}^{(4)}  G_{k}^{\Lambda}\right]+ k\partial_k (C_{k}^{\Lambda})^{-1} \; ,
\ee
\be \label{eq:Gamma4-FRG}
k\partial_k \Gamma_{k}^{(4)} = - \frac12  \Tr\left[k\partial_k (C_{k}^{\Lambda})^{-1}  G_{k}^{\Lambda} \Gamma_{k}^{(6)}  G_{k}^{\Lambda}\right]+ \sum_{``s,t,u\; {\rm channels}''}   \Tr\left[k\partial_k (C_{k}^{\Lambda})^{-1}  G_{k}^{\Lambda} \Gamma_{k}^{(4)}  G_{k}^{\Lambda} \Gamma_{k}^{(4)}  G_{k}^{\Lambda}\right]\; ,
\ee
and so on.

The second term in \eqref{eq:Gamma2-FRG} is often eliminated by redefining $\Gamma_{k}[\phi] =  \frac{1}{2} \phi_A  (C_{k}^{\Lambda})^{-1}_{AB}  \phi_B + \hat\Gamma_{k}[\phi]$. 
The new two-point vertex can be identified with the self-energy,  $\hat\Gamma_{k}^{(2)}(p) = -\Sigma_k(p)$, and we expect for $p\ll k$ the self-energy to contain a term proportional to the kinetic term of the model: $ \Sigma_k(p) = - (Z_k - 1) p^{2\zeta} + \dots$. 
Therefore, rescaling by $Z_k$ (together with a rescaling of momenta) is necessary in order to restore the covariance $C^{\Lambda}$ in the last line of \eqref{eq:1stRGstep}, as demanded by the second step of the Wilsonian RG.
The rescaling will not play an important role for us, as our couplings are dimensionless, but we should remember to divide $\Gamma_{k}^{(n)}$ by $Z_k^{n/2}$.

For tensor models at large $N$, the hierarchy of equations can be closed\footnote{A closure of this type was considered in \cite{Blaizot:2010zx}, but lacking a melonic $1/N$ expansion, that was the result of an arbitrary truncation rather than a controlled expansion.} as all the $n$-point functions with $n>4$ can be expressed in terms of $ \Gamma^{(2)}_k $ and $ \Gamma^{(4)}_k $, and one could attempt to solve the resulting system of (two) equations. 
In particular, the 1PI four-point vertex in the absence of odd vertices is given by (minus) the amputated connected four-point function  \eqref{eq:4point}, i.e.:
\be \label{eq:4point1PI}
\Gamma^{(4)}{}_{ABEF} = - 2 \left( \frac{{\cal K}}{1 - {\cal K}} {\cal S} \right)_{ AB; E'F' }  G^{-1}_{E'E}  G^{-1}_{F'F} \;.
\ee
which, combined with \eqref{eq:Gamma2-FRG}, gives
\be \label{eq:Gamma2-FRG-redux}
k\partial_k \Gamma_{k}^{(2)} =   \Tr\left[k\partial_k (C_{k}^{\Lambda})^{-1} \left( \frac{{\cal S}}{1 - {\cal K}}  \right)\right] \;.
\ee
Such expression is actually generic, but in the melonic large-$N$ limit the kernel $\cal K$ has the closed expression \eqref{eq:kernel} in terms of the full propagator, rather than being an infinite sum over all kernel diagrams. More importantly $ \Gamma^{(6)}_k $ can be expressed in a closed form in terms of the full propagator and $ \Gamma^{(4)}_k $ (by the type of contact and planar diagrams encountered in \cite{Gross:2017aos}), thus closing equation \eqref{eq:Gamma4-FRG}.

However, such expressions for $ \Gamma^{(4)}_k $ and $ \Gamma^{(6)}_k $ are of limited use here: first, they require renormalization; second, they involve a summation over an infinite series of diagrams.
For these reasons, we will not use explicitly such equations, although it is useful to keep in mind that our construction is implicitly related to them.
First, we will deal with the integrated version of \eqref{eq:Gamma2-FRG-redux}, i.e.\ the Schwinger-Dyson equation $\Gamma_{k}^{(2)} = (C_{k}^{\Lambda})^{-1}-\Sigma[G_k^{\Lambda}]$. Once we have obtained the full renormalized two-point function, we will construct and renormalize the four-point vertex \eqref{eq:4point1PI}, from which we will define the effective couplings, and lastly their beta functions.

Before embarking into that, we should further comment on two non trivial issues raised by the RG framework we just presented.

\paragraph{Wave function renormalization.}

Although the quadratic part of the theory involves the momentum at a non integer power $p^{2\zeta}$, the RG flow generates a wave function renormalization $Z_k$.
For a theory with $\zeta=1$, the wave function is obtained by a 
Taylor expansion in $p$ of the two-point function, but this can not work in our case, $\zeta=d/4<1$, because Taylor expansions generate only integer powers of momenta. 

We will see in section~\ref{sec:2point} how the wave function comes about for non integer powers of momenta.

\paragraph{Subtraction at zero momentum.} Although our theory is massless, we perform a subtraction at zero momentum: the effective dimensionless four-point coupling is 
$ Z_k^{-2}  \Gamma^{(4)}_k \left( 0 ,0 ,0,0 \right)  $. In the Wilsonian picture described above, this is built in. 
 
The issue is subtle. Usually for massless theories one performs the subtraction at a subtraction scale $\mu$, that is, the effective coupling is defined as 
$ Z^{-2}   \Gamma^{(4)}\left( \mu ,\mu ,\mu,\mu \right)   $. 
This is particularly relevant when using dimensional regularization: the subtraction scale $\mu\neq 0$ is required in order to tame the infrared divergences. As it is the only scale in the problem, the RG flow is studied with respect to this subtraction scale. 

 In the Wilsonian picture the infrared divergences are cutoffed by the IR cutoff $k$ and integrating out the modes down to scale $k$ provides an effective theory for the modes with momenta smaller than $k$. Imposing a renormalization condition at some scale $\mu$  does not make sense if the infrared cutoff $k$ is smaller than the subtraction
scale $\mu$: the remaining modes $p<k$ can not reach the scale $\mu$. There is no intrinsic way in which the subtraction scale 
$\mu$ can arise: it can be at most put in by hand, but then the RG map looses its sense for $k<\mu$.

Below, we use the Wilsonian approach and perform the subtraction at zero momentum. This comes at a price: some facts one usually takes for granted when it comes to renormalization need to be revisited.  In particular, one expects that the coefficients of the beta function have finite limits when the UV cutoff is lifted to infinity. These coefficients turn out to be sums over amplitudes of graphs renormalized by the BPHZ subtraction \cite{bogoliubow1957multiplikation,Hepp:1966eg,Zimmermann:1969jj} operator and  the fact that they are finite is simply the statement of the 
BPHZ theorem. However, the  BPHZ theorem does not apply as it only works for massive theories, and Lowenstin's extension \cite{Lowenstein:1974qt,Lowenstein:1975rg} can't be used either as it does not subtract at zero momentum. The zero momentum subtraction in massless theories is much more involved: one needs to use multiscale analysis \cite{Rivasseau:1991ub} and the classification of inclusion forests in order to show that the subtracted amplitudes are indeed finite.

\section{The two-point function}
\label{sec:2point}

We first consider (formally) the theory without cutoffs. The covariance,
on-shell two-point function and on-shell self energy are diagonal in the tensor indices $\mbC_{AB} = \delta_{\mba \mbb} C(x,y)$,
$ \bar \mbG_{AB} = \delta_{\mba \mbb} G(x,y)$ and $\mbS_{AB} = \delta_{\mba \mbb} \Sigma(x,y)$. 
The on-shell Schwinger-Dyson equation becomes at leading and next-to-leading order in $1/N$: 
\be
 \Sigma(x,y) = -m^{2\zeta} \delta_{xy}  - (\lambda_p +\lambda_d) \delta_{xy} G(x,x) + \lambda^2 G(x,y)^3 - 3 \frac{\lambda}{N^{1/2}} \delta_{xy} G(x,x) \;, \qquad G^{-1} = C^{-1} - \Sigma \;.
\ee
Taking $N\to \infty$, this becomes in momentum space:
\be\label{eq:SDE}
\begin{split}
  \Sigma(p) & = - m^{2\zeta}    +     \lambda^2    \int_{q_1,q_2}    G(q_1)  G(q_2)  G( p +q_1 + q_2  )    - ( \lambda_p + \lambda_d ) \int_{q}    G(q) \;,
\crcr
   G(p)^{-1} &  =   C(p)^{-1} -  \Sigma(p) \; .
\end{split}
\ee 
For $C(p)^{-1}=p^2$ (i.e.\ for $\zeta=1$), a simple power counting argument indicates that the solution admits two regimes \cite{Klebanov:2016xxf}: a free scaling regime in the ultraviolet $G(p)^{-1}\sim p^2$ (with $C(p)^{-1}$ dominating over $\Sigma(p)$), and an anomalous scaling regime in the infrared $G(p)^{-1}\sim p^{d/2}$ (with $\Sigma(p)$ dominating over $C(p)^{-1}$).
For the reasons discussed above, we choose $\zeta\neq 1$ to match the infrared conformal behavior.
In fact, with $C(p)^{-1}=p^{2\zeta}$, the Schwinger-Dyson equation is formally solved by $ G(p)^{-1} = Z p^{2\zeta}$ with $\zeta = d/4$:
\be\label{eq:formal}
  Z p^{2\zeta} = p^{2\zeta} +  m^{2\zeta} - \frac{\lambda^2}{Z^3}    \int_{q_1,q_2}   \;\frac{1}{q_1^{2\zeta}} \; \frac{1}{q_2^{2\zeta}} \; \frac{1}{ ( p +q_1 + q_2  )^{2\zeta} } +  \frac{  \lambda_p + \lambda_d }{Z}   \int_{q} \;\frac{1}{ q^{2\zeta} }  \; ,
\ee 
as the double integral (which we call the  melon integral) gives, after a rescaling of $q_1$ and $q_2$ by $|p|$, a global $|p|^{2d - 6\zeta} = |p|^{2\zeta}$. Differently from \cite{Klebanov:2016xxf}, there is only one regime: $\Sigma(p)$ and $C(p)^{-1}$ are of the same order in $p$.
The problem is that both integrals in Eq.~\eqref{eq:formal} are divergent, thus we need regularization and renormalization.

Using the slice propagator\footnote{Thus $\lambda, \lambda_{p},\lambda_{d}$ and $m^{2\zeta}$ become the bare couplings and mass at scale $\Lambda$.} $C^{\Lambda}_k(p) = C(p)\chi^{\Lambda}_k(p)$ and denoting the self energy and the two-point  function with cutoffs $\Sigma^{\Lambda}_k(p)$ and $G^{\Lambda}_k(p)$, the Schwinger-Dyson equation with cutoffs becomes:
\begin{align}\label{eq:SDEcutoff}
 G^{\Lambda}_k(p) & = \frac{1}{ C(p)^{-1} - \chi^{\Lambda}_k(p) \Sigma^{\Lambda}_k(p) } \, \chi^{\Lambda}_k(p)  \equiv
 G\left( p ; \Lambda , k \right) \chi^{\Lambda}_k(p) \;, \\
  G\left( p ; \Lambda , k \right) ^{-1} & =  C(p)^{-1} - \chi^{\Lambda}_k(p) \bigg[ - m^{2\zeta}  -  \left( \lambda_{p} + \lambda_{d}  \right)  \int_{q}  G^{\Lambda}_k(q_1)    +  \lambda^2  \int_{q_1,q_2} G^{\Lambda}_k(q_1) G^{\Lambda}_k(q_2)   G^{\Lambda}_k( p +q_1 + q_2  )   \bigg] \; . \nonumber
\end{align} 
The first equation shows that the two-point function is proportional to the cutoff.

Let us step back once more for a moment and consider again the case $C(p)^{-1}=p^2$: the textbook observation is that at fixed 
$\Lambda$ and $k$, $ G\left( p ;\Lambda,k \right)^{-1}$  is analytic around $p=0$, hence:
\[
 G\left( p ;\Lambda,k \right)^{-1} = G\left( 0 ;\Lambda,k \right)^{-1} +  Z_k p^2 + O(p^4) \;,
\]
and one can extract the wave function renormalization $Z_k$.
Such Taylor expansion (known as the derivative expansion)
has a finite radius of convergence in $p/k$, hence it fails in the $k\to 0$ limit,
but on general grounds (see for example \cite{Delamotte:2007pf}) we expect that:
\begin{itemize}
 \item for $ k \ll p$  the inverse two-point function behaves like $G^{\Lambda}_k(p)^{-1} \sim p^{2-\eta}$,
 \item for $ p \ll k$  the inverse two-point function behaves like $G^{\Lambda}_k(p)^{-1} \sim k^{-\eta} p^2 $,
\end{itemize}
where $\eta$ is the anomalous field dimension. Therefore, in order to extract $\eta$ it is typically enough to obtain the scaling behavior of $Z_k$ with $k$.
However, this is \emph{not} how we are going to treat the two-point function, for the following two reasons. First, as explained before  we are interested in the anomalous scaling of the propagator, to be used in the SDE (we want to do more than the usual RG analysis, we want to show the appearance of the anomalous scaling in the SDE), in the four-point function, and so on. Second, with $C(p)^{-1}=p^{2\zeta}$, $\zeta\neq 1$, we have a non-analytic behavior from the start, and we cannot obtain the wave function renormalization from a Taylor expansion. 

It is unfortunately too difficult to solve the SDE with cutoffs analytically,
%\footnote{An approximate solution can be improved by substituting it in the right hand of Eq.~\eqref{eq:SDEcutoff} to obtain an improved approximation on the left hand side. It might be possible to prove that this iterative process converges in some appropriate functional space.} 
therefore we aim to have an ansatz for the two-point function with cutoffs $ G^{\Lambda}_k(p)$ which explicitly exhibits a conformal behavior in the $\Lambda\to \infty, k\to 0$ limit.
We take the ansatz:
\be \label{eq:Gansatz}
 G^{\Lambda}_k(p)=  \frac{1}{Z p^{2\zeta}} \; \chi^{\Lambda}_k(p) \;,
\ee
which reproduces the expected infrared scaling for $k \ll p$ 
with\footnote{The scaling for $p\ll k$ is recovered by observing 
that $\chi^{\infty}_k(p)$ has a series expansion:
\[
\chi^{\infty}_k(p) = \left(\frac{p^2}{k^2}\right)^{\zeta} \frac{1}{\Gamma(\zeta)} \left(\frac{1}{\zeta}  -\frac{1}{1+\zeta} \frac{p^2}{k^2}  +O\left(\left(\frac{p^2}{k^2}\right)^{2}\right) \right) \;,
\]
hence at small $p$ we get:
\[
\hat G^{\infty}_k(p)^{-1} = \frac{1}{p^{2\zeta}} \; \chi^{\infty}_k(p) \simeq
k^{2\zeta} \Gamma(\zeta) \zeta \left(1 +\frac{\zeta}{1+\zeta} \frac{p^2}{k^2}   \right) \; ,
\]
consistent with an anomalous field dimension $\eta = 2 -2\zeta$.} $\eta = 2-2\zeta$.

This ansatz is a solution of the SDE with cutoffs up to terms that are suppressed in the limit $k\to 0$ provided that the mass is tuned to criticality.
To see this, let us denote the cutoffed tadpole and  melon integrals $T^{\Lambda}_k$ and $M^{\Lambda}_k(p)$:
\[
T^{\Lambda}_k = \int_q G^{\Lambda}_k(q)\; , \qquad 
M^{\Lambda}_k(p) =   \int_{q_1,q_2}  G^{\Lambda}_k(q_1)   G^{\Lambda}_k(q_2)     G^{\Lambda}_k( p +q_1 + q_2  ) \; ,
\]
where, using Schwinger parameters, the melon integral writes:
\[
M^{\Lambda}_k(p) = \frac{1}{Z^3(4\pi)^d \Gamma(\zeta)^3 }    \int_{\Lambda^{-2}}^{k^{-2}}  d\alpha_1 d\alpha_2 d\alpha_3  \; 
   \frac{ (\alpha_1\alpha_2 \alpha_3)^{\zeta-1}   }{ ( \alpha_1 \alpha_2 + \alpha_1 \alpha_3 + \alpha_2 \alpha_3)^{d/2} }
   e^{ - p^2 \frac{ \alpha_1 \alpha_2 \alpha_3}{ \alpha_1 \alpha_2 + \alpha_1 \alpha_3 + \alpha_2 \alpha_3 } } \; .
\]
The Schwinger-Dyson equation with cutoffs becomes then:
\begin{equation}
 \label{eq:substSDE}
 Zp^{2\zeta} =  p^{2\zeta} + \bigg[ m_k^{2\zeta}    - \lambda^2    \big( M^{\Lambda}_k(p) - M^{\Lambda}_k(0) \big) \bigg] \chi^{\Lambda}_k(p) \;, 
\end{equation}
where all the $p$-independent contributions in the square bracket have been absorbed in the renormalized mass
\be
m_k^{2\zeta} =  m^{2\zeta}  +     \left( \lambda_{p} + \lambda_{d}  \right) T^{\Lambda}_k  - 
  \lambda^2  M^{\Lambda}_k(0)\;.
\ee 
We can tune the UV mass so as to both cancel the ultraviolet mass divergences in the SDE for the two-point function
 and ensure that the renormalized mass goes to zero in the $k=0$ limit. 
 
 \begin{proposition}
There exists $m^{2\zeta}$ depending only on $\Lambda$ and the bare coupling constants $ \lambda_{p} , \lambda_{d} , \lambda$ such that 
\[
 \lim_{k\to 0}m_k^{2\zeta} =0 \;.
\]
\end{proposition}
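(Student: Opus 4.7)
The natural strategy is to pick the bare mass $m^{2\zeta}$ so as to exactly cancel the $k\to 0$ limits of the tadpole and zero-momentum melon contributions in the definition of $m_k^{2\zeta}$. Concretely, I would set
\[
 m^{2\zeta} \;=\; -(\lambda_p+\lambda_d)\, T^{\Lambda}_{0} \;+\; \lambda^{2}\, M^{\Lambda}_{0}(0)\,,
\]
where $T^{\Lambda}_{0}$ and $M^{\Lambda}_{0}(0)$ denote the limits as $k\to 0$ of $T^{\Lambda}_{k}$ and $M^{\Lambda}_{k}(0)$, computed with the ansatz \eqref{eq:Gansatz} at a fixed $Z$. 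With this choice, $m_k^{2\zeta} = (\lambda_p+\lambda_d)\bigl(T^{\Lambda}_{k}-T^{\Lambda}_{0}\bigr)-\lambda^{2}\bigl(M^{\Lambda}_{k}(0)-M^{\Lambda}_{0}(0)\bigr)$, so the statement reduces to two checks: (a) $T^{\Lambda}_{0}$ and $M^{\Lambda}_{0}(0)$ are finite at finite $\Lambda$, and (b) the two differences on the right-hand side vanish as $k\to 0$.

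For (a), I would use the Schwinger parametric representations already displayed in the paper. The tadpole is elementary: the integrand $\alpha^{\zeta-1-d/2}=\alpha^{-d/4-1}$ is integrable at $\alpha\to\infty$ while the UV cutoff $\Lambda^{-2}$ handles the lower endpoint, giving explicitly $T^{\Lambda}_{0} = \tfrac{4\Lambda^{d/2}}{d\,Z\,\Gamma(\zeta)(4\pi)^{d/2}}$. The melon at zero external momentum requires a case analysis on the domain $[\Lambda^{-2},\infty)^{3}$: after a global rescaling $\alpha_{i}=s\beta_{i}$ with $\sum_i \beta_{i}=1$, the integrand together with the Jacobian decays as $s^{-d/4-1}$ at large $s$ in the interior region where the $\alpha_{i}$ are comparable. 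The asymmetric subregions where one or two of the $\alpha_{i}$ tend to infinity while the rest stay bounded are controlled by the same exponent $-d/4-1$ in each large variable (obtained because the denominator $\alpha_1\alpha_2+\alpha_1\alpha_3+\alpha_2\alpha_3$ is bounded from below by the product of the two largest $\alpha_i$), the bounded variables being regulated by the UV cutoff and by $\zeta-1=d/4-1>-1$. The crucial hypothesis is $\zeta=d/4$ with $0<d<4$.

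For (b), the differences $T^{\Lambda}_{0}-T^{\Lambda}_{k}$ and $M^{\Lambda}_{0}(0)-M^{\Lambda}_{k}(0)$ are integrals of non-negative integrands over nested subdomains that exhaust $[\Lambda^{-2},\infty)$ and $[\Lambda^{-2},\infty)^{3}$ respectively as $k\to 0$; monotone convergence together with the integrability established in (a) yields the desired vanishing. The main technical obstacle lies in the melon case analysis of (a): the interior regime is settled by a single rescaling, but each asymmetric subregion must be split off and bounded individually, and the estimates in each piece rely on the strict inequality $\zeta<d/2$ which makes the large-parameter exponent $-d/4-1$ strictly less than $-1$. This is the only nontrivial input; the rest is elementary power counting in Schwinger parameters.
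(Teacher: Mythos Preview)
Your proposal is correct and follows the same underlying idea as the paper's proof: choose the bare mass to cancel the $k\to 0$ limits of the tadpole and zero-momentum melon contributions. The execution differs in an instructive way, though.

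The paper computes $T^{\Lambda}_k$ and $M^{\Lambda}_k(0)$ \emph{explicitly} via Hepp sectors (Appendix~\ref{app:melon}), finding that both are proportional to $(\Lambda^{d/2}-k^{d/2})$; one then reads off $m^{2\zeta}=-A(\lambda_p,\lambda_d,\lambda)\Lambda^{d/2}$ and obtains the exact remainder $m_k^{2\zeta}=-A\,k^{d/2}$. Your route is more abstract: you only argue that the $k\to 0$ limits $T^{\Lambda}_0$, $M^{\Lambda}_0(0)$ exist at fixed $\Lambda$ (by power-counting bounds in Schwinger parameters) and then invoke monotone convergence for the differences. This is perfectly sufficient for the proposition as stated, and arguably cleaner, but it forgoes the explicit rate $m_k^{2\zeta}=O(k^{d/2})$ that the paper obtains for free. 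One small comment on your melon bound: the split into ``interior'' and ``asymmetric'' regions is not actually needed. Your key inequality --- the Symanzik denominator is bounded below by the product of the two largest $\alpha_i$ --- already controls the \emph{entire} sector $\alpha_1\le\alpha_2\le\alpha_3$ in one shot: the resulting majorant $\alpha_1^{d/4-1}(\alpha_2\alpha_3)^{-d/4-1}$ integrates iteratively over $[\Lambda^{-2},\infty)^3\cap\{\alpha_1\le\alpha_2\le\alpha_3\}$ to a finite multiple of $\Lambda^{d/2}$. This replaces your case analysis with a single three-line estimate and is essentially the bound underlying the paper's Hepp-sector computation.
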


\begin{proof}
 
 The tadpole integral is:
 \[
 \begin{split}
  T^{\Lambda}_k  = \int_{q}   G^{\Lambda}_k(q) & =  \frac{1}{Z} \int_{\Lambda^{-2}}^{k^{-2}}  d\alpha  \int_q  \; \frac{ \alpha^{\zeta-1}}{\Gamma(\zeta)} e^{ -\alpha q^2} = \frac{1}{ Z  (4\pi)^{d/2} }\; \frac{-4}{\Gamma(d/4)d} \left( (k^2)^{\frac{d}{4}} - (\Lambda^2)^{\frac{d}{4}} \right) \;,
 \end{split}
 \]
 where we used $\zeta = \frac{d}{4}$. Combining this with the $M^{\Lambda}_k(0)$ integral computed in Appendix \ref{app:melon} we get
 $m_k^{2\zeta} = m^{2\zeta} + A(\lambda_{p},\lambda_{d},\lambda) ( \Lambda^{d/2} 
 - k^{d/2} ) $ with:
 \[
A(\lambda_{p},\lambda_{d},\lambda)=\frac{4\left( \lambda_{p } + \lambda_{d}  \right)}{Z(4\pi)^{d/2}\Gamma(d/4)d}-\lambda^2\frac{24}{dZ^3(4\pi)^d\Gamma(d/4)^3}\int_1^{\infty}dx \int_1^{\infty}dy \frac{x^{-1}y^{d/4-1}}{\left(1+y+xy\right)^{d/2}} \;,
 \]
Choosing 
\be \label{eq:baremass}
m^{2\zeta} = - A(\lambda_{p},\lambda_{d},\lambda) \Lambda^{d/2} \;,
\ee
we obtain:
\[
 m^{2\zeta}_k =- k^{d/2} A\left( \lambda_p, \lambda_d, \lambda \right) \;,
\]
which goes to 0 when sending $k\to 0$. 
\end{proof}

 \begin{proposition}
Choosing $m^{2\zeta}$ as in \eqref{eq:baremass}, the Schwinger-Dyson equation \eqref{eq:SDEcutoff} is solved by the ansatz \eqref{eq:Gansatz}, with $\zeta= \frac{d}{4}$, and with $Z$ satisfying
\be\label{eq:wave}
Z^4 - Z^3  =   \lambda^2 \frac{1}{(4\pi)^d } \;\frac{\Gamma \left( 1 -\frac{d}{4} \right) }{ \frac{d}{4}\Gamma\left( 3 \frac{d}{4}\right)} \;.
\ee
\end{proposition}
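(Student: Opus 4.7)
The plan is to substitute the ansatz \eqref{eq:Gansatz} into the subtracted SDE \eqref{eq:substSDE}, use the mass choice from the previous proposition to cancel the $p$-independent divergences, and extract the $p^{2\zeta}$ coefficient of the subtracted melon integral $M^{\Lambda}_k(p)-M^{\Lambda}_k(0)$ in the limit $\Lambda\to\infty$, $k\to 0$. With $m_k^{2\zeta}=-k^{d/2}A(\lambda_p,\lambda_d,\lambda)\to 0$ and $\chi^{\Lambda}_k(p)\to 1$ for every fixed $p\neq 0$, equation \eqref{eq:substSDE} reduces in the limit to
\begin{equation*}
Z p^{2\zeta} \;=\; p^{2\zeta} \,-\, \lambda^2\lim_{\Lambda\to\infty,\,k\to 0}\bigl(M^{\Lambda}_k(p)-M^{\Lambda}_k(0)\bigr),
\end{equation*}
so the problem boils down to computing this limit and reading off its coefficient.

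For the rescaling step I would set $\alpha_i=\beta_i/p^2$ in the Schwinger representation of $M^{\Lambda}_k$ written just above Proposition~1. The Jacobian $(p^2)^{-3}$, the factor $(\alpha_1\alpha_2\alpha_3)^{\zeta-1}$, and the denominator $(\alpha_1\alpha_2+\alpha_1\alpha_3+\alpha_2\alpha_3)^{-d/2}$ combine into an overall $(p^2)^{d-3\zeta}=p^{2\zeta}$, using crucially $\zeta=d/4$, and the bounds rescale to $[p^2/\Lambda^2,p^2/k^2]\to (0,\infty)$. The subtracted integrand carries $e^{-V(\beta)}-1$ with $V(\beta)=\beta_1\beta_2\beta_3/(\beta_1\beta_2+\beta_1\beta_3+\beta_2\beta_3)$: near the origin the subtraction provides an extra factor $V(\beta)$ that makes the radial density integrable for $d<4$, while at large $\beta$ the bare integrand decays fast enough for convergence. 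Dominated convergence then yields
\begin{equation*}
\lim_{\Lambda\to\infty,\,k\to 0}\bigl(M^{\Lambda}_k(p)-M^{\Lambda}_k(0)\bigr) \;=\; \frac{p^{2\zeta}}{Z^3(4\pi)^d\Gamma(\zeta)^3}\int_{(0,\infty)^3}\!\!\frac{(\beta_1\beta_2\beta_3)^{\zeta-1}\bigl(e^{-V(\beta)}-1\bigr)}{(\beta_1\beta_2+\beta_1\beta_3+\beta_2\beta_3)^{d/2}}\,d^3\beta.
\end{equation*}

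To evaluate the limiting integral I would factorize the scale by setting $\beta_i=\sigma x_i$ with $x_i\geq 0$ and $x_1+x_2+x_3=1$ (Jacobian $\sigma^2$). The radial $\sigma$-integral then decouples and gives, for $0<d/4<1$,
\begin{equation*}
\int_0^{\infty}\!\sigma^{-1-d/4}\bigl(e^{-\sigma U(x)}-1\bigr)d\sigma \;=\; \Gamma(-d/4)\,U(x)^{d/4} \;=\; -\frac{\Gamma(1-d/4)}{d/4}\,U(x)^{d/4},
\end{equation*}
with $U(x)=x_1x_2x_3/(x_1x_2+x_1x_3+x_2x_3)$, and the leftover simplex integral collapses to $\Gamma(d/4)^3/\Gamma(3d/4)$, the master value of the symmetric three-propagator melon at $\zeta=d/4$. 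One way to confirm this value is to Feynman-parametrize the momentum-space melon $M(p)$ directly, integrate out the loop momenta after completing the square (which reproduces $U(x)$ as the coefficient of $p^2$), and then match with the position-space evaluation $\int d^dx\,e^{-ipx}|x|^{-3d/2}$ obtained by analytic continuation. Plugging everything back into the reduced SDE above and matching the $p^{2\zeta}$ coefficients reproduces exactly \eqref{eq:wave}.

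The main obstacle is the rigorous interchange of limits and integration in the rescaling step: since the subtraction is performed at zero momentum on a massless theory, this is precisely the BPHZ-at-zero-momentum subtlety flagged at the end of Section~\ref{sec:RG}. The cleanest way around it is to split the $\beta$-domain into $\beta\lesssim 1$ and $\beta\gtrsim 1$ and estimate the cutoffed integrand $(e^{-V\chi^{\Lambda}_k}-\chi^{\Lambda}_k)$ by $\min(V,1)$ up to constants on each region, thus producing an integrable envelope uniform in $\Lambda,k$ to which dominated convergence applies.
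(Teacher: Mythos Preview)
Your proposal is correct and arrives at the same result as the paper, but the route to the numerical constant is different. The paper (Appendix~\ref{app:melon}) writes the subtraction as a Taylor remainder $e^{-p^2V}=1-p^2\int_0^1 dt\,V\,e^{-tp^2V}$, then rescales $\alpha\to\alpha/p^2$, uses Hepp sectors $\alpha_1<\alpha_2<\alpha_3$ with $\alpha_3=\rho,\alpha_2=x\rho,\alpha_1=xy\rho$ to control convergence, and finally evaluates $f(0,0)$ via the quadratic substitution $u=\alpha_1\alpha_2,\,v=\alpha_1\alpha_3,\,w=\alpha_2\alpha_3$ followed by simplex coordinates that reduce everything to Beta functions. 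You instead work directly with $e^{-V}-1$, factor out the radial scale $\beta_i=\sigma x_i$, and recognise the $\sigma$-integral as $\Gamma(-d/4)\,U(x)^{d/4}$; this is slicker, but it pushes the non-trivial content into the remaining simplex integral, which you do not compute but identify by comparison with the known position-space melon. The paper's $(u,v,w)$ trick is the honest alternative: it evaluates that very integral from scratch without appealing to an external result.

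Two minor points. First, your convergence discussion is looser than the paper's: the Hepp-sector bound in Appendix~\ref{app:melon} gives an explicit integrable majorant uniformly in the cutoffs, whereas your proposed envelope $\min(V,1)$ is schematic (and your expression ``$e^{-V\chi^{\Lambda}_k}-\chi^{\Lambda}_k$'' is not how the cutoff actually enters --- after rescaling it lives in the integration limits $[p^2/\Lambda^2,p^2/k^2]$, not as a multiplicative factor). Second, near the simplex boundary $U(x)\to 0$, so the large-$\sigma$ decay is not uniform in $x$; integrability is still fine because the leftover $x$-integrand $(x_1x_2x_3)^{d/2-1}(x_1x_2+x_1x_3+x_2x_3)^{-3d/4}$ is integrable on its own, but this deserves a sentence if you want the dominated-convergence step to be airtight.
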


\begin{proof}

We show in Appendix \ref{app:melon} that:
\[
M^{\Lambda}_k(p)
   =M^{\Lambda}_k(0)- \frac{p^{2d-6\zeta} f\left( \frac{ k^2 }{p^2}, \frac{p^2}{\Lambda^2} \right)}{Z^3(4\pi)^d \Gamma(\zeta)^3} \;, \qquad
 f(0,0)  =\frac{\Gamma(1-d+3\zeta) \Gamma(d/2-\zeta)^3 }{( d-3\zeta ) \Gamma(3d/2-3\zeta)}\;.
\]
Choosing $m^{2\zeta}$ as in  \eqref{eq:baremass} to exactly cancel the UV divergent pieces arising from $T^{\Lambda}_k $ and $M^{\Lambda}_k (0)$,
we obtain a renormalized mass $m^{2\zeta}_k $ which is tuned to criticality
$\lim_{k\to 0} m_k^{2\zeta} = 0 $. With this choice, 
we can take $\Lambda \to \infty, k\to 0$ in Eq.~\eqref{eq:substSDE} and obtain:
\[
\begin{split}
 (Z-1) p^{2\zeta} =    \lambda^2p^{2d-6\zeta} \frac{ f(0,0) }{Z^{3}(4\pi)^{d} \Gamma(\zeta)^{3} }  \; ,
\end{split}
\]
which is solved by $\zeta= \frac{d}{4}$ and
\[
Z^4 - Z^3  =   \lambda^2 \frac{1}{(4\pi)^d } \;\frac{\Gamma \left( 1 -\frac{d}{4} \right) }{ \frac{d}{4}\Gamma\left( 3 \frac{d}{4}\right)} \;.
\]
\end{proof}

Notice that dropping the $Z^3$ term in \eqref{eq:wave} (which comes from the inverse free covariance) we recover the result of \cite{Klebanov:2016xxf}.

\section{The four-point couplings}
\label{sec:4point}
 
We denote $\hat \delta^{p}_{\mba \mbb ; \mbc \mbd} = \frac{1}{N^2} \delta^{p}_{\mba \mbb ; \mbc \mbd}$ and $\hat \delta^d_{\mba \mbb ; \mbc \mbd} = \frac{1}{N^3} \delta^{d}_{\mba \mbb ; \mbc \mbd}$
the rescaled pillow and double-trace contraction operators. The four-point function:
\begin{align}\label{eq:2.0}
 \braket{\varphi_{A}  \varphi_{B} \varphi_{C} \varphi_{D} }^c  =2 \left( \frac{{\cal K}}{1 - {\cal K}} {\cal S} \right)_{ AB; C'D' }  G_{C'C}  G_{D'D}  \; ,
\end{align}
is computed in terms of the four-point kernel ${\cal K}$ which at leading and next-to-leading order in $1/N$ is, using the shorthand notation $G_{xy} = G(x,y)$:
\begin{equation}
\begin{split}
  {\cal K}_{ (\mba,x')(\mbb,y') ; (\mbc,z)(\mbd,w) } = & G_{x'x} G_{y'y}  \bigg[  
    -  \lambda_p   \delta_{xy} \delta_{xz} \delta_{xw} \hat \delta^p_{\mba\mbb ; \mbc\mbd} -  \lambda_d  \delta_{xy} \delta_{xz} \delta_{xw} \hat \delta^d_{\mba\mbb ; \mbc\mbd}   
    + 3 \lambda^2 G_{xy}^2 \delta_{xz}\delta_{yw} \hat \delta^p_{\mba\mbb ; \mbc\mbd} \crcr
    & -   \frac{\lambda}{N^{3/2}} \delta_{xy}  \delta_{xz} \delta_{xw}   \left( \frac{  \delta^t_{\mba\mbb   \mbc\mbd} +\delta^t_{\mba\mbb   \mbd\mbc} +  \delta^t_{\mba  \mbc \mbb  \mbd} +  \delta^t_{\mba  \mbd \mbb  \mbc} 
    +  \delta^t_{\mba \mbc \mbd   \mbb} + \delta^t_{\mba \mbd \mbc   \mbb} }{2} \right)
 \bigg] \; .
\end{split} 
\end{equation}

Due to the $O(N)^3$ symmetry as well as the color permutation symmetry, minus the amputated four-point function
$\Gamma^{(4)}_{ (\mba,x) (\mbb,y)  (\mbc,z) (\mbd,w) } $ is a sum of three classes of terms.

\paragraph{Tetrahedral terms.} We have six tetrahedral terms:
    \[
   \sum_{ (\mbb,y) (\mbc,z) (\mbd,w) }
      \delta^{t}_{\mba \mbb\mbc\mbd} \Gamma^{(4,t)}_{xyzw}  \;, 
     \qquad \Gamma^{(4,t)}_{xyzw}  =  \frac{  \lambda }{4 N^{3/2}} \delta_{xy} \delta_{xz} \delta_{xw} \;,
    \]
where the sum runs over the six permutations of the couples $(\mbb,y) (\mbc,z) (\mbd,w) $, coming from the next-to-leading order contribution to the kernel.
The effective tetrahedral coupling is then exactly 
\be\label{eq:tetra}
  g =k^{d-4\zeta} \frac{\lambda } {Z^2} \;.
\ee
 
\paragraph{Pillow and double-trace terms.} We have three pillow and three double-trace terms, corresponding to the three channels $(\mba,x) (\mbb,y)  \to (\mbc,z) (\mbd,w) $,
$(\mba,x) (\mbc,z)   \to (\mbb,y) (\mbd,w) $ and $(\mba,x) (\mbd,w)  \to (\mbb,y)   (\mbc,z)  $. We write minus the pillow 1PI four-point function as:
    \[
       2  \left( \hat \delta^{p}_{\mba\mbb; \mbc\mbd} \Gamma^{(4,p)}_{xy; zw} +   \hat \delta^{p}_{\mba\mbc; \mbb\mbd} \Gamma^{(4,p)}_{xz; yw}  +   \hat \delta^{p}_{\mba\mbd; \mbb\mbc} \Gamma^{(4,p)}_{xw; yz} \right) \;,
    \]
where the factor 2 is conventional. The double-trace contribution is obtained by changing the superscript $p$ to $d$. At leading order in $1/N$ the sum of the pillow and double-trace contributions in one channel is:
\be
\begin{split}
   -  \hat \delta^{p}_{\mba\mbb; \mbc\mbd} \Gamma^{(4,p)}_{xy; zw}  -  \hat \delta^{d}_{\mba\mbb; \mbc\mbd} \Gamma^{(4,d)}_{xy; zw}    =
      G^{-1}_{xx'} G^{-1}_{yy'}  \left( \frac{K}{1-K} \right)_{x'y';zw} \;,
\end{split}
\ee
where $K$ is the on shell leading order four-point kernel:
\[
 K_{ (\mba,x')(\mbb,y') ; (\mbc,z)(\mbd,w) } =   G_{x'x} G_{y'y}  \bigg[  
    -  \lambda_p   \delta_{xy} \delta_{xz} \delta_{xw} \hat \delta^p_{\mba\mbb ; \mbc\mbd} -  \lambda_d  \delta_{xy} \delta_{xz} \delta_{xw} \hat \delta^d_{\mba\mbb ; \mbc\mbd}   
    + 3 \lambda^2 G_{xy}^2 \delta_{xz}\delta_{yw} \hat \delta^p_{\mba\mbb ; \mbc\mbd} 
 \bigg] \; .
\]
In momentum space, we have $\Gamma^{(4,d) }_{p_1p_2p_3p_4} = (2\pi)^d \delta(p_1+p_2+p_3+p_4 )\Gamma^{(4,d)}(p_1,p_2,p_3,-p_1-p_2-p_3)$,
and the four-point function in the channel $\mba \mbb  \to \mbc \mbd $ is:
\be\label{eq:fourpointdef}
  - \hat \delta^{p}_{\mba\mbb; \mbc\mbd} \Gamma^{(4,p)}_{p_1p_2;  r_1   r_2  } - \hat \delta^{d}_{\mba\mbb; \mbc\mbd } \Gamma^{(4,d)}_{p_1p_2;  r_1  r_2  } 
 =   \frac{1}{  G(p_1)  G(p_2)   } \left(  \frac{K}{1-K} \right)_{\mba\mbb; \mbc\mbd}( p_1, p_2 ; r_1 , r_2 ) \; ,
\ee
with $1$ the identity operator on bilocal functions $1 = (2\pi)^{2d} \delta(p_1-q_1) \delta(p_2-q_2)$ and the four-point kernel $K$ in momentum space:
\be
\begin{split}\label{eq:fourpointkernel}
  K_{ p_1,p_2; q_1,q_2 } = (2\pi)^d \delta(p_1+p_2 - q_1 -q_2)   G(p_1) G(p_2)  
 \bigg[ \hat \delta^{p }  \; 3  \lambda^2 \int_q  G(q) G(q + p_1 -q_1)     -   \hat \delta^{p }  \lambda_{p}   - \hat \delta^{d }  \lambda_{d}  \bigg] \; .
\end{split}
\ee

When expanding the geometric series in $K$ we need to deal with powers of $\hat \delta^p$ and $\hat \delta^d$. This is slightly unpleasant as $\hat \delta^p \hat \delta^p =  \frac{1}{3} \hat \delta^p + \frac{2}{3} \hat \delta^d$, 
$\hat \delta^p \hat \delta^d =  \hat \delta^d$ and $\hat \delta^d \hat \delta^d = \hat \delta^d$, that is $\hat \delta^p$ and 
$\hat \delta^d$ are not mutually orthogonal. In particular, the pillow and double-trace couplings mix. It is convenient to parameterize the interaction in terms of two independent couplings which do not mix. The operators:
\be
 P_1 = 3( \hat \delta^p - \hat \delta^d ) \; ,\qquad P_2 = \hat \delta^d \;,
\ee
are two mutually orthogonal projectors which span the interaction space.\footnote{This corresponds to the traceless-trace decomposition in the intermediate field representation of the pillow and double-trace interactions \cite{Benedetti:2018ghn}.} We parametrize the interaction in terms of  
$\lambda_1 = \frac{\lambda_p}{3}$ and $\lambda_2 = \lambda_d + \lambda_p$.
Thus, the four-point kernel in momentum space becomes:
\begin{align}\label{eq:newker}
   K_{ p_1,p_2; q_1,q_2 } = & (2\pi)^d \delta(p_1+p_2 - q_1 -q_2)   G(p_1) G(p_2)  \crcr
&\qquad \bigg[ \bigg( \lambda^2 \int_q G(q)G(q+p_1-q_1)       -\lambda_1 \bigg) P_1 + \bigg( 3\lambda^2 \int_q G(q)G(q+p_1-q_1) -  \lambda_2 \bigg) P_2 
\bigg]\; ,
\end{align}
and the effective four-point function (where 
$\Gamma^{(4;1)}  = \frac{1}{3}\Gamma^{(4,p)} $ and $\Gamma^{(4;2)}  = 
\Gamma^{(4,d)} +  \Gamma^{(4,p)} $):
\be\label{eq:reparam}
- \Gamma^{(4;1)} P_1 - \Gamma^{(4;2)} P_2 = G^{-1}G^{-1} \frac{K}{1-K} \; .
\ee

\subsection{The bare expansion}

Eq.~\eqref{eq:reparam} and \eqref{eq:newker} are used to obtain the bare expansion of the running couplings:
\be
g_{1} = k^{d-4\zeta} \frac{\Gamma_k^{(4;1)}(0,0,0,0)}{Z^2} \;,\qquad
g_{2} = k^{d-4\zeta} \frac{\Gamma_k^{(4;2)}(0,0,0,0)}{Z^2}  \; ,
\ee 
as decoupled series in the bare couplings. The two cases are identical, up to replacing $\lambda^2$ by $3\lambda^2$, hence we discuss below only $g_1$. 
\begin{figure}[ht]
\begin{center}
\includegraphics[scale=1.3]{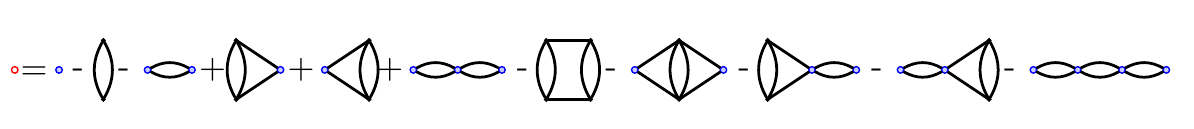} 
 \caption{The bare series up to quartic order. For $g_1$ the blue vertices represent $\lambda_1$ and the rungs contribute $\lambda^2$, while for $g_2$
 the blue vertices represent $\lambda_2$ and the rungs 
 contribute $3\lambda^2$.} \label{fig:bare1}
 \end{center}
\end{figure}

The bare series is a sum over connected amputated chain graphs $\cG$ depicted in Fig.~\ref{fig:bare1}.  A connected amputated chain graph $\cG$ is a sequence of irreducible pieces connected one to another by pairs of parallel horizontal edges. The irreducible pieces are either vertical ladder rungs with two tetrahedral couplings or bare vertices $\lambda_1$. There are $2^{n}$ chain graphs with $n$ irreducible parts (that is vertical rungs or bare vertices). To each graph we associate an
\emph{amplitude}:
\be\label{eq:ampli1}
 A(\cG) = \int_{\Lambda^{-2}}^{k^{-2}} 
 \left( \prod_{e\in \cG} d\alpha_e \;  \alpha_e^{\zeta-1}\right) \; 
 \frac{ 1 }{ \big[ \sum_{ {\cal T} \subset \cG} \prod_{e\notin {\cal T} } \alpha_e \big]^{d/2} } \;,
\ee
where $e\in \cG$ denotes the edges of $\cG$ and ${\cal T}$ runs over the spanning trees in $\cG$ (see for example \cite{Rivasseau:1991ub,Krajewski:2008fa}).
For instance, the amplitudes of the graphs depicted on the right hand side in Fig.~\ref{fig:bare1} are, from left to right:
\[
1 \, , \;\;  U_1 \, , \;\;  T_0 \, , \;\;  S_1 \, , \;\; S_1 \, , \;\; T_0^2 \, , \;\; 
U_2 \, , \;\; T_1 \, , \;\; S_1T_0\, , \;\;  T_0S_1 \, , \;\; T_0^3 \;,
\]
where $S_1$, $T_0,T_1$ and $U_1,U_2$ denote the integrals (to simplify the notation we suppress the measure):
\begin{align}\label{eq:SDTU}
 U_1 = & T_0 =   \int_{\Lambda^{-2}}^{k^{-2}} \frac{(a_1 a_2)^{\zeta-1}}{ (a_1 + a_2)^{d/2}} \equiv D  \; , \qquad
 S_1 =  \int_{\Lambda^{-2}}^{k^{-2}} \frac{(a_1 a_2 b_1 b_2)^{\zeta-1}}{ \big[(a_1 + a_2)(b_1 + b_2 ) + a_1 a_2 \big] ^{d/2}}  \;,  \crcr
 U_2 = & \int_{\Lambda^{-2}}^{k^{-2}} \frac{  (a_1 a_2 b_1 b_2 c_1  c_2)^{\zeta-1}   }
  { \big[ (a_1 + a_2)(b_1 + b_2) (c_1 + c_2) + (a_1 + a_2) c_1c_2 + a_1a_2(c_1+c_2)\big]^{d/2}  }  \;, \crcr
 T_1 = & \int_{\Lambda^{-2}}^{k^{-2}} \frac{  (a_1 a_2 b_1 b_2 c_1  c_2)^{\zeta-1}   }
  { \big[ (a_1 + a_2)(b_1 + b_2) (c_1 + c_2) +  b_1b_2(a_1+a_2+c_1+c_2) \big]^{d/2}  }  \;.
\end{align}
 Observe that (setting again $\zeta=d/4$) the amplitude of a graph diverges like some power of 
$ \ln (\Lambda^2/ k^2) $. 
 
The chain graph consisting in a bare vertex has amplitude $1$. We denote 
$\mathfrak{G}$ the set of connected chain graphs with at least two internal vertices. The number of tetrahedral vertices of $\cG\in\mathfrak{G}$, $n_t(\cG)$, is always even. We denote $n_1(\cG)$ the numbers of blue vertices of $\cG$. The graphs $\cG\in\mathfrak{G}$ are such that $n_t(\cG)+n_1(\cG) \ge 2$.
We rescale the bare and effective coupling as 
$\tilde g_1 = (4\pi)^{ - d/2} \Gamma(\zeta)^{-2} g_1$ and so  on. Forgetting the tilde, and recalling that $g = \lambda/Z^2$ the bare expansion writes: 
\begin{align*}
  g_1(\lambda_1, g)  =   \frac{\lambda_1}{Z^2}  
  +  \sum_{\cG\in \mathfrak{G}} (-1)^{1 + n_1(\cG) } 
      g^{n_{t}(\cG)} 
 \bigg( \frac{ \lambda_1}{Z^2}  \bigg)^{n_1(\cG)} 
 A(\cG)\; .
\end{align*}
The same formula holds for $g_2$ by replacing $g^2$ by $3g^2$.
Up to total degree $4$ in the coupling constants the bare expansion is:
\be\label{eq:Bare}
\begin{split}
  g_1(\lambda_1,g)  = &  \bigg(\frac{ \lambda_1}{Z^2}\bigg)- g^2 U_1  - 
  \bigg(\frac{ \lambda_1}{Z^2}\bigg)^2T_0 + 2g^2 \bigg(\frac{ \lambda_1}{Z^2}\bigg) S_1 + \bigg(\frac{ \lambda_1}{Z^2}\bigg)^3  T_0^2 \crcr
  &\qquad  
   -g^4 U_2 -g^2 \bigg(\frac{ \lambda_1}{Z^2}\bigg)^2 T_1 - 2g^2 \bigg(\frac{ \lambda_1}{Z^2}\bigg)^2 S_1 T_0 - \bigg(\frac{ \lambda_1}{Z^2}\bigg)^4 T_0^3 \;.
\end{split}
\ee 

 \paragraph{One vertex reducible graphs.}
\begin{figure}[ht]
\begin{center}
 \includegraphics[scale=1]{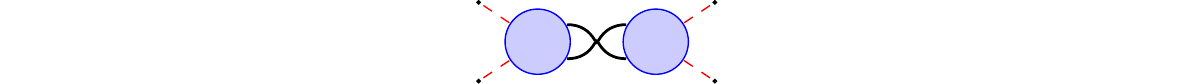} 
 \caption{One vertex reducible graph. We represented in dashed red the amputated external edges.} \label{fig:graphs3}
 \end{center}
\end{figure}

A graph is called one vertex reducible (1VR) if it disconnects into two nontrivial graphs (i.e graphs having internal vertices) by cutting a vertex (see Fig.~\ref{fig:graphs3}).
As there are no two-point subgraphs (we are using the full propagator), any 1VR four-point graph disconnects into two four-point graphs by cutting the vertex ``vertically'' and adding a pair of external edges on each resulting ``half vertex''. We write $\cG = \cG_1 \cG_2$. 
By this procedure, any four-point graph can be decomposed as a chain $\cG = \cG_1\dots \cG_q$ where $\cG_i$ are one vertex irreducible (1VI). The amplitude factors on the 1VI components : $A(\cG) = A(\cG_1)\dots A(\cG_q)$. We classify the 1VI graphs into three families:
\begin{itemize}
 \item The pure \emph{ladders} depicted in Fig.~\ref{fig:pureladders} consisting in a nonempty sequence of vertical ladders with tetrahedral vertices. We denote $U_r$ the amplitude of the ladder graph with $r$ rungs (by some abuse of notation we will denote the graph itself also $U_r$ when no confusion can arise). The sum over the ladders is:
 \be
    U(g) =  \sum_{r\ge 1} g^{2r} U_r  = g^2 D + 
  g^4 U_2 + \dots \;.
 \ee
\begin{figure}[H]
\begin{center}
\includegraphics[scale=1]{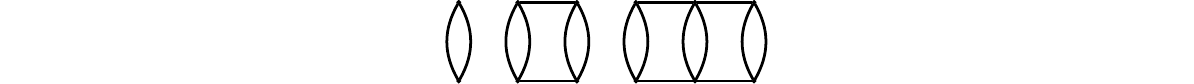} 
 \caption{The pure ladders $U_1,U_2$ and $U_3$.} \label{fig:pureladders}
 \end{center}
\end{figure}
 
 \item The ``v-ladders'' or \emph{caps}, that is ladders having a blue bare vertex at one end, depicted in Fig.~\ref{fig:pladders}. They consist in a blue bare vertex followed by a nonempty sequence of vertical ladder rungs with tetrahedral vertices. 
 We denote $S_r$ the amplitude of the cap with $r$ rungs (and the graph itself also $S_r$). The sum over the caps is:
 \be 
 S(g) =  \sum_{r \ge 1 }g^{2r} S_r 
        = g^2 S_1 + \dots \;.
 \ee
\begin{figure}[H]
\begin{center}
\includegraphics[scale=1]{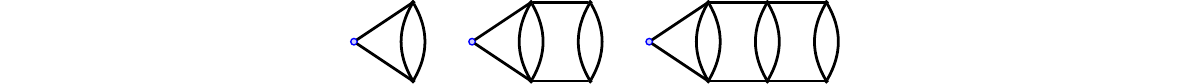} 
 \caption{The caps $S_1,S_2$ and $S_3$.} \label{fig:pladders}
 \end{center}
\end{figure}
 \item The ``vv-ladders'' or \emph{double caps} having a blue bare vertex at each end depicted in Fig.~\ref{fig:pladdersp}. They consist in a blue bare vertex followed by a \emph{possibly empty} sequence of vertical ladder rungs with tetrahedral vertices followed by a blue bare vertex. 
   We denote $T_r$ the amplitude of the double cap graph with $r$ rungs (and the graph itself also $T_r$). The sum over the double caps is:
   \be
  T (g) =  \sum_{r\ge 0} g^{2r } T_r
      = D + g^2 T_1 + \dots \;,
  \ee
\begin{figure}[H]
\begin{center}
\includegraphics[scale=1]{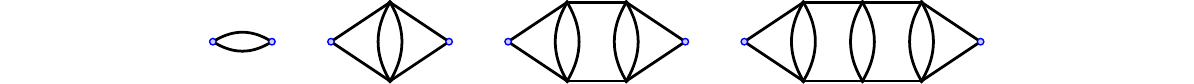} 
 \caption{Some double caps.} \label{fig:pladdersp}
 \end{center}
\end{figure}
\end{itemize}
Observe that in the generating functions $S(g)$ and $T(g)$ we have \emph{not} included any coupling constants for the blue vertices.
By counting the number of reducibility vertices in a graph the bare series is simply:
\begin{align}\label{eq:bare1VR}
 g_1 =&  -U(g) + \frac{\lambda_1}{Z^2} \big[ 1 + S(g) \big] \bigg[ \sum_{q\ge 0}
 \bigg(- \frac{ \lambda_1}{Z^2} \bigg)^q \sum_{r_1,\dots r_{q} \ge 0 }  \prod_{i=1}^q \lambda^{2r_i} T_{r_i} \bigg]
 \big[ 1 + S(g)\big] \crcr 
= &- U(g) + \bigg( \frac{\lambda_1}{Z^2} \bigg) 
\frac{\big[ 1+S(g) \big]^2}
 { 1 + \frac{ \lambda_1 }{Z^2} T( g ) } \;. 
\end{align}

A similar expression holds for the usual $\varphi^4$ model (with of course other types of graphs contributing too): $U,S,T$ can be computed directly starting from the four-point kernel, by separating the contribution of the bare vertex from the rest. The important difference is that in general $U,S$ and $T$ depend on $\lambda_1$, whereas in our case they depend only on the parameter $g$. This makes the $\beta$ function in our case particularly simple: as we will see below, the all orders $\beta$ function is only quadratic in the running coupling.

\paragraph{The renormalized expansion.}
The bare series can be inverted to yield the renormalized series. 
This is usually done by iterative substitutions:
\be
\frac{ \lambda_1 }{Z^2} = g_1 + \bigg[ \sum_{\cG\in \mathfrak{G}} (-1)^{n_1(\cG) } g^{n_t(\cG)} \bigg(\frac{ \lambda_1}{Z^2} \bigg)^{n_1(\cG)}A(\cG)\bigg]_{ \frac{\lambda_1}{Z^2} = 
g_1 + \sum_{\cG\in \mathfrak{G}} (-1)^{n_1(\cG) } g^{n_t(\cG)}
\big(\frac{\lambda_1}{Z^2} \big)^{n_1(\cG)}A(\cG)  } \; ,
\ee
as represented in the Figure \ref{fig:graphs2}. 
\begin{figure}[ht]
\begin{center}
\includegraphics[scale=1.3]{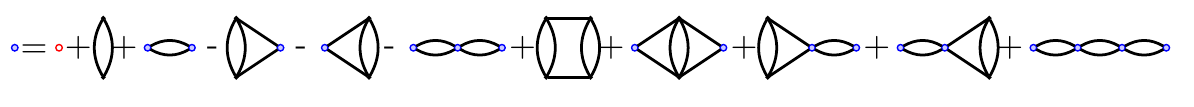} 
 \caption{The renormalized series by iterated substitutions. On the right hand side the 
 vertices $\lambda_1$ (in blue) should be iteratively substituted with the right hand side itself.} \label{fig:graphs2}
 \end{center}
\end{figure}

In the most general case the result of iterated insertions can be written compactly using Zimmermann forests \cite{Zimmermann:1969jj}.
However, in our case we can invert the bare series directly:
 \be\label{eq:renodirect}
  \frac{\lambda_1}{Z^2} = 
  \frac{g_1+U(g)}{ \big[1+S(g) \big]^2 - U(g)T(g) - g_1 T(g)}  \;.
\ee
For the usual $\varphi^4$, the 1VI expansion can not be inverted so simply:
while in our case Eq.~\eqref{eq:renodirect} is a series in $g_1$ whose coefficients are series in the parameter $g$, in the usual $\varphi^4$ the coefficients themselves depend on $g_1$.
The renormalized expansion in our case up to quartic order is:
\begin{align*}
 \frac{ \lambda_1}{Z^2} = & g_1 +  g^2 U_1 + g_1^2 T_0  -
 2 g^2 g_1 (S_1-U_1T_0)  
              +  g_1^3  T_0^2 +    g_1^4  T_0^3 \crcr
          &    + g^4 \big[ U_2 - 2 S_1 U_1  + U_1^2T_0 \big] 
              + g_1^2 g^2 \big[ T_1 - 4S_1T_0 + 3U_1T_0^2  \big] \; .
\end{align*}

\subsection{Beta functions} 
\label{sec:4point-beta}

Let us denote $ \pmb{\partial} = k\partial_k $.
The $\beta$ function can be computed in two ways. Usually one starts from the renormalized series $\lambda_1(k,g_1)$, where $k$ denotes the explicit dependence of the coefficients on the IR cutoff and  equates to zero the scale derivative of the bare coupling. This has the advantage that the $\beta$ function is directly written in terms of the running
coupling. In our case however, it is more convenient to derive directly the bare expansion in Eq.~\eqref{eq:bare1VR} and rewrite the derivative as:
\be\label{eq:beta}
 \beta_{g_1} = \pmb{\partial} g_1
 = \beta_0^{g}
  - 2  \beta_1^{g} g_1  + \beta_2^{g} g_1^2\;,
\ee
with the coefficients of the $\beta$ function given by:
\begin{align}\label{eq:coef}
 \beta_0^{g}  &= - \pmb{\partial} U + 2 \frac{U}{1+S} \pmb{\partial} S - \frac{U^2}{(1+S)^2} \pmb{\partial} T  
 = - g^2 \pmb{\partial} U_1 -
        g^4 \big[ \pmb{\partial}  U_2 - 2 U_1 \pmb{\partial} S_1 + U_1^2 \pmb{\partial} T_0 \big]  + O(g^6)
 \; ,\crcr
 \beta_1^{g} &= - \frac{1}{1+S} \pmb{\partial}S + \frac{U}{(1+S)^2} \pmb{\partial} T = -
 g^2 ( \pmb{\partial} S_1 - U_1 \pmb{\partial} T_0 ) 
     + O(g^4)
 \;, \\ \nonumber
 \beta_2^{g} &= - \frac{1}{(1+S)^2} \pmb{\partial} T
 = - \pmb{\partial} T_0 - g^2 ( \pmb{\partial} T_1 -2
          S_1 \pmb{\partial} T_0 ) + O(g^4)  \; .
\end{align}
We will discuss these coefficients further in section \ref{sec:divergences}.
At three loops the $\beta$ function is:
\be\label{eq:beta3loops}
 \beta_{g_1} = - g^2 \pmb{\partial} U_1 -  g^4 
 \big[ \pmb{\partial}  U_2 - 2  U_1 \pmb{\partial} S_1 + U_1^2 \pmb{\partial} T_0 \big]
 + 2g_1 \bigg[ g^2 ( \pmb{\partial} S_1 - U_1 \pmb{\partial} T_0 )\bigg]- g_1^2 \bigg[\pmb{\partial} T_0 + g^2 
 ( \pmb{\partial} T_1 -2 S_1 \pmb{\partial} T_0 ) \bigg] \;,
\ee
The $\beta$ functions of the original pillow and double-trace couplings $\lambda_p = 3\lambda_1,\lambda_d = \lambda_2 - 3\lambda_1$ can be reconstructed as: 
\begin{align*}
  \beta_{p} =&  3 \beta_0^{g} + 2\beta_1^{g} \, g_p + \frac{1}{3} 
  \beta_2^{g}\, g_p^2  \;, \crcr
\beta_{d} =&   \big[ \beta_0^{\sqrt{3}g}  - 3 \beta_0^{g} \big]
 + 2 \beta_1^{\sqrt{3}g} \; g_d + 2 
 \big[\beta_1^{\sqrt{3}g} - \beta_1^{g} \big] g_p  + 
 \beta_2^{ \sqrt{3}g }  \, g_d ^2 
+ 2 \beta_2^{ \sqrt{3}g } g_d  g_p +
 \big[ \beta_2^{\sqrt{3}g} - \frac{1}{3}  \beta_2^{ g} \big] g_p^2  
 \; ,
\end{align*}
which is at two loops, using $U_1 = T_0 = D$:
\begin{align*}
 \beta_{p} = & - 3g^2 \pmb{\partial} D + 2 g^2 ( \pmb{\partial} S_1 - D \pmb{\partial} D ) g_p - \frac{1}{3}\pmb{\partial} D g_p^2  \; , \crcr
 \beta_d = & 6 g^2 ( \pmb{\partial} S_1 - D \pmb{\partial} D ) g_d
  + 4 g^2 ( \pmb{\partial} S_1 - D \pmb{\partial} D ) g_p
   - \bigg(  g_p^2 + 2g_pg_d + \frac{2}{3} g_p^2\bigg)\pmb{\partial} D \;.
\end{align*} 
Notice that the $g_p$- and $g-d$-independent part of $ \beta_d$ starts at order $g^4$, as expected from the minimal resolution of the double-trace bubble in terms of tetrahedra (see Fig.~\ref{fig:graph}).

Lastly we can explicitly check that the lowest order coefficients of $\beta_{0,1,2}^{g}$ are convergent and given by:
\be
 \pmb{\partial} D = -4 \int_{k^2\Lambda^{-2}}^{1}
 d\alpha \;\frac{\alpha^{\zeta-1}}{(1+\alpha)^{d/2}} \to_{\Lambda\to\infty}
    - 2 \frac{\Gamma\left( \frac{d}{4}\right)^2}{ \Gamma \left( \frac{d}{2}\right)}  \;,
\ee
\be
 -  \frac{1}{4} (\pmb{\partial S} - D \pmb{\partial}D ) 
  = I_1 +I_2 \;,
\ee
with
\begin{align*}
 I_1 & = \int_0^1 \frac{ (a_1a_2 b_2)^{\zeta-1}}{[ (a_1+a_2)(1+b_2) + b_2]^{d/2}}  \; ,\crcr
 I_2 & = -\frac{d}{2} \int_0^1 du \int_0^1 
 \frac{ (a_2 b_1b_2)^{\zeta-1} b_1 b_2}{[ (1+a_2)(b_1+b_2) + u b_1b_2]^{d/2+1}} \; .
\end{align*}

\subsection{Flow and fixed points} 
\label{sec:4point-FP}

Being quadratic in $g_1$, the beta function \eqref{eq:beta} admits  two fixed points:
 \begin{align}
 g_{1\pm} =  \frac{
    \beta_1^{g} 
    \pm \sqrt{ (\beta_1^{g})^2 -\beta_0^{g}\beta_2^{g} }
    }{\beta_2^{g}}  = \pm\sqrt{-g^2} + O(g^2) \; .
   \end{align}  
In fact, we can even solve the full flow, in terms of the beta function coefficients \eqref{eq:coef}.
Taking $g$ to be purely imaginary and small, so that $(\beta_1^{g})^2 -\beta_0^{g}\beta_2^{g} >0$, we find
\be \label{eq:flow}
g_1(k) = \frac{
    \beta_1^{g} 
    - \sqrt{ (\beta_1^{g})^2 -\beta_0^{g}\beta_2^{g} } \,\tanh\left( \sqrt{ (\beta_1^{g})^2 -\beta_0^{g}\beta_2^{g} } \ln(k/k_0) + c  \right)
    }{\beta_2^{g}} \;,
\ee
where $c$ is an integration constant to be fixed by the initial condition $g_1(k_0)=\bar g_1$:
\be
c = {\rm arctanh}\left( \frac{\beta_1^{g} - \beta_2^{g} \bar g_1 }{ \sqrt{ (\beta_1^{g})^2 -\beta_0^{g}\beta_2^{g} } }  \right) \;.
\ee
We then see that $g_{1+}$ is an IR fixed point (reached for $k\to 0$) and $g_{1-}$ is a UV fixed point (reached for $k\to \infty$).

The corresponding critical exponents are:
   \be
  \beta'_{g_1}( g_{1\pm}) = \pm 2\sqrt{ (\beta_1^{g})^2 -\beta_0^{g}\beta_2^{g} }
     = \pm \sqrt{-g^2} \left( 4\frac{\Gamma(\frac{d}{4})^2}{ \Gamma(\frac{d}{2})}\right) + O(g^3)\;.
 \ee
 The beta function $\beta_{g_2}$ admits two fixed points and critical exponents of the same form, with $g\to\sqrt{3}g$.

\section{The Beta function coefficients}
\label{sec:divergences}
  
In this section we study the coefficients $\beta_0^{g}, \beta_1^{g}$ and $\beta_2^{g}$. 
As presented in Eq.~\eqref{eq:coef} they are ratios of sums of amplitudes of graphs which can be arbitrarily UV divergent. We will now show that:
\[
 - \beta_0^{g} = R \{ \pmb{\partial}U \} \;, \qquad 
 - \beta_1^{g} = R \{ \pmb{\partial}S \}  \;, \qquad
 - \beta_2^{g} = R \{ \pmb{\partial}T \} \;,
\]
with $R$ the BPHZ \cite{bogoliubow1957multiplikation,Hepp:1966eg,Zimmermann:1969jj} subtraction operator. 

\paragraph{The scale derivative.}
 
Let us consider an amputated graph $\cG$, and let $E(\cG)$, $n(\cG)$, $r(\cG)$ be the numbers of edges, vertices, and external half edges of $\cG$, respectively. Below, we are interested in 
chain graphs with $r(\cG)=4$. We have $ 2 E(\cG) =4 n(\cG) -r(\cG)$, and the amplitude of $\cG$ from Eq.~\eqref{eq:ampli1} is:
\be
 A(\cG) = \int_{\Lambda^{-2}}^{k^{-2}} 
  \left( \prod_{e\in \cG} d\alpha_e \, \alpha_e^{\zeta-1}\right) \; 
 \frac{ 1 }{ \big[ U_{\cG}(\alpha)\big]^{d/2} }  \;, \qquad U_{\cG}(\alpha) = \sum_{ {\cal T} \subset \cG} \prod_{e\notin {\cal T} } \alpha_e \;,
\ee
with $U_{\cG}(\alpha)$ the first Symanzik polynomial (see for instance \cite{Rivasseau:1991ub,Krajewski:2008fa}) of $\cG$. Clearly,  $U_{\cG}(\alpha)$ is a sum of positive terms and each monomial has the global degree $E(\cG)- n(\cG) +1$ in the variables $\alpha$.
We are interested in the scale derivative of the amplitude. After uniformly rescaling all the parameters by $k^{-2}$ and using $r(\cG)=4$, we have:
\be\label{eq:amplirescale}
  \pmb{\partial} A(\cG) =   -2 \sum_{e_0 \in \cG} 
  \int_{k^2\Lambda^{-2}}^{1} 
  \left(  \prod_{e\in \cG}^{e\neq e_0} d\alpha_e \, \alpha_e^{\zeta-1}\right)\; 
  \frac{ 1 }{ \big[ U_{\cG}(\alpha)
  \big]_{\alpha_{e_0}=1}^{d/2} } \; ,  
\ee
that is the scale derivative is a sum over graphs where one of the (rescaled) parameters has been set to its maximal value 1. 
We call this edge \emph{the marked edge} and we represent it as dashed.
Below we will need to consider the cap with zero rungs $\bullet$ which consists in a vertex (a blue vertex associated to $\lambda_1$) joined by two edges to the rest of the graph.

The derivatives can act on the horizontal or the vertical edges in a graph.
We call $H$ (for horizontal) the subgraph with two parallel horizontal edges, one of which is marked and $V$ (for vertical) the vertical rung with one marked edge. 
Correspondingly, we denote
$[ U_p H U_q ]$  the amplitude of a graph consisting in a ladder with $p\ge 1$ rungs followed by two horizontal edges, one of which is marked, followed by a ladder with $q\ge 1$ rungs
 and $[U_p V U_q]$ the amplitude of a graph consisting in a ladder with $p\ge 1$ rungs followed by a rung with a marked edge, followed by a ladder with $q\ge 1$ rungs. We denote their generating functions:
 \[ UHU = \sum_{p,q\ge 1}  g^{2p+2q} [ U_pHU_q ] \;, \qquad g^2 U VU  = \sum_{p,q\ge 1}  g^{2p+2q+2} [ U_p V U_q] \;. \] 
Several examples are depicted in Fig.~\ref{fig:Uh}. 
 
\begin{figure}[ht]
\begin{center}
\includegraphics[scale=1]{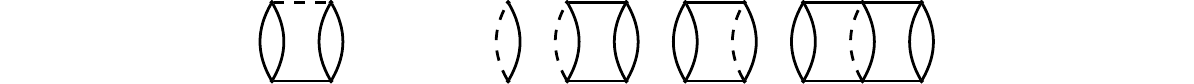} 
 \caption{From left to right, the graphs $ [ U_1 H U_1]$ and 
 $ V , [V U_1] ,  [U_1V ]$ and $[U_1VU_1]$.} \label{fig:Uh}
 \end{center}
\end{figure}
The scale derivative of $U_r$ is:
\begin{align*}
&\pmb{\partial} U_1 = (-4) V \;,\\ 
&\pmb{\partial} U_r = 
  (- 4) \bigg[  \sum_{p,q\ge 1}^{p+q = r}  [ U_p H U_q ]  + \sum_{p,q\ge 1}^{p+q = r-1}   [ U_p V U_q ] + [ V U_{r-1} ] + [ U_{r-1} V ] \bigg] \;,
   \;\;\;\; \forall r\ge 1\;,
\end{align*}
which for generating functions becomes:
\be\label{eq:derivU}
 \pmb{\partial} U = ( -4 ) \bigg\{ g^2  V   + 2 g^2 VU  + U ( H + g^2 V ) U  \bigg\}\;.
\ee
Similarly, we get, using obvious notation, (see Fig.~\ref{fig:Th}):
 \begin{align}\label{eq:derivT}
 & \pmb{\partial} T  = (-4 ) \bigg\{
  ( \bullet + S) (H +g^2 V ) ( \bullet + S ) \bigg\} \;,  \\ \label{eq:derivS}
& \pmb{\partial } S   =  (-4) \bigg\{
 (\bullet +S) \big[ g^2  V + ( H + g^2 V) U \big] \bigg\} \;.
 \end{align}

\begin{figure}[ht]
\begin{center}
\includegraphics[scale=1]{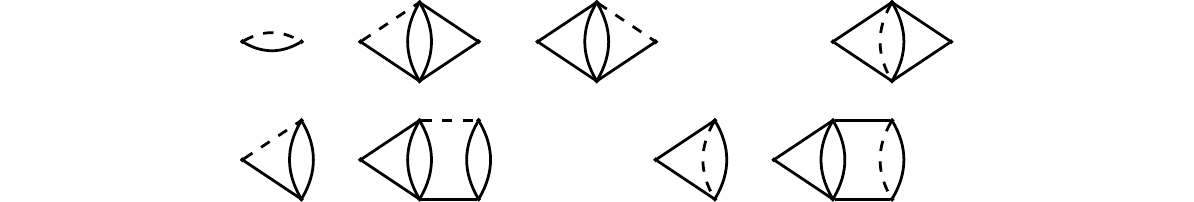} 
 \caption{Top row, from left to right the graphs $[\bullet H\bullet ] , 
 [\bullet H S_1], [S_1 H \bullet]  $ and $[\bullet  V \bullet]$.
 Bottom row, from left to right $[\bullet HU_1],[S_1HU_1]$ and
 $[\bullet V ],[S_1V ]$.
 } \label{fig:Th}
 \end{center}
\end{figure}

\paragraph{Taylor operators.}
A connected amputated subgraph $\gamma \subset \cG$ is a subset of the edges 
of $\gamma$. Like $\cG$, $\gamma$ contains all the end vertices of its edges. The external half edges (or legs) of $\gamma$ are either half edges of $\cG$ or come from the edges of $\cG$ which do not belong to $\gamma$ but are incident to vertices in $\gamma$. 

For any subgraph $\gamma$ of $\cG$, we denote $\tau_{\gamma}$ the ``localization'' operator acting on $A(\cG)$ which on the one hand separates the subgraph $\gamma$ and on the other, in $\cG$, it shrinks $\gamma$ to a vertex to give the graph $\cG/\gamma$:
\[
 \tau_{\gamma} A(\cG) = A(\cG/ \gamma) A(\gamma) \;. 
\]
The edges of $\cG$ are partitioned between the edges of $\gamma$ and the ones of $\cG/\gamma$. We denote $ \alpha_{\cG/\gamma} = \{ \alpha_e| e\in \cG/\gamma \} $ and $\alpha_{\gamma} = \{ \alpha_e| e\in \gamma \}$. Observe that 
$E(\cG) = E(\gamma) + E(\cG /\gamma)$ and $n(\cG) = n(\gamma) + n(\cG /\gamma) - 1$. Any spanning tree in $\cG$ is:
\begin{itemize}
 \item either the union of a spanning tree in $\gamma$ with a spanning tree in $\cG/\gamma$, in which case the global degree in  $\alpha_{\gamma}$ of the corresponding term in $U_{\cG}(\alpha)$ is exactly $ E(\gamma) - n(\gamma) +1$. Any tree in $\gamma$ and any tree in $\cG/\gamma$ lead to exactly one tree in $\cG$.
 \item or not, in which case the global degree in $\alpha_{\gamma}$ of the corresponding term in $U_{\cG}(\alpha)$ is at least $ E(\gamma) - n(\gamma) +2$.
\end{itemize}
It follows that under a uniform rescaling of $\alpha_{\gamma}$ by $u$ we have:
\[
 U_{\cG}(\alpha)\big{|}_{\alpha_{\gamma} = u\alpha_{\gamma}}  = 
 u^{   E(\gamma) - n(\gamma) + 1  }  
 \bigg{[} U_{\cG/\gamma}(\alpha_{\cG/\gamma} )U_{\gamma}(\alpha_{\gamma}) 
    + \sum_{q\ge 1} u^q 
    [ \alpha_{\gamma} ]^{E(\gamma) - n(\gamma) + 1 + q}  
    [\alpha_{\cG/\gamma}]^{ E(\cG/\gamma) - n(\cG/\gamma) +1 -q }
 \bigg{]} \;,
\]
where we indicated the global scaling with $\alpha_{\gamma}$
and $\alpha_{\cG/\gamma} $ of the corrections.
The localization operator $\tau_{\gamma}$ can be implemented as a Taylor operator
\cite{Bergere:1974zh,Bergere:1977ft,Bergere:1980sm}  acting on the integrand:
\begin{align}\label{eq:taylor}
 \tau_{\gamma}\frac{ 1 }{ [U_{\cG}(\alpha)]^{d/2}}  & = 
                            \frac{ u^{\frac{d}{2} \left[ E(\gamma) - n(\gamma) + 1\right] }  }{ \big[U_{\cG}(\alpha) \big{|}_{\alpha_{\gamma} = u\alpha_{\gamma}} \big]^{d/2}} \bigg|_{u\to 0} 
       = \frac{1}{[U_{\cG/\gamma}(\alpha_{\cG/\gamma})]^{d/2}} \;
       \frac{1}{[U_{\gamma}(\alpha_{\gamma})]^{d/2}}    \;, \crcr
(1 - \tau_{\gamma} )\frac{ 1 }{ [U_{\cG}(\alpha)]^{d/2}} & = 
 \int_0^1 du\; \frac{d}{du} \bigg\{     u^{\frac{d}{2} \left[ E(\gamma) - n(\gamma) + 1\right] }   \frac{ 1 }{ \big[U_{\cG}(\alpha) \big{|}_{\alpha_{\gamma} = u\alpha_{\gamma}} \big]^{d/2}}      \bigg\} \;.
\end{align}

\paragraph{Subtraction operator.}
We call the one particle irreducible four-point subgraphs\footnote{As we deal with graphs with no two-point subgraphs, all the connected four-point subgraphs are automatically one particle irreducible.} of 
$\cG$  \emph{dangerous}. 
An inclusion forest of dangerous subgraphs is a set of dangerous subgraphs which are either \emph{nested} or \emph{totally disjoint} (that is they do not have any vertex in common):
\[
 \bigg\{ \gamma \subset \cG \;, r(\gamma) =4 \; \bigg{|} \;\; 
  \forall \gamma_1,\gamma_2 \; \text{ either } \gamma_1 \subset \gamma_2\, \text{ or } \gamma_2 \subset \gamma_1 \, \text{ or }
   \gamma_1\cap \gamma_2 =\emptyset \bigg\} \;.
\]
We denote $\pmb{F}(\cG)$ the set of all the inclusion forests of dangerous subgraphs of $\cG$, including the empty forest. The BPHZ subtraction operator \cite{bogoliubow1957multiplikation,Hepp:1966eg,Zimmermann:1969jj}
is:
\[
 R  =\sum_{F \in \pmb{F}(\cG)} \prod_{\gamma \in F} (-\tau_{\gamma}) \;.
\]
The operator is well defined because the localization operators of graphs in a forest commute. 

We are concerned here with ladders, caps and double caps $U,S$ and $T$. In all these cases the dangerous proper subgraphs (i.e. different from the graph itself) have a  particularly simple structure.

\begin{lemma}
Any proper four-point subgraph $\gamma \subset \cG$ consists in a sequence of vertical rungs connected by horizontal edges. $\gamma$ can reach one end of the graph or not (the end is either a vertex with two external points for $T$ and for one end of $S$, or a rung with two external vertices for $U$ and for the other end of $S$).
\end{lemma}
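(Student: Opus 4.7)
I plan to prove the lemma by a combinatorial analysis of how a connected subgraph $\gamma$ can sit inside the chain structure of $\cG$. Since every vertex of $\cG$ has $\cG$-degree $4$ (the model is $\varphi^4$, and external legs of $\cG$ count toward this degree via amputation), one has the identity $r(\gamma) = 4n(\gamma) - 2E(\gamma)$; hence the constraint $r(\gamma) = 4$ is equivalent to $E(\gamma) = 2n(\gamma) - 2$. This simple counting, together with connectedness of $\gamma$, is essentially the only tool needed.

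First I parameterize $\gamma$ locally. For each rung $P_i$ (a bubble with two parallel internal edges connecting two vertices) let $\alpha_i \in \{0,1,2\}$ denote the number of its parallel edges in $\gamma$; for each pair of horizontal edges between $P_i$ and $P_{i+1}$ let $h_i \in \{0,1,2\}$ denote the number in $\gamma$. Blue-vertex pieces have no internal edges, so they enter $\gamma$ only through their incident horizontal edges. The vertices of $\gamma$ are then determined as the endpoints of the chosen edges. The plan is to show that $\alpha_i, h_i \in \{0,2\}$, and that the indices taking the value $2$ form a contiguous interval.

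The third step rules out the partial configurations $\alpha_i = 1$ and $h_i = 1$ via the leg budget. If $\alpha_i = 1$, both vertices of the rung $P_i$ are in $\gamma$ and the omitted parallel edge contributes two external legs of $\gamma$, one at each of these vertices. Likewise, if $h_i = 1$ the omitted horizontal edge contributes up to two external legs, one at each endpoint when those endpoints belong to $\gamma$ via other incident edges. A ``clean'' sub-chain already saturates the $4$-leg budget: an interior sub-chain has two boundaries each producing two external legs (from the two cut horizontal edges at each boundary), while a sub-chain reaching one end of $\cG$ has one such boundary plus the two external legs of $\cG$ at that end. Since each partial inclusion strictly increases $r(\gamma)$ above $4$, and partial inclusions at different places contribute legs at disjoint vertices so cannot cancel one another, none is allowed.

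With $\alpha_i, h_i \in \{0,2\}$ established, I conclude by using that $\cG$ is linear: any ``gap'' (an excluded piece lying between two included ones) would disconnect $\gamma$, since in a chain the only path between non-adjacent pieces traverses the intermediate ones, and a blue-vertex piece flanked by included horizontal pairs is automatically included through its incident edges. The set of included pieces is therefore a contiguous sub-chain; since $\gamma$ is proper it does not exhaust $\cG$, and so it includes at most one end of $\cG$. The type of that end (a blue vertex for $T$ and for one end of $S$; a rung for $U$ and for the other end of $S$) follows at once from the definitions of the three families $U$, $S$, $T$. The main obstacle will be the careful bookkeeping in the third step: one must verify that no combination of simultaneous partial rung inclusions and partial horizontal inclusions can conspire to restore $r(\gamma) = 4$, which is settled by the observation that such partials add external legs at distinct vertices and hence cannot cancel one another.
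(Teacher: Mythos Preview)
Your approach is correct and genuinely different from the paper's. The paper argues by local case analysis: it picks an internal vertex $v$ lying on a rung with edges $e_1,e_2$, and uses the 1PI property of $\gamma$ (which follows from the absence of two-point subgraphs in $\cG$) to rule out the cases $e_1,e_2\notin\gamma$ and $e_1\in\gamma,\ e_2\notin\gamma$ one by one, then iterates outward along the chain. Your argument is a global leg-budget count using $r(\gamma)=4n(\gamma)-2E(\gamma)$, parameterized by the local inclusion numbers $\alpha_i,h_i$. This is more systematic and avoids invoking 1PI-ness explicitly (it is recovered as a consequence, since clean sub-chains are automatically 1PI). The paper's method is more hands-on; yours gives a cleaner reason \emph{why} only full sub-ladders survive.

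One point deserves care in execution. In your third step you compare $\gamma$ to a ``clean sub-chain'' baseline and assert that each partial inclusion strictly increases $r$. But the clean sub-chain with the same support may have a different vertex set than $\gamma$: for instance if $h_i^t\in\gamma$, $h_i^b\notin\gamma$ and no other edge touches $v_{i}^b$ or $v_{i+1}^b$, then passing from $\gamma$ to the clean sub-chain adds vertices as well as edges, so the change in $r=4n-2E$ is not simply ``extra legs at existing vertices.'' Your observation that partials contribute legs at \emph{distinct} vertices is the right idea but does not by itself handle this. The cleanest fix is to do the contiguity step first (the support of $\gamma$ must be an interval by connectedness of the linear chain), and then, on that interval, compute $r(\gamma)$ directly as $\sum_{v\in\gamma}(4-\deg_\gamma v)$: the two interval boundaries already contribute $\ge 4$ legs, and any $\alpha_i=1$ or $h_i=1$ inside the interval forces an additional vertex $v$ with $\deg_\gamma v\le 3$ that is \emph{not} a boundary vertex, pushing $r>4$. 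With that reordering your plan goes through.
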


\begin{proof}
 Consider $\cG$ a ladder, cap or double cap and a four-point proper subgraph $\gamma\subset \cG$ (that is $\gamma$ is not $\cG$ itself):
\begin{itemize}
\item assume $\gamma$ does not contain any internal vertex of $\cG$ (i.e. a vertex which is not incident to external half edges of $\cG$). This is impossible for $S$. For $U$, $\gamma$ must consist in 
a vertical rung connecting two of the boundary vertices. For $T$, 
only $T_0$ has such a subgraph, $\gamma = T_0$, which is not proper.

\item assume $\gamma$ contains an internal vertex $v$. Then this vertex is part of a rung with edges $e_1,e_2$ connecting $v$ with $v'$.
    \begin{itemize}
     \item assume that neither $e_1$ nor $e_2$ belong to $\gamma$. Then both the horizontal edges incident at $v$ must belong to $\gamma$ (otherwise $\gamma$ is 1PR), and $v'$ must also belong to $\gamma$ (otherwise again $\gamma$ is 1PR). But $v$ and $v'$ already support four external points for $\gamma$, hence $\gamma$ has no additional external points, which is impossible.
     \item assume only $e_1$ belongs to $\gamma$ but $e_2$ does not.
     Consider the four horizontal edges incident to $v$ and $v'$
        \begin{itemize}
         \item  if neither one of them or only one of them belongs to $\gamma$ then $\gamma$ has more than four external points.
         \item  if exactly two of them belong to $\gamma$ then 
         there are already four external points supported by $v$ and $v'$, hence $\gamma$ can not have any additional external points which is impossible. 
         \item if only three of them belong to $\gamma$ then $\gamma$ is 1PR. 
         \item if all the four belong to $\gamma$, then $\gamma$ splits into a left part (to the left of the rung $e_1,e_2$) and a  right part (to the right of $e_1,e_2$). Each part brings at least two additional external points, which makes $\gamma$ at least a six point graph.
        \end{itemize}
    \end{itemize}
\end{itemize}

It follows that $\gamma$ must contain a rung with edges $e_1e_2$ connecting two internal vertices $v$ and $v'$. The iteration is now simple 
\begin{itemize}
 \item either $\gamma$ consists in only this rung,
 \item or  (as $\gamma$ is 1PI) it contains the pair of horizontal edges incident to $v$ and $v'$ pointing to the right (or the pair pointing to the left or both). These edges either
     \begin{itemize}
      \item reach the end of the graph which is either an external vertex with two external points ($T$ or one end of $S$) or a rung with two external points ($U$ or the other end of $S$). In the second case the rung must belong to $\gamma$, as two external points must come from the left of $\gamma$.
      \item reach a pair of internal vertices $w$ and $w'$ connected by a rung $e_1',e_2'$. But then the entire rung $e_1',e_2'$ belongs to $\gamma$ and we iterate.
     \end{itemize}
\end{itemize}

\end{proof}

Now, the graphs contributing to $\pmb{\partial}T,~\pmb{\partial}S$ and 
$\pmb{\partial}U$ have the additional marked edge whose parameter is set to $1$. In this case, we restrict to dangerous subgraphs which do not contain the marked edge. Then the dangerous subgraphs are 
confined to live either to the left or to the right of the marked edge,
and the subtraction operator factors into the sum over the forests of left subgraphs and the sum over forests of right subgraphs:
\[
  R = R^{\rm left} R^{\rm right } \;,\qquad 
 R^{\rm left}  =\sum_{F^{\rm l} \in \pmb{F}^{\rm l}(\cG)} \prod_{\gamma^{\rm l} \in F^{\rm l}} (-\tau_{\gamma^{\rm l}}) \; ,\;\;
 R^{\rm right}=    \sum_{F^{\rm r} \in \pmb{F}^{\rm r}(\cG)} 
   \prod_{\gamma^{\rm r} \in F^{\rm r}} (-\tau_{\gamma^{\rm r}}) \;,
\]
where the forests in $\pmb{F}^{\rm l}(\cG)$ (respectively 
$\pmb{F}^{\rm r}(\cG)$) contain only graphs at the left (resp. right) 
of the marked edge. Let us discuss the forest of left subgraphs. We denote the rungs to the left of the marked edge by $1,2\dots p$ ($p$ being the closest to the marked edge). We treat the case in which the left end of the graph is a cap (a similar reasoning works for the case of an end rung). We denote $S_r$ the subgraph starting at the left end of $\cG$ and having $r$ rungs. Any left forest can be obtained from a forest which does not contain $S_p$ by adding or not $S_p$. Thus:
\[
 R^{\rm left} = (1-\tau_{S_p} )
 \sum_{F^{\rm l} \in \pmb{F}^{\rm l}(\cG)}^{S_p \notin F^{\rm l}} \prod_{\gamma^{\rm l} \in F^{\rm l}} (-\tau_{\gamma^{\rm l}}) \; ,
\]
Now, among the graphs $\gamma^{\rm l}$ in the forest $F^{\rm l}$ some, denoted
$\gamma^{\rm l}_{\supset p}$, contain the rung $p$, and the rest do not. We have
\[
 (1-\tau_{S_p} ) \tau_{ \gamma^{\rm l}_{\supset p} } =0 \;,
\]
therefore the sum truncates to the forests such that none of the graphs in the forest contains the last rung. Iterating we find:
\[
 R^{\rm left} = \prod_{i=1}^p (1-\tau_{S_i} ) \;.
\]
Let us define
\[
 R^S = \prod_{i\ge 1} (1-\tau_{S_i} ) \;, \qquad
  R^U = \prod_{j\ge 1} (1-\tau_{U_j} ) \;, 
\]
where $\tau_{S_p}A(\cG)$ (resp. $\tau_{U_p}A(\cG)$) is  zero if $S_p$ (resp.  $U_p$) is not a subgraph of $\cG$.
We can now state the main result of this section.
\begin{theorem}\label{thm:renorm}
 The coefficient $\beta^{g}_0$, $\beta^{g}_1 $ and $\beta^{g}_2$ are minus the renormalized scale derivatives of the ladders, caps and double caps generating functions:
 \begin{align*}
-  \beta_0^{g} &=  \pmb{\partial} U - 2 \frac{U}{1+S} \pmb{\partial} S + \frac{U^2}{(1+S)^2} \pmb{\partial} T   =R \big\{ \pmb{\partial} U  \big\} =
   \big( R^U\big)^{\rm left} \big(R^U\big)^{\rm right} \big\{ \pmb{\partial} U  \big\}
 \; , \crcr
-  \beta_1^{g} &= \frac{1}{1+S} \pmb{\partial}S - \frac{U}{(1+S)^2} \pmb{\partial} T    = R \big\{ \pmb{\partial} S  \big\}  =    \big( R^S\big)^{\rm left} \big(R^U\big)^{\rm right}  \big\{\pmb{\partial} S\big\}
 \;,  
 \crcr 
 -  \beta_2^{g} &=  \frac{1}{(1+S)^2} \pmb{\partial} T =R \big\{ \pmb{\partial} T \big\}
 =   \big( R^S\big)^{\rm left} \big(R^S\big)^{\rm right}  \big\{ \pmb{\partial} T \big\}
 \; .
 \end{align*}
\end{theorem}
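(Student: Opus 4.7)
The plan is to establish the theorem in two stages. First, the rational-function identities (the first equalities in each line) follow at once by differentiating the closed-form bare series \eqref{eq:bare1VR} at fixed bare parameters and reading off the coefficients of each power of the running coupling $g_1$; this reproduces the expressions \eqref{eq:coef}. The harder task is the identification of these expressions with the BPHZ-renormalised scale derivatives $R\{\pmb{\partial}U\}$, $R\{\pmb{\partial}S\}$, $R\{\pmb{\partial}T\}$, for which I would exploit the factorisation $R = R^{\text{left}}\otimes R^{\text{right}}$: since the marked edge prevents any dangerous subgraph from straddling both sides, all admissible forests split as disjoint unions of left and right forests.

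Using the Taylor identity \eqref{eq:taylor} in the form $\tau_\gamma A(\cG) = A(\gamma)A(\cG/\gamma)$, each forest contribution factorises into a product of subgraph amplitudes. The key structural observation to establish is that for any nested pair of dangerous subgraphs $\gamma' \subset \gamma$ the quotient $\gamma/\gamma'$ is a \emph{cap}: for sub-caps $\gamma = S_{i_2}\supset \gamma' = S_{i_1}$ one has $S_{i_2}/S_{i_1} \simeq S_{i_2-i_1}$ by construction, while for sub-ladders $\gamma = U_{i_2}\supset \gamma' = U_{i_1}$ at a ladder end, the contraction of $U_{i_1}$ to a point creates a $4$-valent vertex with two external and two internal half-edges, i.e.\ a bare vertex, so $U_{i_2}/U_{i_1} \simeq S_{i_2-i_1}$. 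Consequently in a nested ladder-end forest only the innermost subgraph contributes a $u$-amplitude while all intermediate factors are $s$-amplitudes, and the ``maximal'' shrinking on a ladder side converts that end into a cap end.

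Summing over left forests with outermost element labelled by $m$ and weighting by $g^{2m}$, the geometric sum $\sum_{k\geq 0}(-S)^k = (1+S)^{-1}$ produces the generating-function factor $(1+S)^{-1}$ on each cap-ended side, and the analogous rearrangement with one innermost $u$ replaced by $U$ gives $-U/(1+S)$ from the non-empty-forest contributions on each ladder-ended side. Applied to $\pmb{\partial}T$ (both ends caps) this yields immediately $R\{\pmb{\partial}T\} = \pmb{\partial}T/(1+S)^2$. For $\pmb{\partial}S$ (cap--ladder) one obtains $(R^S)^{\text{left}}(R^U)^{\text{right}}\pmb{\partial}S = \pmb{\partial}S/(1+S) - U\pmb{\partial}T/(1+S)^2$ after combining the two factors and using commutativity of the formal generating series as polynomials in $g^2$; and for $\pmb{\partial}U$ (ladder--ladder) the analogous expansion gives $(R^U)^{\text{left}}(R^U)^{\text{right}}\pmb{\partial}U = \pmb{\partial}U - 2(U/(1+S))\pmb{\partial}S + (U/(1+S))^2\pmb{\partial}T$ after tracking the cross terms generated by the simultaneous shrinkings on both sides.

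The main obstacle is precisely this entanglement of the three generating series under renormalisation: the renormalised $\pmb{\partial}U$ is not a mere rescaling of $\pmb{\partial}U$ but mixes in $\pmb{\partial}S$ and $\pmb{\partial}T$ with coefficients dictated by the ``shrinking-ladder-to-cap'' mechanism. Once the nested-forest combinatorics are organised correctly and the double sum is reorganised as a product of the left and right generating-function factors, the three claimed identities follow from elementary algebra on formal series in $g^2$.
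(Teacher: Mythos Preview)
Your plan is essentially the paper's own strategy: factorise $R$ into left and right pieces across the marked edge, recognise that successive quotients of nested end-anchored subgraphs are caps ($S_{i_2}/S_{i_1}\simeq S_{i_2-i_1}$ and $U_{i_2}/U_{i_1}\simeq S_{i_2-i_1}$), and resum the resulting telescoping products as geometric series in $S$. The algebra you sketch for recombining the left and right factors into the rational expressions in $U,S,T$ is exactly what the paper does via the two key identities $R^S\{(\bullet+S)\cG\}=\frac{1}{1+S}[(\bullet+S)\cG]$ and $R^U\{\cG U\}=\cG U-\frac{U}{1+S}[\cG(\bullet+S)]$.

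There is, however, a genuine gap in your outline. You only ever consider end-anchored dangerous subgraphs (the $S_i$'s on a cap side, the $U_i$'s on a ladder side), and you organise the forest sum by its ``outermost element''. But Lemma~1 shows that dangerous subgraphs can also be \emph{interior} ladders $U_{[a,b]}$ that touch neither the graph end nor the marked edge, and forests may contain several pairwise disjoint such pieces, so there is no single outermost element. The paper disposes of all of these at once by the cancellation
\[
(1-\tau_{S_p})\,\tau_{\gamma}=0\qquad\text{for any }\gamma\text{ containing the rung nearest the marked edge},
\]
which holds because shrinking any such $\gamma$ makes the residual left block one-vertex-reducible at the shrunk vertex, so that $\tau_{S_p/\gamma}$ then acts as the identity. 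Iterating this kills every forest containing a non-anchored piece and reduces $R^{\text{left}}$ to the ordered product $\prod_i(1-\tau_{S_i})$ (resp.\ $\prod_j(1-\tau_{U_j})$), after which your nested-chain combinatorics and geometric resummation go through verbatim. Without this step your sum over ``left forests with outermost element labelled by $m$'' is not the full BPHZ forest sum, and the identification with $R\{\pmb{\partial}U\}$ etc.\ is not yet justified.
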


\begin{proof} Recalling the scale derivatives from 
equations \eqref{eq:derivU}, \eqref{eq:derivT}, and \eqref{eq:derivS}, 
the theorem follows provided that, for $\cG = H,V$, we have:
\[
 R^S \big\{ (  \bullet +S ) \cG  \big\}  = 
 \left( \frac{1}{1+S}  \right)
    \big[ ( \bullet +S ) \cG   \big]
 \;,\qquad  R^U \big\{  \cG  U  \big\} = 
    \cG U  -  
    \left( \frac{U}{1+S} \right) \big[ \cG ( \bullet +S )  \big] \;.
\]
Let us first check $R^S$. We have:
\be
\begin{split}
  R^S \big\{ ( \bullet +S ) \cG  \big\}  & =  \bullet \cG  +  \sum_{r\ge 1} g^{2r} \left[ \prod_{i=1}^r
  (1-\tau_{S_i}) \right] [ S_r\cG ] \crcr
 & =  \bullet \cG   +  S  \cG  + 
  \sum_{r\ge 1} g^{2r}  \sum_{q=1}^r 
   \sum_{1\le i_1<\dots <i_q \le r}
 \left[ \prod_{s=1}^q   (-\tau_{S_{i_s}} )  \right] [S_r\cG] \;.
 \end{split}
\ee
Taking into account the action of the localization operators on the cap, the sum over $r$ becomes:
\begin{align*}
&  \sum_{r\ge 1} g^{2r}  \sum_{q=1}^r (-1)^q
   \sum_{1\le i_1<\dots <i_q \le r}
  S_{i_1} S_{i_2 - i_1} \dots S_{i_{q}-i_{q-1}}
 [ S_{r-i_q }\cG] \crcr
 & = \sum_{r\ge 1} g^{2r}  \sum_{q=1}^r (-1)^q
   \sum_{d_1, \dots d_{q} \ge 1}^{d_1 + \dots + d_{q} \le r}
  S_{d_1}  \dots S_{d_{q}} [ S_{r- d_1 \dots - d_q  } \cG ] \crcr
  & =  
    \sum_{q\ge 1} (-1)^q
  \left( \sum_{d\ge 1} g^{2d} S_{d} \right)^q  
  \left(   \bullet \cG   +\sum_{p\ge 1} g^{2p} [S_p \cG]  \right)
   = \left( \frac{1}{1+S} -1 \right)
   \big[  ( \bullet + S) \cG \big] \;.
\end{align*}
Concerning $R^U$, we have:
 \begin{align*}
  R^U \big\{  \cG U   \big\} &=     \cG U  
  + \sum_{r\ge 1} g^{2r} \sum_{q=1}^r (-1)^q \sum_{1\le i_1<\dots <i_q \le r}
 \left[ \prod_{s=1}^q   ( \tau_{U_{i_s}} )  \right] [ \cG U_r]
 \crcr
 & = \cG U 
  + \sum_{r\ge 1} g^{2r} \sum_{q=1}^r (-1)^q \sum_{1\le i_1<\dots <i_q \le r}
  U_{i_1} S_{i_2-i_1} \dots S_{i_q - i_{q-1}} [ \cG S_{r-i_q} ] \crcr
  & = \cG U + \sum_{q\ge 1} (-1)^q 
  \left(  \sum_{d\ge 1} g^{2d} U_d \right)
   \left( \sum_{d\ge 1} g^{2d}S_d \right)^{q-1} \left( 
    \cG\bullet  + \sum_{p\ge 1} g^{2p} [ \cG S_p ]  \right) \crcr
   & = \cG U - \frac{U}{1+S} \big[  \cG ( \bullet +S) \big] \;.
 \end{align*}

\end{proof}

At first orders the bare amplitudes are: 
\begin{align*}
\left( - \frac{1}{4} \right)  \pmb{\partial} U =&  
 g^2 V  + 2g^4 VU_1 + g^4 U_1 H U_1 + O(g^6) \;,\crcr
 \left( - \frac{1}{4} \right)  \pmb{\partial} S = & 
 g^2 \bullet V + g^2 \bullet H U_1 \;, 
 \qquad \left( - \frac{1}{4} \right)  \pmb{\partial} T = \bullet H \bullet 
 + g^2 \bullet V \bullet +2 g^2  \bullet H S_1 \; ,
\end{align*}
and the first non trivial renormalized amplitudes are (see Fig.~\ref{fig:betaeq}):
\begin{align*}
    \frac{ (- 1) }{4}  R[\pmb{\partial}U] \bigg{|}_{g^4} & = 
    2(1-\tau_{U_1}) [VU_1] + (1-\tau_{U_1}) (1-\tau_{U_1})  [ U_1 H U_1 ] \;,
 \crcr
  \frac{ (- 1) }{4}  R[\pmb{\partial}S] \bigg{|}_{g^2}& =
  \bullet V + (1-\tau_{U_1} ) \bullet [HU_1] \;, \qquad
  \frac{ (- 1) }{4}  R[\pmb{\partial}T] \bigg{|}_{g^2}  = 
  \bullet V \bullet +2  (1 - \tau_{S_1}) \bullet H S_1 \;.
\end{align*}
This should be compared to the coefficients in Eq.~\eqref{eq:beta3loops}.

\begin{figure}[ht]
\begin{center}
\includegraphics[scale=1]{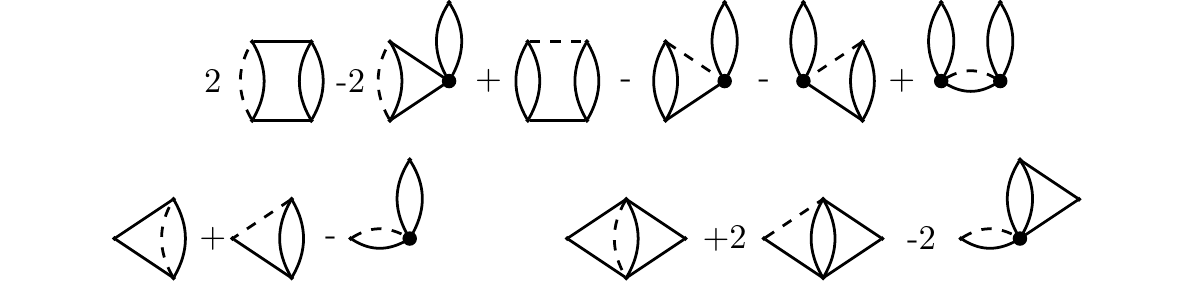} 
 \caption{First non trivial subtractions. We represented the shrunk vertex as thickened.} \label{fig:betaeq}
 \end{center}
\end{figure}

\paragraph{Convergence.} It remains to prove that the subtracted amplitudes are convergent. This might seem obvious at first sight: this is the whole 
``raison d'\^etre'' of the subtraction operator. However, it turns out that for massless theories subtracted a zero momentum the usual proofs do not work. What works is the
proof in \cite{Rivasseau:1991ub} chapter II.3 relying on multiscale analysis and the classification of forests.
The analysis is however quite involved and we will not reproduce it here.

\section{Spectrum of bilinear operators}
\label{sec:spectrum}

We wish to further compare our results to those of \cite{Klebanov:2016xxf,Giombi:2017dtl}. We will do that by slavishly following their computation of the spectrum of invariant bilinear operators, adapted to our model. This computation will be less rigorous than those in the previous sections, but it will be nonetheless instructive. We leave a more rigorous renormalization group derivation of such dimensions for future work.

First of all, we assume that we are at a fixed point, hence we use conformal field theory methods. Second, we will only consider the tetrahedron interaction.
Let us write the three point function of a spin-zero primary operator $\mathcal{O}_h$ (assumed to be the continuation of a bilinear operator $\varphi_{abc} (\partial_{\mu}\partial^{\mu})^{n} \varphi_{abc}$, for some integer $n$, from the free theory to the interacting one) with two fields $\phi^{abc}$ in the conformal theory as:
\begin{equation}
v(x_0,x_1,x_2)=\langle \mathcal{O}_h(x_0)\varphi_{abc}(x_1)\varphi_{abc}(x_2)\rangle=\frac{C_{\mathcal{O}\phi\phi}}{(x_{01}^2)^{h/2}(x_{02}^2)^{h/2}(x_{12}^2)^{d/4-h/2}} \;.
\end{equation}
The three point function in the conformal theory satisfies the Bethe-Salpeter equation:
\begin{equation}
v(x_0,x_1,x_2)=\int d^dx_3d^dx_4 \; K(x_1,x_2,x_3,x_4)v(x_0,x_3,x_4) \;.
\label{eq:BSE}
\end{equation}
Having neglected the pillow and double-trace terms, the kernel $K$ is just the tetrahedron kernel:\footnote{The pillow and double-trace terms contain a $\delta(x_{34})$, which, when used in \eqref{eq:BSE}, leads to zero contribution for $h>d/2$, and a divergence for $h<d/2$. The latter reads $\int_x \delta(x) |x|^{h-d/2} = \int_p \int_x e^{\im p x} |x|^{h-d/2} = \int_p |p|^{-h-d/2}$, which is zero in dimensional regularization.
}
\begin{equation}
\label{kernel}
K(x_1,x_2,x_3,x_4)=3\lambda^2G(x_{13})G(x_{24})G(x_{34})^2 \;,
\end{equation}
and the two-point function in position space is:
\begin{equation}
G(x)=\frac{1}{(2\pi)^{d/2}Zx^{d/2}}\;.
\end{equation}

Using the same integral formulas as in \cite{Klebanov:2016xxf,Giombi:2017dtl}, we find:
\begin{equation}
\int d^dx_3d^dx_4 \; K(x_1,x_2,x_3,x_4)v(x_0,x_3,x_4)=f(h)v(x_0,x_3,x_4) \;,
\end{equation}
with
\begin{align}
\label{spectrum}
f(h)=\frac{3\lambda^2}{(4\pi)^d Z^4}L_d(\frac{d}{4},\frac{h}{2})L_d(\frac{d-h}{2},\frac{d}{4}) 
=\frac{3g^2}{(4\pi)^d}\frac{\Gamma(h/2-d/4)\Gamma(d/4-h/2)}{\Gamma(3d/4-h/2)\Gamma(d/4+h/2)} \;,
\end{align}
where we have used $g=Z^{-2}\lambda$.
We can use the rescaled coupling $\tilde{g}=g(4\pi)^{-d/2}\Gamma(d/4)^{-2}$, and forgetting the tilde, we obtain:
\begin{equation}
f(h)=3g^2\Gamma(d/4)^4\frac{\Gamma(h/2-d/4)\Gamma(d/4-h/2)}{\Gamma(3d/4-h/2)\Gamma(d/4+h/2)} \;.
\end{equation} 
The dimension $h$ of the spin zero operators are determined by $f(h)=1$. 

We will now analyze the solutions of this equation.  The main outcome of this analysis is that for every $d$, if the tetrahedron coupling is purely imaginary and not too large, then the dimensions of all the operators are real.

\paragraph{\it Solutions for $d=3$.}

For $d=3$, $f(h)$ becomes:
\begin{equation}
f(h)=3g^2\Gamma(3/4)^4\frac{\tan (\pi(1/4-h/2))}{(5/4-h/2)(1/4-h/2)(h/2-3/4)} \;.
\end{equation}
Up to second order in $g$, the solutions are:
\begin{align} 
h_{0\pm}&=\frac{3}{2}\pm 4 \sqrt{-\frac{3g^2}{\pi} }\, \Gamma(3/4)^2  +\mathcal{O}(g^3) \;,\\
h_n&=\frac{3}{2}+2n+\frac{24\Gamma(3/4)^4g^2}{\pi n(2n+1)(2n-1)} +\mathcal{O}(g^3) \;, \;\;\; n \in \mathbb{N}^{+} \;.
\end{align}

\paragraph{\it Solutions for $d=2$.}
 
For $d=2$, $f(h)$ becomes:
\begin{align}
f(h)=-3g^2\pi^2\frac{4}{(1-h)^2}\; , 
\end{align}
with solutions
\be \label{eq:2d-hsol}
h_{\pm}=1\pm 2\pi \sqrt{-3 g^2}   \;.
\ee
As in \cite{Giombi:2017dtl} we find a degeneration of the solutions in two dimensions.
It can be checked numerically that in $d=2+\epsilon$ there are additional solutions that approach $h_n= 1+2n$ for $\epsilon\to 0$, but they are missed if we take first the limit $\epsilon\to 0$ of $f(h)$.
This can be understood from the following observation: for $d>2$ and  $ n \in \mathbb{N}^{+}$, we have $\lim_{\delta\to 0^{\pm}} f(d/2+2n+\delta) = \mp \infty$, while for $d<2$ the limit is reversed, $\lim_{\delta\to 0^{\pm}} f(d/2+2n+\delta) = \pm \infty$, as exemplified in Fig.~\ref{fig:BSeq-2d}. Therefore, in the limit $d\to 2^{\pm}$, the solutions of $f(h)=1$ approach $h_n=1+2n$.

\begin{figure}[ht]
\centering
\begin{minipage}{0.4\textwidth}
           \centering 
            \includegraphics[width=\textwidth]{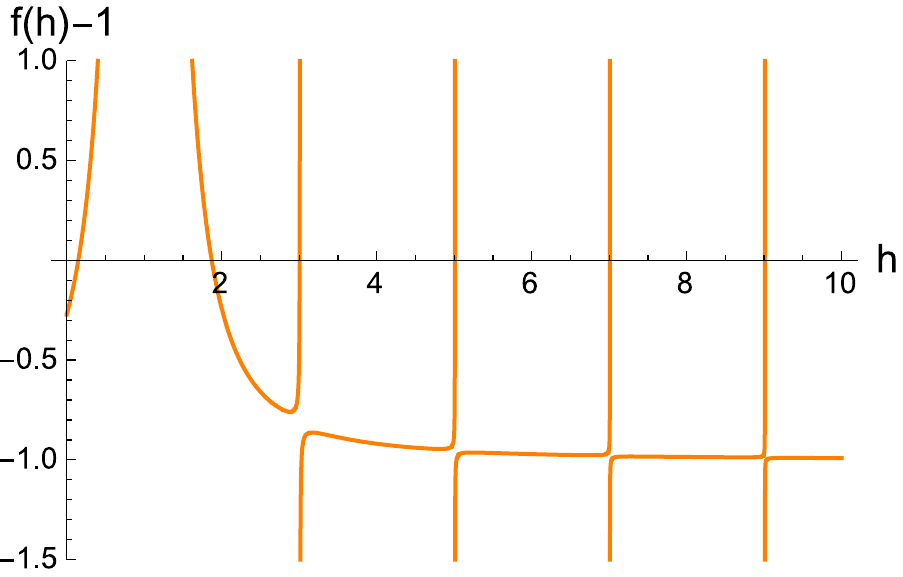}
        \end{minipage}
        \hspace{0.01\textwidth}
\begin{minipage}{0.4\textwidth}
            \centering
            \includegraphics[width=\textwidth]{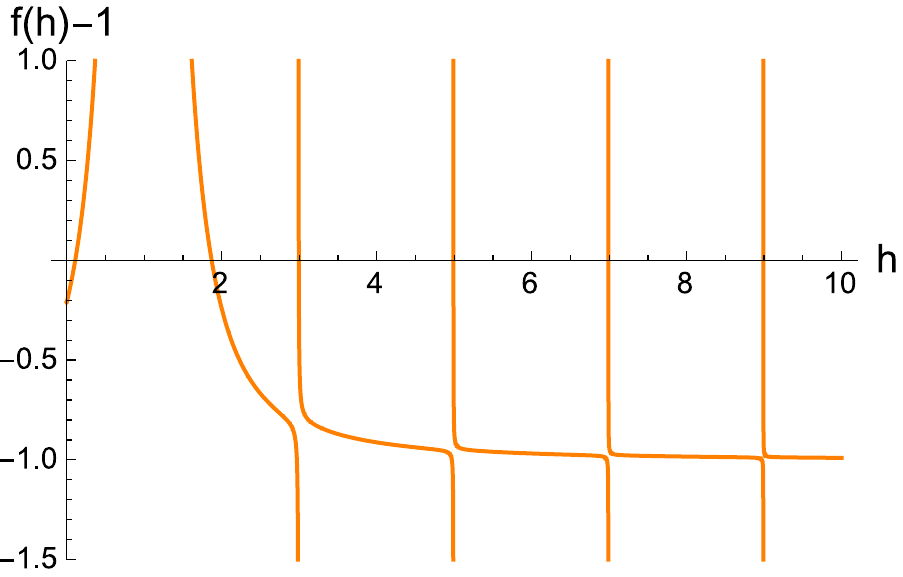}
        \end{minipage}
 \caption{Plots of $f(h)-1$ at $g=0.08$, for $d=2.02$ (left), 
 and  $d=1.98$ (right).} \label{fig:BSeq-2d}
\end{figure}

\paragraph{\it Solutions for $d=1$.}

For $d=1$, $f(h)$ becomes:
\begin{align}
f(h)=3g^2\Gamma(1/4)^4\frac{\tan(\pi(h/2+1/4))}{h/2-1/4} \;.
\end{align}
Up to second order in $g$ the solutions are:
\begin{align}
h_{0\pm}&=\frac{1}{2}\pm 2 \sqrt{-\frac{3g^2}{\pi}}\,\Gamma(1/4)^2 +\mathcal{O}(g^3) \;,\\
h_n&=\frac{1}{2}+2n-\frac{6\Gamma(1/4)^4g^2}{n\pi} +\mathcal{O}(g^3) \;, \;\;\; n\in \mathbb{N}^{+} \;.
\end{align}

\paragraph{\it Solutions for any $d$.}
In any dimension we find two solution $h_{0\pm}$ which are complex for real $g$, and an infinite sequence of real solutions $h_n$. For $g\to 0$, the latter approach $d/2+2n$, the classical scaling dimension of the bilinear operators $\varphi_{abc} (\partial_{\mu}\partial^{\mu})^{n} \varphi_{abc}$. As $n\geq 1$, they are consistent with the assumption $h>d/2$.

The solutions $h_{0\pm}$ approach for $g\to 0$ the classical scaling dimension of $\varphi_{abc}\varphi_{abc}$, while for $g\neq 0$ they are of the form $h_{0\pm}=\frac{d}{2}\pm \im\alpha(d,g)$ as in the model of Klebanov and Tarnopolsky.
In fact, with $h=\frac{d}{2}\pm \im\alpha$, $f(h)$ becomes:
\begin{align}
f(h)=3g^2\Gamma(d/4)^4\frac{\Gamma(\im\frac{\alpha }{2})\Gamma(-\im \frac{\alpha }{2})}{\Gamma(d/2+\im\frac{\alpha }{2})\Gamma(d/2-\im \frac{\alpha }{2})} \; ,
\end{align}
which is solved, up to second order in $g$, by
$
\alpha(d,g)=\pm 2\sqrt{3}\frac{\Gamma(d/4)^2}{\Gamma(d/2)} g
$.
Therefore, for any $d$, we have a solution of the form 
\be\label{eq:h0}
h_{0\pm}=\frac{d}{2}\pm \im 2\sqrt{3}\frac{\Gamma(d/4)^2}{\Gamma(d/2)} g +\mathcal{O}(g^3)\;,
\ee
reproducing the values we found for integer $d$. 

It is these solutions which are complex for real tetrahedral coupling and lead to the instability discussed in \cite{Giombi:2017dtl}.
For $g$ real, both solutions $h_{0\pm}$ have  ${\rm Re}(h)=d/2$.
On the other hand, for purely imaginary $g$, they become real, and we have $h_{0+}>d/2$ and $h_{0-}<d/2$.  This is exactly what one should expect for an IR and UV fixed point, respectively. In fact $h_{0\pm}$ is the dimension of $\varphi_{abc}\varphi_{abc}$ at the fixed point. The latter can be obtained in the large-$N$ limit as half the dimension of the double trace invariant $(\varphi_{abc}\varphi_{abc})^2$, which is $(d+\nu)$, with $\nu$ the critical exponent of the double-trace coupling. This is given by 
Eq.~\eqref{eq:exp} with $g\to\sqrt{3}g$, and we obtain for the dimension of $\varphi_{abc}\varphi_{abc}$ precisely equation \eqref{eq:h0}. The IR fixed point is $g_{2+}$, corresponding  to $h_{0+}$, and the UV fixed point is $g_{2-}$, corresponding  to $h_{0-}$. Notice that the conformal dimensions at the IR and UV fixed points are related by $h_{0+}=d-h_{0-}$, as expected on general grounds for double-trace deformations of conformal field theories \cite{Gubser:2002vv}.

\paragraph{Special value of $g$.}

From now on we discuss the case $g$ (and $\lambda$) purely imaginary.
We denote $g_0 = 3^{-1/2}g_c (4\pi)^{-d/2}\Gamma(d/4)^{-2}$, with $g_c$ defined in Eq.~\eqref{eq:gc}. This $g_0$ is the maximal value of $|g|$ for which $\lambda(g)$ is invertible to $g(\lambda)$ as depicted in Fig.~\ref{fig:tetra} (we have also taken into account the rescaling implemented in this section). 

For $g=\im g_0$, we have two exact solutions at $h=0$ and $h=d$, for any $d$.\footnote{The solution $h=0$ violates the unitarity bound $h>d/2-1$ for $d>2$. Therefore, for $d>2$ we expect to find an upper value of $|g|$ beyond which the UV CFT with bilinear operator of dimension $h_{0-}$ is necessarily non-unitary.} From a numerical check, we find that for $d\lesssim 2.9728$ these correspond to $h_{0\pm}$, while for  $d\gtrsim 2.9728$ $h=0$ comes from the continuation of a negative solution, and $h=d$ corresponds to $h_1$.
This is depicted in Fig.~\ref{fig:BSeq-g0}.

Interestingly, we have a neat interpretation from the point of view of AdS/CFT. According to the standard dictionary \cite{Gubser:1998bc,Witten:1998qj} we should have $h_\pm=\frac{d}{2} \pm \sqrt{\frac{d^2}{4}+m^2}$, with $m$ being the mass of a field in AdS${}_{d+1}$. The plus and minus signs correspond to different boundary conditions \cite{Klebanov:1999tb}, the plus being always allowed, and the minus only for $-\frac{d^2}{4}<m^2<-\frac{d^2}{4}+1$. For $d\lesssim 2.9728$ we have 
the following situation. At $g=0$ we get $h_{0\pm}=d/2$ which can be interpreted
as $h_{\pm}$ with the mass saturating the Breitenlohner-Freedman bound $m^2\geq -\frac{d^2}{4}$ \cite{Breitenlohner:1982jf}. We dial up $g$ and when we reach $g=\im g_0$ the mass become zero; in this case $h_-$ is only allowed in the bulk for $d\leq 2$.
For $d\gtrsim 2.9728$ it is $h_1$ (instead of $h_{+}$) which starts at $g=0$ from a positive mass $m^2=4-\frac{d^2}{4}$ and reaches $m^2=0$ at $g=\im g_0$.

\begin{figure}[ht]
\centering
\begin{minipage}{0.4\textwidth}
           \centering 
            \includegraphics[width=\textwidth]{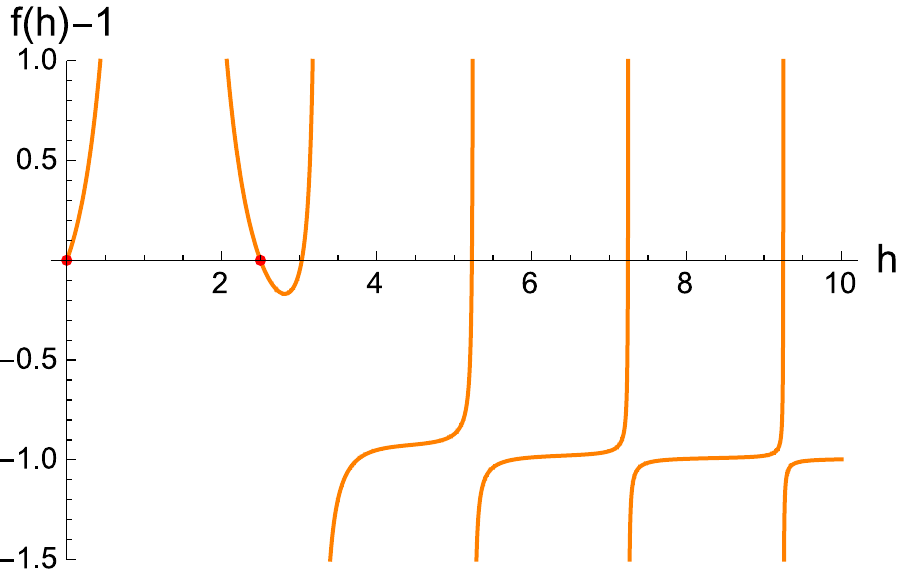}
        \end{minipage}
        \hspace{0.01\textwidth}
\begin{minipage}{0.4\textwidth}
            \centering
            \includegraphics[width=\textwidth]{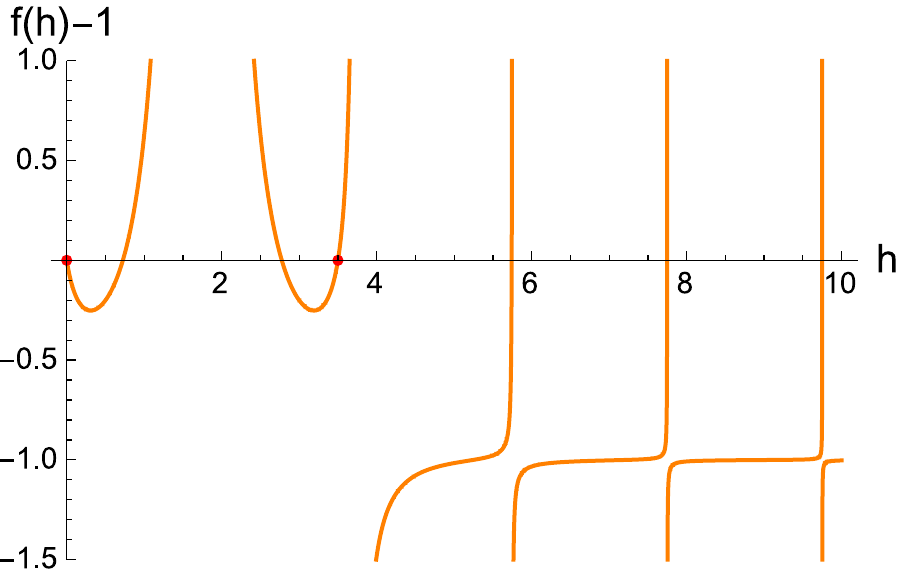}
        \end{minipage}
 \caption{Plots of $f(h)-1$ for $d=2.5$ (left) and $d=3.5$ (right) at $g=g_0(d)$. The zeros on the left panel correspond (from left to right) to $h_{0-}$, $h_{0+}$, and $h_1$ to $h_4$, while on the right panel this applies only to $h>0$, as $h=0$ is a new solution. The zeros at $h=0$ and $h=d$ are marked by a red dot.} \label{fig:BSeq-g0}
\end{figure}

\paragraph{Reappearance of complex dimensions.}

As already stated, for $g$ purely imaginary and small enough, we obtain a real spectrum. However, it is plausible that there exists a value $g_*$ such that for $\vert g \vert >  g_* $, some dimensions become complex again. 
With a numerical  study, we find that for $d>2$ there exists a $g_*\geq g_0$ at which $h_{0+}$ and $h_1$ merge and beyond which they become complex. The transition can be understood from the graphical solution of the equation, as in Fig.~\ref{fig:BSeq-3d}. 
Only at $d \simeq 2.9728$ we find that $g_*= g_0$, while $g_*\to+\infty$  for $d\leq 2$. The disappearance of the transition for $d\leq 2$ can be deduced from Fig.~\ref{fig:BSeq-2d} (notice also that at $d=2$, $h_{0-}$ becomes negative for $|g|>(2\pi\sqrt{3})^{-1}$ as deduced from Eq.~\eqref{eq:2d-hsol}).

It has been conjectured in \cite{Kim:2019upg} that the appearance of a complex scaling dimension $h$ with ${\rm Re}(h)=d/2$ is associated to a non-zero vacuum expectation value of the associated operator, hence to a spontaneous breaking of conformal invariance. The intuitive reason is that in the AdS/CFT picture such operators correspond to fields with mass below the Breitenlohner-Freedman bound. In our case, such a phenomenon seems to take place at the transition from $g^2<0$ to $g^2>0$. It is plausible that the instability in the AdS side of the correspondence translates into a spontaneous breaking of conformal invariance of our model at real coupling.
On the other hand, the complex dimensions at $g^2<-g_*^2<0$ have ${\rm Re}(h)>d/2$, so they seem to correspond to fields with complex mass (with both real and imaginary parts being non-zero). 
However, we stress that from a renormalization group point of view the model makes sense only for $|g|\leq g_0$. Since $g_*\geq g_0$ for any $d$, the appearance of such complex solutions is probably not relevant to our model.

\begin{figure}[ht]
\centering
\begin{minipage}{0.4\textwidth}
           \centering 
            \includegraphics[width=\textwidth]{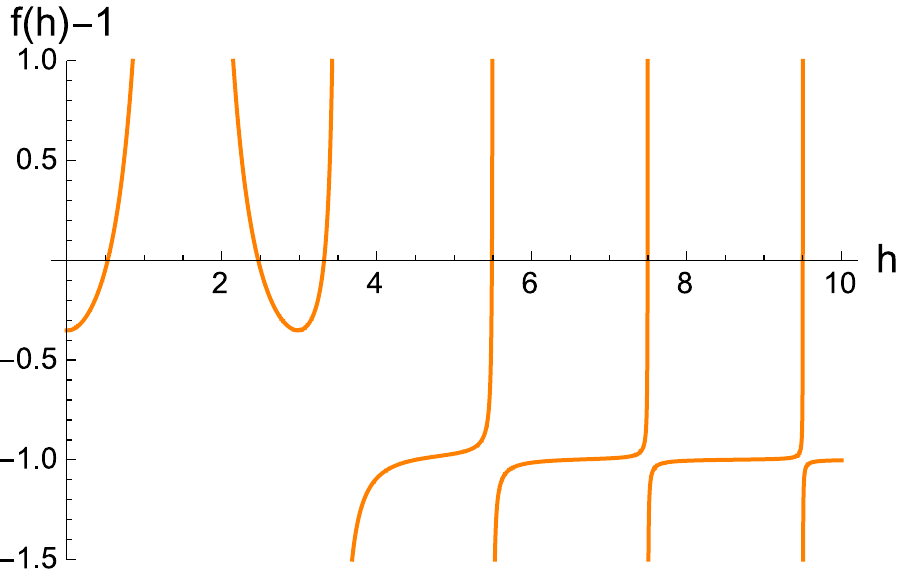}
        \end{minipage}
        \hspace{0.01\textwidth}
\begin{minipage}{0.4\textwidth}
            \centering
            \includegraphics[width=\textwidth]{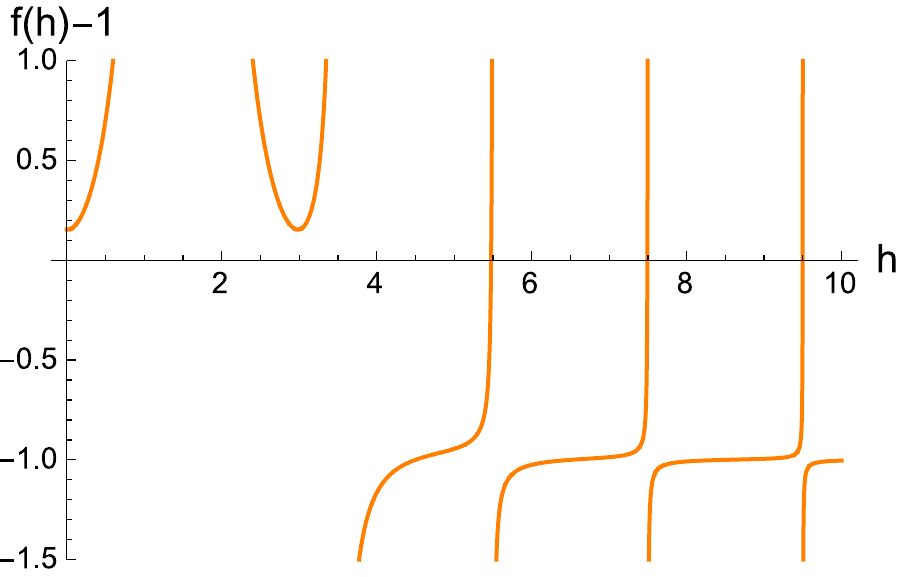}
        \end{minipage}
 \caption{Plots of $f(h)-1$ for $d=3$ at $g=0.15$ (left) and  $g=0.19$ (right). The zeros on the left panel correspond (from left to right) to $h_{0-}$, $h_{0+}$, and $h_1$ to $h_4$. On the right panel $h_2$ to $h_4$ are the only remaining real roots. In this case $g_0\simeq 0.186135$ and $g_*-g_0\simeq 5.7\times 10^{-5}$.} \label{fig:BSeq-3d}
\end{figure}

%%%%%%%%%%%%%%%%%

\section*{Acknowledgements}

\noindent 
We would like to thank Vincent Rivasseau and Igor Klebanov for helpful discussions.
\\
This research was supported in part by Perimeter Institute for Theoretical Physics. Research at Perimeter Institute is supported by the Government of Canada 
through the Department of Innovation, Science and Economic Development Canada and by the Province of Ontario through the Ministry of Research, Innovation and Science.

\newpage

 \appendix

\section{The melon integral}
\label{app:melon}

We show that:
\be
\begin{split}
M^{\Lambda}_k(p) & = \frac{1}{Z^3(4\pi)^d \Gamma(\zeta)^3 }    \int_{\Lambda^{-2}}^{k^{-2}}  d\alpha_1 d\alpha_2 d\alpha_3  \;  (\alpha_1\alpha_2 \alpha_3)^{\zeta-1}    \; 
   \frac{e^{ - p^2 \frac{ \alpha_1 \alpha_2 \alpha_3}{ \alpha_1 \alpha_2 + \alpha_1 \alpha_3 + \alpha_2 \alpha_3 } }}{ ( \alpha_1 \alpha_2 + \alpha_1 \alpha_3 + \alpha_2 \alpha_3)^{d/2} }  \crcr
   & =M^{\Lambda}_k(0)- \frac{p^{2d-6\zeta}}{Z^3(4\pi)^d \Gamma(\zeta)^3} f\left( \frac{ k^2 }{p^2}, \frac{p^2}{\Lambda^2} \right)
\end{split}
\ee
with $f$ a function such that $ \lim_{x,y\to 0} f(x,y) = \rm{finite}$.
The plan is:
\begin{itemize}
 \item Taylor expand at order one with integral rest in the variable $tp^2$

 \item Consider the integral rest and rescale $t$ by $p^2$. This yields the 
 scaling in $p$ and $p^2$ appears only in the limits of the integral.
 Introduce Hepp sectors and compute the integral when sending the cutoffs to their limits. 
 \item Compute $M^{\Lambda}_k(0)$.
\end{itemize}
Denoting $I(p^2)=Z^3(4\pi)^d\Gamma(\zeta)^3M_k^{\Lambda}(p)$ in order to get rid of the overall constant we have: 
\begin{align*}
 I(p^2) =&  \int_{\Lambda^{-2}}^{k^{-2}} d\alpha \;\frac{   (\alpha_1\alpha_2 \alpha_3)^{\zeta-1}   }{ ( \alpha_1 \alpha_2 + \alpha_1 \alpha_3 + \alpha_2 \alpha_3)^{d/2} } \crcr
      & -p^2 \int_0^1 dt \int_{\Lambda^{-2}}^{k^{-2}} d\alpha   \frac{   (\alpha_1\alpha_2 \alpha_3)^{\zeta}   }{ ( \alpha_1 \alpha_2 + \alpha_1 \alpha_3 + \alpha_2 \alpha_3)^{1+d/2} }
         e^{-tp^2  \frac{ \alpha_1 \alpha_2 \alpha_3}{ \alpha_1 \alpha_2 + \alpha_1 \alpha_3 + \alpha_2 \alpha_3 } } \;.
\end{align*}
Rescaling $\alpha  = \frac{ \alpha'}{p^2}$ and dropping the primes yields the rest term:
\[
p^{2d-6\zeta} f\left( \frac{ k^2 }{p^2}, \frac{p^2}{\Lambda^2} \right) =  ( p^2)^{1 - 3 -3\zeta +  2 + d} \int_0^1 dt \int_{ \frac{ p^2}{ \Lambda^2} }^{\frac{ p^2}{ k^2}} d\alpha   \frac{   (\alpha_1\alpha_2 \alpha_3)^{\zeta}   }{ ( \alpha_1 \alpha_2 + \alpha_1 \alpha_3 + \alpha_2 \alpha_3)^{1+d/2} }
         e^{-t  \frac{ \alpha_1 \alpha_2 \alpha_3}{ \alpha_1 \alpha_2 + \alpha_1 \alpha_3 + \alpha_2 \alpha_3 } }
\]

We split the $\alpha$ integrals in a sum over 6 Hepp sectors (total orderings of the parameters $\alpha$),  and in the sector $\alpha_1 < \alpha_2 < \alpha_3$, 
we change variables to $\alpha_3 = \rho , \alpha_2 = x  \rho , \alpha_1 =  y x  \rho $. The rest term is then:
\begin{align*}
 f\left( \frac{ k^2 }{p^2}, \frac{p^2}{\Lambda^2} \right)  = &  6 \int_0^1 dt  \int_{ \frac{p^2}{\Lambda^2} }^{\frac{p^2}{k^2}  } d\rho  \;  \rho^{  3 \zeta - d }  
 \int_{\frac{p^2}{\Lambda^2} \rho^{-1} }^1 dx \; x^{2\zeta -\frac{d}{2}   } \int_{ \frac{p^2}{\Lambda^2} \rho^{-1}x^{-1}}^1 dy  \;   \; 
  \frac{  y ^{\zeta }  }{  ( 1 +  y  +  x y )^{1 + d/2} } e^{-t   \frac{ \rho x y  }{ 1 + y + xy }}
\end{align*}
The integrals are clearly convergent when sending $\Lambda \to \infty$ and we get
\begin{align*}
 f\left( \frac{ k^2 }{p^2}, 0  \right)  = &  6 \int_0^1 dt  \int_{ 0 }^{\frac{p^2}{k^2}  } d\rho  \;  \rho^{  3 \zeta - d }  
 \int_{ 0  }^1 dx \; x^{2\zeta -\frac{d}{2}   } \int_{ 0 }^1 dy  \;   
  \frac{  y ^{\zeta }  }{  ( 1 +  y  +  x y )^{1 + d/2} }   e^{-t   \frac{ \rho x y  }{ 1 + y + xy }}
\end{align*}
We will compute this in the next subsection but for now we  check that it is convergent when sending $k\to 0$. As $ 3 \ge 1 + y + xy \ge 1$ we have an upper bound:
\[
  f\left( \frac{ k^2 }{p^2}, 0  \right)  \le   6 \int_0^1 dt  \int_{ 0 }^{\frac{p^2}{k^2}  } d\rho  \;  \rho^{  3 \zeta - d }  
 \int_{ 0  }^1 dx \; x^{2\zeta -\frac{d}{2}   } \int_{ 0 }^1 dy  \;   
   y ^{\zeta }   e^{- \frac{t}{3}   \rho x y  } \;,
\]
and rescaling $ \rho = \frac{ u }{txy }$ and sending $k\to 0$ we get:
\[
 f\le 6 \int_0^1 dt  
 \int_{ 0  }^1 dx \; x^{2\zeta -\frac{d}{2}   } \int_{ 0 }^1 dy  \;   
   y ^{\zeta }  
   \int_{ 0 }^{ \infty  } d u   \;  \left( \frac{1}{txy}\right)^{1 + 3\zeta - d} u^{  3 \zeta - d }  
   e^{- \frac{u}{3}  } \;.
\]
recalling that $\zeta  = d/4$ the bound writes
\[
 f \le 6 \int_0^1 dt \; t^{-1 + \frac{d}{4}} \int_0^1 dx\; x^{-1+ \frac{d}{4}} \int_0^1 dy \; y^{-1 + \frac{d}{2}} \int_0^{\infty} du \; u^{-\frac{d}{4}} e^{-\frac{u}{3}} \;,
\]
which is convergent for $d<4$. For $d=4$, we still need to deal with the last integral.

 \subsection{The integral rest term}
 
 We now compute the numerical constant $f(0,0)$ defined by:
\[
 f\left( 0, 0 \right) =  \int_0^1 dt \int_{ 0 }^{\infty} d\alpha    \;\frac{   (\alpha_1\alpha_2 \alpha_3)^{\zeta}   }{ ( \alpha_1 \alpha_2 + \alpha_1 \alpha_3 + \alpha_2 \alpha_3)^{1+d/2} }
         e^{-t  \frac{ \alpha_1 \alpha_2 \alpha_3}{ \alpha_1 \alpha_2 + \alpha_1 \alpha_3 + \alpha_2 \alpha_3 } } \;.
\]
First, we change variables to $u=\alpha_1\alpha_2,  \; v=\alpha_1\alpha_3, \; 
w=\alpha_2\alpha_3 $ to get:
\begin{equation*}
f\left( 0, 0 \right) =  \frac{1}{2}\int_0^1 dt \int_{ 0 }^{\infty} dudvdw \; \frac{\left(uvw\right)^{\zeta/2-1/2}}{\left(u+v+w \right)^{1+d/2}}e^{-t\frac{\left(uvw\right)^{1/2}}{u+v+w}} \;,
\end{equation*}
and we perform a second change of variables $u=\alpha\delta\gamma, \;
v=\alpha\delta(1-\gamma) , \; w=\alpha(1-\delta)$ to obtain:
\begin{align*}
f\left( 0, 0 \right)=& \frac{1}{2}\int_0^1 dt \int_{ 0 }^{\infty} d\alpha \int_0^1 d\delta \int_0^1 d\gamma \crcr
    & \qquad \alpha^{-1/2+3\zeta/2-d/2}\delta^{\zeta}\gamma^{\zeta/2-1/2}(1-\delta)^{\zeta/2-1/2}(1-\gamma)^{\zeta/2-1/2}e^{-\alpha^{1/2}t\delta(1-\delta)^{1/2}(1-\gamma)^{1/2}} \;.
\end{align*}
The integral over $\alpha$ is a $\Gamma$ function and we use 
$
\int_0^{\infty} dx \; x^a \exp\{-bx^{1/2}\} = 2 b^{ - 2a- 2}\Gamma(2a+2)
$
to write:
\begin{equation*}
f\left( 0, 0 \right)=\Gamma(1+3\zeta-d)\int_0^1 dt t^{d-3\zeta-1} \int_0^1 d\delta \delta^{d-2\zeta-1}(1-\delta)^{d/2-\zeta-1}\int_0^1 d\gamma \gamma^{d/2-\zeta-1}(1-\gamma)^{d/2-\zeta-1} \;.
\end{equation*}
The integral on $t$ gives a factor $ (d-3\zeta)^{-1}$ while the integrals on $\delta$ and $\gamma$ are Beta functions:
\begin{equation*}
f\left( 0, 0 \right)=\frac{1}{d-3\zeta}\frac{\Gamma(1-d+3\zeta)}{\Gamma(3d/2-3\zeta)}\Gamma(d/2-\zeta)^3 \;,
\end{equation*} 
which  for $\zeta=\frac{d}{4}$ simplifies to :
\begin{equation*}
f\left( 0, 0 \right)=\frac{4}{d}\frac{\Gamma(1-d/4)}{\Gamma(3d/4)}\Gamma(d/4)^3 \;.
\end{equation*} 

\subsection{The local part}

We now want to compute the UV divergent piece $M_k^{\Lambda}(0)$.
We denote: 
\begin{align*}
I_0=& Z^3(4\pi)^d\Gamma(\zeta)^3M_k^{\Lambda}(0)=\int_{\Lambda^{-2}}^{k^{-2}} d\alpha \;\frac{   (\alpha_1\alpha_2 \alpha_3)^{\zeta-1}   }{ ( \alpha_1 \alpha_2 + \alpha_1 \alpha_3 + \alpha_2 \alpha_3)^{d/2} } \crcr
 =&\Lambda^{2(d-3\zeta)}\int_{1}^{k^{-2}\Lambda^2} d\alpha \;\frac{   (\alpha_1\alpha_2 \alpha_3)^{\zeta-1}   }{ ( \alpha_1 \alpha_2 + \alpha_1 \alpha_3 + \alpha_2 \alpha_3)^{d/2} } \;.
\end{align*}
This is convergent for $k\rightarrow 0$ so we can write $I_0$ as :
\begin{equation*}
I_0=\Lambda^{2(d-3\zeta)}\int_1^{\infty}d\alpha \;\frac{   (\alpha_1\alpha_2 \alpha_3)^{\zeta-1}   }{ ( \alpha_1 \alpha_2 + \alpha_1 \alpha_3 + \alpha_2 \alpha_3)^{d/2} }-\Lambda^{2(d-3\zeta)}\int_{k^{-2}\Lambda^2}^{\infty}d\alpha \;\frac{   (\alpha_1\alpha_2 \alpha_3)^{\zeta-1}   }{ ( \alpha_1 \alpha_2 + \alpha_1 \alpha_3 + \alpha_2 \alpha_3)^{d/2} }
\end{equation*}

Let us denote :

\begin{equation*}
I_1=\Lambda^{2(d-3\zeta)}\int_1^{\infty}d\alpha \;\frac{   (\alpha_1\alpha_2 \alpha_3)^{\zeta-1}   }{ ( \alpha_1 \alpha_2 + \alpha_1 \alpha_3 + \alpha_2 \alpha_3)^{d/2} }
\end{equation*}

and 

\begin{equation*}
I_2=\Lambda^{2(d-3\zeta)}\int_{k^{-2}\Lambda^2}^{\infty}d\alpha \;\frac{   (\alpha_1\alpha_2 \alpha_3)^{\zeta-1}   }{ ( \alpha_1 \alpha_2 + \alpha_1 \alpha_3 + \alpha_2 \alpha_3)^{d/2} }
\end{equation*}

These integrals can be separated into six Hepp sectors. For the sector $\alpha_1<\alpha_2<\alpha_3$ we can make the change of variables :
\begin{align*}
\alpha_1&=\rho \crcr
\alpha_2&=\rho x \crcr
\alpha_3&=\rho x y 
\end{align*} 

Then, we get for $I_1$, for $\zeta=\frac{d}{4}$ :
\begin{equation*}
I_1= 6\Lambda^{d/2}\int_{1}^{\infty} d\rho \rho^{-d/4-1} \int_1^{\infty}dx \int_1^{\infty}dy \frac{x^{-1}y^{d/4-1}}{\left(1+y+xy\right)^{d/2}} \;,
\end{equation*}
which is
\begin{align*}
I_1=6\Lambda^{d/2}\frac{4}{d}\int_1^{\infty}dx \int_1^{\infty}dy \frac{x^{-1}y^{d/4-1}}{\left(1+y+xy\right)^{d/2}} \;,
\end{align*}

We can do the same change of variables for $I_2$, we get:

\begin{equation*}
I_2= 6\Lambda^{d/2}\int_{k^{-2}\Lambda^2}^{\infty} d\rho \rho^{-d/4-1} \int_1^{\infty}dx \int_1^{\infty}dy \frac{x^{-1}y^{d/4-1}}{\left(1+y+xy\right)^{d/2}} \;,
\end{equation*}

which is :

\begin{align*}
I_2=6\Lambda^{d/2}\frac{4}{d}\left(\frac{\Lambda}{k}\right)^{-d/2}\int_1^{\infty}dx \int_1^{\infty}dy \frac{x^{-1}y^{d/4-1}}{\left(1+y+xy\right)^{d/2}} \;,
\end{align*}

Finally, we obtain :

\begin{align*}
M_k^{\Lambda}(0)=\Lambda^{d/2} 
\frac{24}{dZ^3(4\pi)^d\Gamma(d/4)^3}\left(1-\left(\frac{\Lambda}{k}\right)^{-d/2}\right) \int_1^{\infty}dx \int_1^{\infty}dy \frac{x^{-1}y^{d/4-1}}{\left(1+y+xy\right)^{d/2}}\;.
\end{align*}

For $k \rightarrow 0$ we get:
\begin{align*}
\lim_{k\rightarrow 0}M_k^{\Lambda}(0)=\Lambda^{d/2} 
\frac{24}{dZ^3(4\pi)^d\Gamma(d/4)^3}\int_1^{\infty}dx \int_1^{\infty}dy \frac{x^{-1}y^{d/4-1}}{\left(1+y+xy\right)^{d/2}}\;.
\end{align*}

%----- Bibliography ----------------------

\providecommand{\href}[2]{#2}\begingroup\raggedright\endgroup

%---------------------------------------------

%------------------------------------------------------------------------------
\end{document}